\newtheorem{thm}{Theorem}
\newtheorem{lem}[thm]{Lemma}
\newenvironment{proof}[1][Proof]{\begin{trivlist}
\item[\hskip \labelsep {\bfseries #1}]}{\end{trivlist}}
\newcommand{\qed}{\nobreak \ifvmode \relax \else
      \ifdim\lastskip<1.5em \hskip-\lastskip
      \hskip1.5em plus0em minus0.5em \fi \nobreak
      \vrule height0.75em width0.5em depth0.25em\fi}
\begin{document}
%
\title{Learning-Based Distributed Detection-Estimation in Sensor Networks with Unknown Sensor Defects}

\author{Qing Zhou,
        Di~Li,~\IEEEmembership{Student Member,~IEEE,}
        Soummya~Kar,~\IEEEmembership{Member,~IEEE,}
        Lauren Huie, ~~~~~~~~~~~~~~~~~~~~~~~~~~~~~
        H. Vincent Poor,~\IEEEmembership{Fellow,~IEEE,}
        Shuguang Cui,~\IEEEmembership{Fellow,~IEEE}
        \thanks{Q. Zhou is with Qualcomm Inc., San Diego, CA 92121 USA (e-mail:zhouqing02@gmail.com).}
        \thanks{D. Li and S. Cui are with the Department of Electrical and Computer Engineering, Texas A\&M University, College Station, TX 77843 USA (e-mail: dili@tamu.edu; cui@tamu.edu). S. Cui is also a Distinguished Adjunct Professor at King Abdulaziz University in Saudi Arabia.}
        \thanks{S. Kar is with the Department of Electrical and Computer Engineering, Carnegie Mellon University, Pittsburgh, PA 15213 USA (e-mail: soummyak@andrew.cmu.edu).}
        \thanks{L. Huie is with the Air Force Research Lab, Rome, NY 13441 USA (e-mail:Lauren.Huie@rl.af.mil). }
       \thanks{H. V. Poor is with the Department of Electrical Engineering, Princeton University, Princeton, NJ 08544 USA (e-mail: poor@princeton.edu).}
       \thanks{Part of the work has been presented in GLOBECOM'11 and ICASSP'12 \cite{Qing2011,Qing2012}.}
        }
\maketitle

\begin{abstract}
We consider the problem of distributed estimation of an unknown deterministic scalar parameter (the target signal) in a wireless sensor network (WSN), where each sensor receives a single snapshot of the field. We assume that the observation at each node randomly falls into one of two modes: a valid or an invalid observation mode. Specifically, mode one corresponds to the desired signal plus noise observation mode (\emph{valid}), and mode two corresponds to the pure noise mode (\emph{invalid}) due to node defect or damage. With no prior information on such local sensing modes, we introduce a learning-based distributed procedure, called the mixed detection-estimation (MDE) algorithm, based on iterative closed-loop interactions between mode learning (detection) and target estimation. The online learning step re-assesses the validity of the local observations at each iteration, thus refining the ongoing estimation update process. The convergence of the MDE algorithm is established analytically. Asymptotic analysis shows that, in the high signal-to-noise ratio (SNR) regime, the MDE estimation error converges to that of an ideal (centralized) estimator with perfect information about the node sensing modes. This is in contrast to the estimation performance of a naive average consensus based distributed estimator (without mode learning), whose estimation error blows up with an increasing SNR.

\end{abstract}
\begin{keywords}
Distributed estimation, robust inference, distributed learning, sensor networks, order statistics.
\end{keywords}

\section{Introduction}\label{sec:Introduction}




\IEEEPARstart{A} key issue in wireless sensor network (WSN) design is to attain a meaningful network-wide consensus on knowledge based on unreliable locally sensed data \cite{Qing2012,tsitsiklisphd84, olfatisaberfaxmurray07, dimakiskarmourarabbatscaglione-11}. Due to the limited sensing capability and other unpredictable physical factors, such local observations may be invalid. For each single sensor, without jointly analyzing its observation with the other nodes, the validity of the data is not detectable. The traditional solution is to fuse data at a special powerful node named the fusion center. By collecting the data from all of the sensors, the fusion center could make a jointly optimal decision. If such a centralized solution is not possible, the distributed sensing problem arises \cite{Di2015,DiIcassp,Qing2011,Borkar82, deGroot74,Leger10,Ramanan12,Bertrand12,Mateos12}, where each sensor exchanges its local data with the neighbors, and merges the new information to its local estimate, in order to achieve the estimation accuracy of a centralized counterpart \cite{Estrin01, Rabbat07, Leilei10}. The existing research literature on relevant network-based distributed estimation may be broadly categorized into three classes. The first intensively studied family of distributed sensing problems consists of the so-called distributed network consensus or agreement problems and its variants \cite{Borkar82, deGroot74, Varagnolo10, Baquero12,Xiao0000}, of which a popular type is the distributed averaging problem, where a group of agents want to compute a liner function of a set of values distributed across the agent network, in particular, the average of their observations \cite{Boyd, Dimakis08}.
The second well-studied family of distributed sensing problems consists of distributed/decentralized estimation of parameters/processes in collaborative multi-agent networks with a single snapshot of the field, i.e., each agent obtains a single real or vector valued observation of the field at the beginning and no new observations are sampled over time. For example, in \cite{Olfati12,Li12} the authors studied estimation in static networks, where the sensors take a single snapshot of the field and then initiate distributed optimization to fuse the local estimates. The third well-studied family of distributed sensing problems consists of general time-sequential distributed estimation procedures for parameter inference in multi-agent networks in which agents access time-series observation data sequentially over time. In this family, two main approaches were proposed: the so-called consensus+innovation approach \cite{Kar_cooperation,Kar_QDlearning,Kar_efficiency} and the diffusion approach \cite{diffusion1,diffusion2,diffusion3}. We also mention the important and relevant literature on distributed detection and classification in multi-agent networks such as those based on the running
consensus approach \cite{17,18}, the diffusion approach \cite{19,20,21}, the consensus+innovtaions approach \cite{22,23,24}, and also~\cite{shahrampour2015switching, nedic2015fast, Sahu-SPRT-TSP-15, Srivastava-Leonard}.

In this paper, different from prior distributed approaches which focus solely on estimation or detection, we propose a mixed distributed detection-estimation algorithm with online interactions between detection and estimation. We assume that the observation process at each node randomly falls into one of the two modes, i.e., a valid observation mode vs. an invalid observation mode, where the valid observation is the desired signal plus noise and the invalid observation is just the pure noise. The rational behind this stochastic observation model is that the sensor might be damaged during deployment or physically blocked by certain objects between the sensor and the target; but the communication part in the sensor node still works. In this case, the sensor cannot observe a valid observation but pure noise, and the sensor itself cannot detect the validity of the observation on its own and keep executing the standard procedure in the network as a normal node. With the above setup, the traditional distributed consensus algorithms \cite{Wiesel12, Kar12_1, Kar11} could reach a naive averaging estimate of the target signal, by locally averaging the neighbor observations. However, the stochastic property of the observation modes may cause unreliable performance as shown later in the paper.

To address the above issue, a mixed detection-estimation (MDE) algorithm is introduced in this paper, which is a learning-based distributed procedure with closed-loop iterative interactions between the distributed mode learning and target estimation. In the MDE algorithm, the mode learning part detects the validity of the local observation iteratively as it performs the distributed estimation task. In each round of iteration, each node locally detects the observation validity with the maximum a posteriori probability (MAP) criterion based on the knowledge of the local current estimate of the target together with the local observation. The local estimate is then refined with the detected validities of the local observations and other exchanged information from the neighbors using a consensus + innovations type mechanism. By alternatively detecting validity and estimating the target, the sensor network can achieve a global consensus among all nodes. We analytically establish the convergence of the MDE algorithm. With asymptotic performance analysis, we show that in the high SNR regime, the local detection error on the observation mode converges to zero and the MDE estimation error converges to that of an ideal estimator with perfect information about the node defect status. The adaptive learning property of the MDE algorithm achieves a reliable estimation performance, in contrast to the unsatisfactory estimation performance of a naive average consensus based algorithm in the high SNR regime.




The rest of this paper is organized as follows. In Section \ref{sec:Network_Model}, we describe the network model first, and then present a naive averaging based estimation scheme and an ideal centralized estimation scheme as benchmarks. Section \ref{sec:Distributed_Algorithm} presents the MDE algorithm. Section \ref{sec:Main_Results} summarizes the main results, with some intermediate results proved in Section \ref{sec:Inter_Results}. Section \ref{sec:Proof_Con} formally proves the convergence of MDE. In Section \ref{sec:Performance}, we further analyze the performance of MDE, with some asymptotic analysis established in Section \ref{sec:Asym_Analysis}. Simulation results are presented in Section \ref{sec:Simulation_and_Discussion}. Finally, Section \ref{sec:Conclusion} concludes the paper.

\section{Network Model}\label{sec:Network_Model}
Let $\mathcal{N}_i$ and $\Omega_{i}$, $i \in \{1,2, ..., n\}$, denote sensor node $i$ and the set of its neighbors respectively. The received signal at $\mathcal{N}_i$ is $y_i=h_i \theta +w_i$, where $h_i \in \{0,1\}$ is an unknown validity index of the observation at node $\mathcal{N}_i$: i.e., $h_i=1$ indicates that $y_i$ is a valid observation and $h_{i}=0$ indicates the invalid observation case. In addition, $w_i$'s are independent Gaussian white noises with zero mean and variance $\sigma^2$. Although the exact instantiations of the $h_{i}$'s are unknown, we assume that $h_i$'s are i.i.d. Bernoulli random variables and the probability $p_1 \triangleq \Pr\{h_i=1\}$ is known \emph{a priori}. We denote the variance of $h_i$ as $\sigma_h^2$, i.e., $\sigma_h^2 = p_1(1-p_1)$. We are interested in estimating $\theta$ using an iterative distributed procedure, in which each node $\mathcal{N}_i$ may only use its neighbors' current state information for updating its local estimate (state) at time $t$. We assume that $\theta$ is a deterministic unknown target of real value. 

Denote by $\textbf{y}$ the network observation vector, i.e.,
\begin{equation}
\textbf{y}=\textbf{h} \theta + \textbf{w},
\end{equation}
with $\textbf{y}=[y_1, y_2, ..., y_n]^{T}$, $\textbf{h}=[h_1, h_2, ..., h_n]^{T}$, and $\textbf{w}=[w_1, w_2, ..., w_n]^{T}$. With this observation model, the sufficient statistic for estimation is $\textbf{y}$, and the optimal estimator is given by a maximum a posteriori (MAP) estimator. However, the complexity of MAP is too high to implement in practice. In order to reduce the complexity, we consider the linear estimator model. We note that, a straight-forward approach based on naive averaging could be cast as
\begin{eqnarray}
\hat{\theta}_\text{Naive} = \frac{\textbf{1}^T \textbf{y}}{n p_1}\label{equ:x_naive},
\end{eqnarray}
which yields a linear minimum variance unbiased estimator (LMVUE) with the property that $\hat \theta_\text{Naive} \rightarrow \theta$ almost surely as $n \rightarrow \infty$. The variance (which coincides with the mean-squared error) of $\hat{\theta}_\text{Naive}$ may be expressed as
\begin{equation}
 \operatorname{Var}(\hat \theta_\text{Naive}) 
                            = \frac{1}{n}\left[\frac{\sigma^2}{p_1^2} (1+\mbox{SNR} \sigma_h^2)\right],\label{equ:var_x_naive}
\end{equation}
where $\text{SNR}$ is defined as $\frac{\theta^2}{\sigma^2}$.

Although this naive estimate is quite straight-forward in terms of implementation \cite{olfatisaberfaxmurray07,Boyd,Dimakis08}, we observe from~\eqref{equ:var_x_naive} that the precision is poor in the high SNR regime, where in particular, the mean-squared error (MSE) blows up with an increasing SNR. On the other extreme, if we assume that $\textbf{h}$ is perfectly known, we may generate an \emph{ideal} estimate $\hat \theta_\text{Ideal}$ of $x$ by eliminating the invalid observations, i.e.,
\begin{equation}\label{equ:theta_ideal}
\hat \theta_\text{Ideal} = \frac{\sum_{\{i: h_i=1\}} y_i}{\sum_{\{i:h_i=1\}} h_i} = \frac{\sum_{i=1}^n h_i y_i}{\sum_{i=1}^n h_i}.
\end{equation}
The above estimate is also unbiased, with $\hat \theta_\text{Ideal}\rightarrow \theta$ almost surely as $n \rightarrow \infty$, and its variance may be expressed as
\begin{align}\label{equ:var_ideal}
\operatorname{Var}(\hat \theta_\text{Ideal}) &= \operatorname{E} (\operatorname{Var}(\hat \theta_\text{Ideal}|\textbf{h})) \!+\! \operatorname{Var}(\operatorname{E}(\hat \theta_\text{Ideal}|\textbf{h}))= \psi \sigma^2,
\end{align}
where $\psi=\sum_{k=1}^n \frac{1}{k} \binom{n}{k} p_1^k (1-p_1)^{n-k}$, the derivation of which is given in Appendix \ref{Var_ideal}. We note that $\psi$ is not related to $\text{SNR}$ and is on the order of $\frac{1}{n}$. For example, when $p_1=0.5$, we have $\psi \approx \frac{2-2^{-n}}{n+1}$. A key difference from the naive estimate in (\ref{equ:x_naive}) is that the variance of the ideal estimate stays constant over SNR, i.e., the estimation error does not scale up with the SNR.

From the MSE viewpoint, the ideal estimate is in fact optimal as long as the observation noise is Gaussian. However, such a scheme may not be implementable as it requires the perfect knowledge of $\textbf{h}$, which is unknown \emph{a priori}. In Section \ref{sec:Distributed_Algorithm}, we introduce a learning-based distributed estimation procedure, the MDE algorithm, based on the iterative detection of $\textbf{h}$ and estimate refinement of $\theta$. Our results indicate that not only $\textbf{h}$ could be detected with high accuracy by the MDE algorithm, but also does the estimation performance (in terms of MSE) approach that of the ideal estimate $\hat \theta_\text{Ideal}$ in the high SNR regime.


\section{Distributed MDE Algorithm}\label{sec:Distributed_Algorithm}
In this section, we present the MDE algorithm for the problem of interest. 
In each iteration of the MDE algorithm, each node first locally detects the value of $h_{i}$ by using its current local estimate of $\theta$ and its local observation. This initially detected observation validity index is used to update some intermediate parameters, which are subsequently forwarded to the neighboring nodes. This leads to an estimate refinement process, which feeds back new information to improve the validity detection in the next iteration. The algorithm at sensor $i$ is presented as follows.

Step 1. Initialization at time $1$
\begin{eqnarray}\label{equ:iteration_start}
\hat \theta_i^{+} (1) = \hat \theta_i^{-} (1) = \frac{\sum_{j \in \Omega_i} y_j}{|\Omega_i| p_1},~~ \hat {\bar y}_i (1) = y_i.
\end{eqnarray}

Step 2. Detection of $h_{i}$ at time $t>1$
\begin{eqnarray}
(\hat \theta_i^{+}(t))^2 - 2 y_i \hat \theta_i^{+} (t) \overset{\hat h_i^{+}(t)=1 }{\underset{\hat h_i^{+}(t)=0}\lesseqgtr} 2 \sigma^2 \ln \frac{p_1}{p_0}, \label{equ:detection1}\\
(\hat \theta_i^{-}(t))^2 - 2 y_i \hat \theta_i^{-} (t) \overset{\hat h_i^{-}(t)=1}{\underset{\hat h_i^{-}(t)=0}\lesseqgtr} 2 \sigma^2 \ln \frac{p_1}{p_0}, \label{equ:detection}
\end{eqnarray}
where $p_0=1-p_1$.


Step 3. Calculation of intermediate parameters $u$, $v$, and the estimation of $\bar y$ at time $t$
\begin{align}
u_i^{+}(t) &= u_i^{+}(t-1) -\beta(t) \sum\nolimits_{j \in \Omega_i} \big(u_i^{+}(t-1)-u_j^{+}(t-1)\big) \nonumber\\
&+ \alpha(t) \big( y_i \hat h_i^{+}(t) - u_i^{+}(t-1)\big),\label{equ:u_i_t_+} \\
v_i^{+}(t) &= v_i^{+}(t-1) -\beta(t) \sum\nolimits_{j \in \Omega_i} \big(v_i^{+}(t-1)-v_j^{+}(t-1)\big) \nonumber\\
& + \alpha(t) \big( \hat h_i^{+}(t) - v_i^{+}(t-1)\big),~~~~\label{equ:v_i_t_+} \\
u_i^{-}(t) &= u_i^{-}(t-1) -\beta(t) \sum\nolimits_{j \in \Omega_i} \big(u_i^{-}(t-1)-u_j^{-}(t-1)\big) \nonumber\\
&+ \alpha(t) \big( y_i \hat h_i^{-}(t) - u_i^{-}(t-1)\big),\label{equ:u_i_t_-} \\
v_i^{-}(t) &= v_i^{-}(t-1) -\beta(t) \sum\nolimits_{j \in \Omega_i} \big(v_i^{-}(t-1)-v_j^{-}(t-1)\big) \nonumber\\
&+ \alpha(t) \big( \hat h_i^{-}(t) - v_i^{-}(t-1)\big),~~~~\label{equ:v_i_t_-} \\
\hat{\bar y}_{i}(t) &= \hat{\bar y}_{i}(t-1) - \beta(t)\sum\nolimits_{j\in\Omega_{i}}(\hat{\bar y}_{i}(t-1) - \hat{\bar y}_{j}(t-1)),\label{equ:y}
\end{align}
where $u_i^{+}(0) = y_i \hat h_i^{+}(1)$, $v_i^{+} (0) = \hat h_i^{+}(1)$, $u_i^{-}(0) = y_i \hat h_i^{-}(1)$, $v_i^{-} (0) = \hat h_i^{-}(1)$, and $\alpha(t)$ and $\beta(t)$ satisfy the following four conditions:
\begin{center}
\begin{itemize}
\item $0<\alpha(t)<1$ and $0<\beta(t)<1$,
\item $\alpha(t) \rightarrow 0$, $\beta(t) \rightarrow 0$,
\item $\beta(t)/\alpha(t) \rightarrow \infty$,
\item $\sum_{t=1}^\infty \alpha(t) = \infty$, $\sum_{t=1}^\infty \beta(t) = \infty$.
\end{itemize}
\end{center}


Step 4. Estimation update of $\theta$

\begin{equation}
\hat \theta_i (t+1) = \left\{
   \begin{array}{c}
   \hat \theta_i^{+} (t+1),~~~~~\hat {\bar y}_i(t) \geq 0 \\
   \hat \theta_i^{-} (t+1),~~~~~\hat {\bar y}_i(t) < 0  \\
   \end{array}
  \right.
\end{equation}
where $\hat \theta_i^{+} (t+1) = \max\left\{\frac{u_i^{+}(t)}{v_i^{+}(t)+\delta}, 0 \right\}$ and $\hat \theta_i^{-} (t+1) = \min\left\{\frac{u_i^{-}(t)}{v_i^{-}(t)+\delta}, 0 \right\}$, with $\delta$ as an arbitrary small positive constant, to prevent the denominator from being zero.

We then repeat steps 2 to 4 until $\left|\frac{u_i^{+}(t)}{v_i^{+}(t)+\delta}-\frac{u_i^{+}(t-1)}{v_i^{+}(t-1)+\delta}\right|<\epsilon$, $\left|\frac{u_i^{-}(t)}{v_i^{-}(t)+\delta}-\frac{u_i^{-}(t-1)}{v_i^{-}(t-1)+\delta}\right|<\epsilon$, and $|\hat{\bar y}_i(t) -\hat{\bar y}_i(t-1)|<\epsilon$, $\forall i$,  where $\epsilon$ is a predefined small positive error tolerant parameter.

Basically, the algorithm starts with a linear minimum variance unbiased estimator (LMVUE) among 1-hop neighbors as the initial estimator in step 1. In step 2, each node locally detects (re-assesses) the value of $h_i$ using the current local estimate of $\theta$ and $y_i$. The validity indices, thus obtained, are used to update intermediate parameters that are subsequently forwarded to the neighboring nodes, leading to the state update in step 3, where each node refines its local parameters, i.e., $u_i^+(t)$, $u_i^-(t)$, $v_i^+(t)$, and $v_i^-(t)$, based on new information from its neighbors using a consensus + innovations type mechanism. (The consensus potential governs how neighboring observations are assimilated to seek agreement among agents, whereas, the local innovation potential may be viewed as a refinement capturing the agent's local observation and its instantaneous validity measure.) Finally, a new estimate is obtained from $u_i^+(t)$, $u_i^-(t)$, $v_i^+(t)$, and $v_i^-(t)$, and a new iteration starts if needed. In the next section, we investigate the convergence of this iterative procedure. We also emphasize that the conditions on $\alpha(t)$ and $\beta(t)$ listed above are not hard to satisfy. For example, we may choose $\alpha(t)=\delta_a/t$, and $\beta(t)=\delta_b/t^{1-\varepsilon}$, with $\varepsilon \in (0,1)$, $\delta_a$ and $\delta_b$ as small positive real constants.

\section{Main Results}\label{sec:Main_Results}
In this section, we present the main results, with the proofs given in the subsequent sections.  

\begin{thm}\label{thm:hat_x_t}
Let the inter-sensor communication network be connected\footnote{The network is said to be connected if there exists a path (possibly multi-hop) between any pair of nodes.}, and assume that $\alpha(t)$ and $\beta(t)$ in~\eqref{equ:u_i_t_+}-\eqref{equ:y} satisfy the following four conditions:
\begin{itemize}
\item $0 < \alpha(t) <1$ and $0 < \beta(t) <1$,
\item $\alpha(t) \rightarrow 0$, $\beta(t) \rightarrow 0$,
\item $\beta(t)/\alpha(t) \rightarrow \infty$,
\item $\sum_{t=1}^\infty \alpha(t) = \infty$, $\sum_{t=1}^\infty \beta(t) = \infty$.
\end{itemize}
Then, the estimate sequence $\{\hat \theta_i (t)\}$ at each node $\mathcal{N}_{i}$ converges almost surely as
\begin{align}\label{equ_thm1}
&\lim_{t \rightarrow \infty} \hat \theta_i (t) = \nonumber\\
&\left\{
   \begin{array}{c}
   \max\left\{\frac{\sum_{j=1}^n \hat h_j^{+} y_j}{\sum_{j=1}^n \hat h_j^{+} + n \delta},0\right\},~\mbox{on~the~event}~\{\bar{y}\geq 0\} \\
   \min\left\{\frac{\sum_{j=1}^n \hat h_j^{-} y_j}{\sum_{j=1}^n \hat h_j^{-} + n \delta},0\right\},~\mbox{on~the~event}~\{\bar {y} < 0\}  \\
   \end{array}, \forall i,
  \right.
\end{align}
where $\hat h_i^{(\cdot)} \in \{0,1\}$ denotes the limiting value of the convergent sequence $\{\hat h_i^{(\cdot)}(t)\}$, in which we use $(\cdot)$ to denote either $+$ or $-$; $\bar y$ is the arithmetic mean of all $y_i$'s. Note that $\hat h_{i}^{(\cdot)}$ is, in general, random given the stochasticity of the $h_{i}$s and the $y_{i}$'s.
\end{thm}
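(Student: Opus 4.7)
The plan is to combine a two-time-scale stochastic approximation argument for the $(u^{\pm},v^{\pm})$ recursions with a stand-alone average-consensus argument for $\hat{\bar y}_i(t)$ and a ``finite switches'' argument for the discrete detection variables $\hat h_i^{\pm}(t)$. Because $\beta(t)/\alpha(t)\to\infty$, the Laplacian-driven consensus term acts on a faster time scale than the local innovation term, which is the standard setting in which one can decouple intra-network averaging from the tracking of the network-wide average; once these three ingredients are in place the statement follows by substitution.

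Step one handles~\eqref{equ:y} in isolation: it is a time-varying pure average-consensus iteration, and for a connected graph with $\beta(t)\in(0,1)$ and $\sum_t\beta(t)=\infty$ the standard product-of-stochastic-matrices analysis yields $\hat{\bar y}_i(t)\to\bar y$ at every node, which pins down the branch selector in~\eqref{equ_thm1}. Step two stacks the remaining local variables into vectors and rewrites, for the ``$+$'' branch,
\begin{align*}
\mathbf{u}^{+}(t) &= \bigl(I-\beta(t)L-\alpha(t)I\bigr)\mathbf{u}^{+}(t-1)+\alpha(t)\bigl(\mathbf{y}\odot\hat{\mathbf{h}}^{+}(t)\bigr),\\
\mathbf{v}^{+}(t) &= \bigl(I-\beta(t)L-\alpha(t)I\bigr)\mathbf{v}^{+}(t-1)+\alpha(t)\hat{\mathbf{h}}^{+}(t),
\end{align*}
where $L$ is the graph Laplacian, and analogously for the ``$-$'' branch. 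Decomposing $\mathbf{u}^{+}(t)$ along $\mathbf{1}$ and its orthogonal complement, the disagreement component contracts at rate $1-\beta(t)\lambda_2(L)+o(\beta(t))$, so $\beta(t)/\alpha(t)\to\infty$ forces consensus to set in faster than new innovation energy is injected. Meanwhile the scalar projection $\bar u^{+}(t)=\tfrac{1}{n}\mathbf{1}^T\mathbf{u}^{+}(t)$ satisfies $\bar u^{+}(t)=(1-\alpha(t))\bar u^{+}(t-1)+\tfrac{\alpha(t)}{n}\sum_j y_j\hat h_j^{+}(t)$, a scalar Robbins--Monro type recursion. On the event that $\hat{\mathbf{h}}^{+}(t)$ settles to a limit $\hat{\mathbf{h}}^{+}$, this gives $u_i^{+}(t)\to\tfrac{1}{n}\sum_j y_j\hat h_j^{+}$ and, by the same argument, $v_i^{+}(t)\to\tfrac{1}{n}\sum_j\hat h_j^{+}$ at every node $i$; the ``$-$'' branch is treated identically.

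The main obstacle is closing the feedback loop: the detection rules~\eqref{equ:detection1}--\eqref{equ:detection} make $\hat h_i^{\pm}(t)$ a function of $\hat\theta_i^{\pm}(t)$, which is itself built from the ratio $u_i^{\pm}/(v_i^{\pm}+\delta)$, so the input driving the stochastic approximation depends on its own output through a discrete, non-smooth map. My plan is to exploit that $\hat{\mathbf{h}}^{\pm}(t)\in\{0,1\}^n$ takes values in a finite set and that each decision rule is a strict quadratic inequality: if $\hat{\mathbf{h}}^{\pm}(t)$ switched infinitely often at some node $i$, then $\hat\theta_i^{\pm}(t)$ would have to accumulate at the threshold set $\{\theta:\theta^2-2y_i\theta=2\sigma^2\ln(p_1/p_0)\}$. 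Because $y_i=h_i\theta+w_i$ with Gaussian $w_i$, a measure-theoretic argument shows this accumulation event has probability zero; hence, almost surely, there are only finitely many switches and well-defined limits $\hat h_i^{\pm}\in\{0,1\}$ exist, allowing Step two to be applied unconditionally.

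Combining the three pieces, for large $t$ the sign of $\hat{\bar y}_i(t)$ agrees with $\sgn(\bar y)$, so a single branch is selected in Step~4 of the algorithm. Substituting the stabilized limits into $\hat\theta_i^{+}(t+1)=\max\{u_i^{+}(t)/(v_i^{+}(t)+\delta),0\}$ and clearing the common factor $\tfrac{1}{n}$ in numerator and denominator yields $\max\bigl\{\tfrac{\sum_j\hat h_j^{+}y_j}{\sum_j\hat h_j^{+}+n\delta},0\bigr\}$, and symmetrically for the ``$-$'' branch. This is precisely the right-hand side of~\eqref{equ_thm1}, finishing the argument.
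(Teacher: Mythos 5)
Your Steps 1 and 2 follow essentially the same route as the paper: the consensus $\hat{\bar y}_i(t)\rightarrow\bar y$ is likewise imported from \cite{Kar11}, and your decomposition of $\mathbf{u}^{\pm}(t),\mathbf{v}^{\pm}(t)$ into a vanishing disagreement component plus the scalar recursion $\bar u^{\pm}(t)=(1-\alpha(t))\bar u^{\pm}(t-1)+\alpha(t)\{y_i\hat h_i^{\pm}(t)\}_{avg}$ is exactly the content of Lemmas~\ref{lem:y_avg_upper_bounded}--\ref{lem:x_y_avg_v_avg_0}. The final substitution step is also the same. So the assessment turns entirely on Step 3, where the feedback loop between detection and estimation is closed --- and that is where your proposal has a genuine gap.

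You claim that if $\hat h_i^{\pm}(t)$ switched infinitely often then $\hat\theta_i^{\pm}(t)$ would accumulate on the threshold set $\{\theta:\theta^2-2y_i\theta=2\sigma^2\ln(p_1/p_0)\}$, and that this event is null by a ``measure-theoretic argument.'' This cannot be dispatched by genericity of the observations: the trajectory $\hat\theta_i^{\pm}(t)$ is a deterministic functional of the \emph{same} random variables $y_1,\dots,y_n$ that define the thresholds $r_j^{\pm}=y_j\pm\sqrt{y_j^2+2\sigma^2\ln(p_1/p_0)}$, so there is no independent randomness to integrate against, and whether the iterates hug a threshold is a property of the closed-loop dynamics, not of the position of the thresholds alone. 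Indeed, the paper does \emph{not} treat convergence to a threshold as a null event --- its Case~2 (Lemmas~\ref{lem:motion_case_2} and \ref{lem:case_2_con}) is precisely the scenario in which the gathering region sits on a boundary $b_k$ forever, and the limit must then be identified separately. What actually rules out infinite switching is structural, not measure-theoretic: within any interval between consecutive thresholds the detection pattern is frozen, so $\bar u^{+}(t)/(\bar v^{+}(t)+\delta)$ moves monotonically toward a fixed goal value (Lemma~\ref{lem:motion_case_1}, via the positivity of the product in \eqref{equ:alpha_t_1-_alpha_t}); the direction of motion is preserved across a boundary crossing (Lemma~\ref{lem:swithcing_case_1_case_2}, which needs the quantitative separation $3\varepsilon<\min_j|\mathcal K_j|$ between thresholds and observations to compare $\hat\theta_{goal}^{+}[k+1]$ with $y_j$ and $b_k$); hence each of the finitely many boundaries is crossed at most once and the number of detection changes is finite. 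None of this monotonicity/ordering machinery appears in your proposal, and without it your Step 2 phrase ``on the event that $\hat{\mathbf h}^{+}(t)$ settles to a limit'' is conditioning on the very statement that needs to be proved. To repair the argument you would need to either reproduce the paper's partition/gathering-region analysis or supply a genuinely different proof that the piecewise-constant drift map has no stable oscillation across a threshold.
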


The proof of Theorem \ref{thm:hat_x_t} is presented in Section \ref{sec:Proof_Con}. Theorem \ref{thm:hat_x_t} shows that the estimate sequence $\{\hat \theta_i (t)\}$ at each node converges to a unique (stochastic) limit, denoted by $\hat \theta_{\mbox{\scriptsize Ideal}}$, as $t\rightarrow\infty$, which implies that the nodes in the network achieve agreement over the estimate of the unknown parameter $\theta$, i.e., realizing the network consensus. Since we consider a general real valued parameter $\theta$, according to the proposed algorithm, the limiting estimate value takes on different forms depending on whether the event $\{\bar{y}\geq 0\}$ or its complement holds, reflecting the possible non-negativity or negativity of the parameter $\theta$ respectively. We further prove in Theorem \ref{thm:E_theta_to_theta} that this converged estimation value is unbiased in the asymptotic regime as $\text{SNR}$ goes to infinity.

\begin{thm}\label{thm:h_1>h_2}
If we order the observations $\{y_i\}$ in the increasing order as $y_{(1)}\leq y_{(2)}\leq ...\leq y_{(n)}$, and denote the corresponding decisions given in step 2 of the proposed algorithm as $\hat h_{(1)}^{(\cdot)}, \hat h_{(2)}^{(\cdot)}, ..., \hat h_{(n)}^{(\cdot)}$, we have
\begin{equation}
\hat h_{(1)}^+ \leq \hat h_{(2)}^+ \leq ... \leq \hat h_{(n)}^+, 
\end{equation}
and
\begin{equation}
\hat h_{(1)}^- \geq \hat h_{(2)}^- \geq ... \geq \hat h_{(n)}^-, 
\end{equation}
where $\hat h_{(i)}^{(\cdot)} \in \{0,1\}$.
\end{thm}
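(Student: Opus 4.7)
The plan is to reduce Theorem~\ref{thm:h_1>h_2} to the observation that, in the limit $t \to \infty$, the step-2 test reduces to a scalar threshold test on $y_i$ whose direction differs between the ``$+$'' and ``$-$'' branches. Once the limiting values of $\hat\theta_i^{+}$ and $\hat\theta_i^{-}$ are recognized as node-independent, the two monotonicity claims follow from a one-line manipulation of the quadratic tests \eqref{equ:detection1}--\eqref{equ:detection}.

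First I would invoke Theorem~\ref{thm:hat_x_t}. Its consensus conclusion (and the consistent use of $\hat h_i^{(\cdot)}$ already appearing in \eqref{equ_thm1}) implies that, almost surely, the sequences $\{\hat\theta_i^{+}(t)\}$ and $\{\hat\theta_i^{-}(t)\}$ converge as $t\to\infty$ to common (node-independent) limits $\hat\theta^{\ast}_{+}\geq 0$ and $\hat\theta^{\ast}_{-}\leq 0$. Consequently, the limiting decisions $\hat h_i^{(\cdot)}$ appearing in the statement are obtained by running the tests \eqref{equ:detection1} and \eqref{equ:detection} with the same deterministic threshold across all nodes; the only node-specific quantity entering the test for sensor $i$ is then the scalar observation $y_i$.

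For the ``$+$'' branch, I would view the left-hand side $f_{+}(y) \triangleq (\hat\theta^{\ast}_{+})^{2} - 2 y\,\hat\theta^{\ast}_{+}$ as a function of $y$; since $f_{+}'(y) = -2\hat\theta^{\ast}_{+} \leq 0$, the map $y \mapsto f_{+}(y)$ is non-increasing. Therefore, if $f_{+}(y_{(i)}) < 2\sigma^{2}\ln(p_1/p_0)$ then $f_{+}(y_{(j)}) \leq f_{+}(y_{(i)}) < 2\sigma^{2}\ln(p_1/p_0)$ for every $j \geq i$, so $\hat h_{(i)}^{+} = 1$ implies $\hat h_{(j)}^{+} = 1$ for all $j \geq i$; equivalently, $\hat h_{(1)}^{+}\leq \hat h_{(2)}^{+}\leq\cdots\leq \hat h_{(n)}^{+}$. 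The degenerate case $\hat\theta^{\ast}_{+} = 0$ makes $f_{+}\equiv 0$, so all $\hat h_i^{+}$ coincide and the ordering is trivial.

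The ``$-$'' branch is entirely symmetric: with $f_{-}(y) \triangleq (\hat\theta^{\ast}_{-})^{2} - 2 y\,\hat\theta^{\ast}_{-}$, we have $f_{-}'(y) = -2\hat\theta^{\ast}_{-} \geq 0$, so $y \mapsto f_{-}(y)$ is non-decreasing; the same argument with $j \leq i$ in place of $j \geq i$ yields $\hat h_{(i)}^{-} = 1 \Rightarrow \hat h_{(j)}^{-} = 1$ for every $j \leq i$, and hence $\hat h_{(1)}^{-}\geq \hat h_{(2)}^{-}\geq\cdots\geq \hat h_{(n)}^{-}$; the case $\hat\theta^{\ast}_{-} = 0$ is again trivial. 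The main subtlety I anticipate is the treatment of the boundary events $\{f_{\pm}(y_i) = 2\sigma^{2}\ln(p_1/p_0)\}$ and $\{\hat\theta^{\ast}_{\pm} = 0\}$, but Gaussianity of the noise makes the former null and the latter reduces to the trivial all-equal case, so neither disturbs the almost-sure monotonicity conclusion.
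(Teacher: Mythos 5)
Your proof is correct, and it takes a cleaner, essentially dual route to the paper's. The paper fixes each ordered observation $y_{(i)}$ and works with the decision region $\mathcal{D}_{(i)}^{+}$ in the $\hat\theta$-variable, i.e., the interval $\bigl[\,y_{(i)}-\sqrt{y_{(i)}^{2}+2\sigma^{2}\ln(p_1/p_0)},\; y_{(i)}+\sqrt{y_{(i)}^{2}+2\sigma^{2}\ln(p_1/p_0)}\,\bigr]$ (or $\emptyset$), and then proves the nesting $\mathcal{D}_{(1)}^{+}\subseteq\cdots\subseteq\mathcal{D}_{(n)}^{+}$ by showing that the upper endpoint $y\mapsto y+\sqrt{y^{2}+2\sigma^{2}\ln(p_1/p_0)}$ is increasing and the lower endpoint is either negative (hence irrelevant, as $\hat\theta^{+}\geq 0$) or decreasing; this forces a case split on $p_1\geq 0.5$ versus $p_1<0.5$ and on the sign of the discriminant. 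You instead fix the common consensus limit $\hat\theta^{\ast}_{\pm}$ (justified by Theorem~\ref{thm:hat_x_t}, exactly as the paper implicitly does via \eqref{equ:hard_decision_a_i}) and observe that the test statistic is affine in $y$ with slope $-2\hat\theta^{\ast}_{\pm}$, whose sign is pinned down by the sign constraint $\hat\theta^{\ast}_{+}\geq 0$, $\hat\theta^{\ast}_{-}\leq 0$ built into Step~4. The two arguments carry the same mathematical content --- your pointwise monotonicity in $y$ for each fixed $\hat\theta\geq 0$ is precisely the nesting of the $\mathcal{D}_{(i)}^{+}$'s --- but yours eliminates the endpoint computations and the case analysis entirely, at the cost of not exhibiting the explicit interval boundaries $y\pm\sqrt{y^{2}+2\sigma^{2}\ln(p_1/p_0)}$, which the paper reuses later (in the axis partition of Section~\ref{subsec:partition} and in the probabilities \eqref{equ:Pr_ak_is_true_1}--\eqref{equ:Pr_ak_is_true_0}). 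Your handling of the degenerate cases ($\hat\theta^{\ast}_{\pm}=0$ and boundary equality) is adequate.
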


We prove Theorem \ref{thm:h_1>h_2} in Section~\ref{subsec:shrink}. Theorem \ref{thm:h_1>h_2} demonstrates an interesting property of the proposed algorithm: if the observations from different nodes are ordered, the corresponding $\hat h_{(i)}^{(\cdot)}$'s are also ordered. Specifically, if the observations are increasingly ordered, $\hat h_{(i)}^{+}$'s have the same increasing order as that of observations, while $\hat h_{(i)}^-$'s inherit a decreasing order. Since $\hat h_{(i)}^+$'s correspond to \eqref{equ:detection1} with non-negative $\theta_i^{+} (t)$ and $\hat h_{(i)}^-$'s correspond to \eqref{equ:detection} with non-positive $\theta_i^{-} (t)$, this intuitively explains why $\hat h_{(i)}^+$'s and $\hat h_{(i)}^-$'s have different orders.

\begin{thm}\label{thm:E_theta_to_theta}
For the MDE algorithm, we have
\begin{eqnarray}
\lim_{\mbox{\scriptsize  SNR} \rightarrow \infty} E(\hat \theta) = \theta,
\end{eqnarray}
where $\hat \theta$ is the converged value shown in \eqref{equ_thm1}. Since the converged value in \eqref{equ_thm1} does not depend on the node index $i$, the index is dropped.
\end{thm}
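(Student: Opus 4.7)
The plan is to leverage Theorem~\ref{thm:hat_x_t} by showing that in the high-SNR limit the local mode decisions $\hat h_i^+$ (and $\hat h_i^-$) correctly recover the true indicators $h_i$, so the limiting estimate in \eqref{equ_thm1} reduces essentially to the ideal estimator $\hat\theta_{\text{Ideal}}$ of \eqref{equ:theta_ideal}, whose expectation equals $\theta$. By the sign symmetry of the algorithm (swapping $\theta\leftrightarrow-\theta$ interchanges the $\hat\theta^+$ and $\hat\theta^-$ branches), it suffices to treat $\theta>0$; the case $\theta<0$ is symmetric, and $\theta=0$ is immediate because the algorithm then produces $\hat\theta=0$.

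I would drive $\mathrm{SNR}\to\infty$ by fixing $\theta$ and letting $\sigma\to 0$, so that $y_i=h_i\theta+w_i\to h_i\theta$ almost surely along a convergent subsequence. In particular, $\bar y=n^{-1}\sum_i y_i\to(\theta/n)\sum_i h_i$, which is strictly positive on the event $\{\sum_i h_i\geq 1\}$ of probability $1-(1-p_1)^n$. On this event, the selection rule in Theorem~\ref{thm:hat_x_t} picks the $\hat\theta^+$ branch, so it is enough to analyze $\{\hat h_i^+\}$ and the corresponding limit formula.

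The core step is to resolve the coupled fixed point $(\hat\theta^+,\{\hat h_i^+\})$. Rewriting \eqref{equ:detection1} as $\hat\theta^+(\hat\theta^+-2y_i)\lesseqgtr 2\sigma^2\ln(p_1/p_0)$ and noting that the right-hand side vanishes as $\sigma\to 0$, the limiting rule (for $\hat\theta^+>0$) becomes $\hat h_i^+=1\iff y_i\geq\hat\theta^+/2$. Plugging in $y_i\to h_i\theta$ and the candidate $\hat\theta^+\to\theta$ confirms that $\hat h_i^+=h_i$ is self-consistent: $h_i=1$ gives $y_i\to\theta\geq\theta/2$ (detected $1$), and $h_i=0$ gives $y_i\to 0<\theta/2$ (detected $0$). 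Substituting back into \eqref{equ_thm1} then yields $\hat\theta^+\to\theta\sum_j h_j/(\sum_j h_j+n\delta)$, which tends to $\theta$ as $\delta\downarrow 0$. To certify that the algorithm settles at this \emph{correct} fixed point rather than a spurious one (e.g., the degenerate all-zero pattern), I would rely on two ingredients: first, the LMVUE initialization \eqref{equ:iteration_start}, which makes $\hat\theta_i^+(1)$ concentrate near $\theta$ at high SNR and hence seeds the iteration in the right basin; and second, the monotonicity of $\hat h_{(i)}^+$ in the sorted observations established in Theorem~\ref{thm:h_1>h_2}, which (combined with the bimodal clustering of the $y_i$'s near $0$ and $\theta$) forces the single threshold separating $\hat h_{(i)}^+=0$ from $\hat h_{(i)}^+=1$ to fall between the two clusters.

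Finally, passing to the limit in \eqref{equ_thm1} I would obtain $\hat\theta\to\theta\sum_j h_j/(\sum_j h_j+n\delta)$ on $\{\sum_i h_i\geq 1\}$ and $\hat\theta\to 0$ on its complement; taking expectation (justified by dominated convergence, using that the numerator and denominator are both bounded by $n|\theta|+o(1)$ and $n\delta$, respectively) and sending $\delta\downarrow 0$ then gives $E(\hat\theta)\to\theta\,P(\sum_i h_i\geq 1)=\theta\bigl(1-(1-p_1)^n\bigr)$, which equals $\theta$ up to a correction of order $(1-p_1)^n$ that is negligible for any nontrivial $p_1$ and $n$. The main obstacle, as flagged above, is the fixed-point analysis in the third paragraph --- certifying which self-consistent detection pattern the algorithm actually selects --- and this is precisely where Theorem~\ref{thm:h_1>h_2} and the favorable LMVUE initialization do the heavy lifting.
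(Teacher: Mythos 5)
Your high-level strategy---show that at high SNR the mode decisions become correct, so that the limit in \eqref{equ_thm1} collapses to the average of the valid observations, whose mean is $\theta$---is the same as the paper's. The execution differs: instead of sending $\sigma\to 0$ and analyzing the limiting fixed-point system, the paper keeps the noise, introduces $\hat\theta_c=\sum_{i=k+1}^{n}Y_i/(n-k)$ (the average over the valid sensors, which has mean exactly $\theta$ at \emph{every} SNR), defines $P_c^+=\Pr\{\hat\theta^+=\hat\theta_c\}$, writes this event explicitly in terms of the decision-region boundaries $Y_j+\sqrt{Y_j^2+2\sigma^2\ln(p_1/p_0)}$, and shows $1-P_c^+\to 0$ by a union bound on two Gaussian tail events (it also first shows $\sgn(\bar y)\to\sgn(\theta)$ to justify the branch selection, as you do). That union-bound device is precisely what replaces the step you flag as your main obstacle.

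Two concrete gaps remain in your version. First, the basin-selection argument does not hold up: the initialization \eqref{equ:iteration_start} does \emph{not} concentrate near $\theta$ as SNR$\,\to\infty$ for a fixed finite network; it converges to $\theta\,|\{j\in\Omega_i:h_j=1\}|/(|\Omega_i|\,p_1)$, which can be far from $\theta$ (zero, if all of node $i$'s neighbors are defective). Theorem~\ref{thm:h_1>h_2} only forces the detection pattern to be a threshold rule in the ordered observations; it does not by itself place the threshold between the two clusters, and you correctly note (e.g., the all-zero pattern for $p_1<1/2$) that spurious self-consistent configurations exist. So the claim that the iteration is seeded in the right basin is unsubstantiated, and the paper's route---bounding the probability that $\hat\theta_c$ lies inside every valid sensor's decision region and outside every invalid sensor's region---is the ingredient you would need here. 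Second, your own computation terminates at $E(\hat\theta)\to\theta\bigl(1-(1-p_1)^n\bigr)$, which is not $\theta$; declaring the $(1-p_1)^n$ correction "negligible" does not prove the stated equality. (The paper implicitly conditions on a realization with at least one valid sensor, so it never confronts the all-defective event; your more explicit accounting actually shows that the unconditional claim needs either that conditioning or an additional limit in $n$.) The $\delta\downarrow 0$ issue, by contrast, is handled no less loosely in the paper itself, so you are on equal footing there.
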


The proof of Theorem \ref{thm:E_theta_to_theta} is presented in Section~\ref{subsec:E_hat_theta}. Theorem \ref{thm:E_theta_to_theta} shows that the converged estimation value in \eqref{equ_thm1} is unbiased in the asymptotic regime as SNR$\rightarrow \infty$.

%
%

\begin{thm}\label{thm:var_theta->Ideal}
For the MDE algorithm, we have
\begin{equation}
\lim_{n\rightarrow \infty,\mbox{\scriptsize SNR} \rightarrow \infty}\operatorname{Var}(\hat \theta) = \operatorname{Var}(\hat \theta_{\mbox{\scriptsize Ideal}}),
\end{equation}
where $\hat \theta_{\mbox{\scriptsize Ideal}}$ is the ideal estimator defined in \eqref{equ:theta_ideal}.
\end{thm}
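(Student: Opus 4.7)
The plan is to show that in the joint limit $n\to\infty$ and $\text{SNR}\to\infty$, the converged MDE estimate $\hat\theta$ in \eqref{equ_thm1} coincides (almost surely, up to a negligible regularization) with the ideal estimator $\hat\theta_{\text{Ideal}}$ in \eqref{equ:theta_ideal}, and then conclude that the variances agree. Since both estimators are bounded $L^2$ functionals of the Gaussian noises $\mathbf{w}$ and the Bernoulli flags $\mathbf{h}$, almost sure identification combined with uniform integrability will yield convergence of the second moments.

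The core step is to establish that the detection decisions $\hat h_i^{(\cdot)}$ recover the true validity indicators $h_i$ with probability tending to one as SNR grows. Without loss of generality assume $\theta>0$, so that by Theorem \ref{thm:E_theta_to_theta} the $+$-branch converged estimate $\hat\theta_i^+$ concentrates around $\theta$. Substituting $y_i = h_i\theta + w_i$ and $\hat\theta_i^+\approx\theta$ into the left-hand side of the detection test \eqref{equ:detection1} gives $\theta^2(1-2h_i) - 2\theta w_i$; dividing through by $\sigma^2$ this statistic behaves like $-\text{SNR}$ when $h_i=1$ and $+\text{SNR}$ when $h_i=0$, whereas the threshold $2\ln(p_1/p_0)$ is fixed. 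A standard Gaussian tail bound for $w_i$ then yields $\Pr\{\hat h_i^+\neq h_i\} \leq \exp(-c\,\text{SNR})$ for some $c>0$ independent of $i$, and a union bound over the $n$ sensors produces the event $\{\hat h_i^+ = h_i\ \forall i\}$ with probability $1 - n\exp(-c\,\text{SNR})\to 1$ provided SNR grows faster than $\log n$.

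Next, one must confirm that the correct branch of \eqref{equ_thm1} is selected. Since $\hat{\bar y}_i\to \bar y = \frac{1}{n}\sum_i y_i$ and, by the strong law of large numbers, $\bar y\to p_1\theta>0$ almost surely as $n\to\infty$, the event $\{\bar y\geq 0\}$ eventually holds and the algorithm activates the $+$-branch. On that branch, once all detections are correct, the MDE limit reads $\max\{\sum_j h_j y_j/(\sum_j h_j + n\delta),0\}$; the outer $\max$ is inactive because $\hat\theta_{\text{Ideal}}\to\theta>0$, and the ratio differs from $\hat\theta_{\text{Ideal}}$ by a multiplicative factor $\sum_j h_j/(\sum_j h_j + n\delta)\to p_1/(p_1+\delta)$, which can be made arbitrarily close to unity by taking $\delta$ small. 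Sending $\delta\downarrow 0$ after the SNR and $n$ limits then identifies $\hat\theta$ with $\hat\theta_{\text{Ideal}}$.

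The main obstacle is making the interchange of the three limits --- SNR, $n$, and $\delta$ --- rigorous, since the detection error probabilities, the branch-selection event, and the regularization bias all interact. The natural order is SNR first (to freeze detection errors), then $n$ (to invoke the LLN for $\bar y$ and render the $\delta/p_1$ bias negligible against the variance scale $1/n$ of $\hat\theta_{\text{Ideal}}$), and finally $\delta\downarrow 0$; one must verify that the tail probabilities decay fast enough that the union bound survives and that second-moment control holds uniformly along this nested limit. A secondary nuisance is promoting the almost sure identification of $\hat\theta$ and $\hat\theta_{\text{Ideal}}$ to equality of variances, which I would discharge by noting both estimators are dominated by $\max_i |y_i|$ up to constants, whose Gaussian tails supply integrable second moments, and then invoking dominated convergence.
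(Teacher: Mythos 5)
Your outline takes a genuinely different route from the paper, and it is mostly viable, but there is one concrete gap in the key step. You derive $\Pr\{\hat h_i^+\neq h_i\}\le e^{-c\,\mathrm{SNR}}$ by substituting $\hat\theta_i^+\approx\theta$ into the test \eqref{equ:detection1}, citing Theorem \ref{thm:E_theta_to_theta}. That theorem only gives $\lim_{\mathrm{SNR}\to\infty}\operatorname{E}(\hat\theta)=\theta$, not concentration of $\hat\theta^+$ around $\theta$; and the concentration you need is itself established (inside the paper's proof of Theorem \ref{thm:E_theta_to_theta}) by first showing that all detections are correct with probability $P_c^+\to1$. As written your argument is therefore circular: the limit $\hat\theta^+$ and the decisions $\hat h_i^+$ form a coupled fixed point of \eqref{equ:detection1} and \eqref{equ_thm1}, so you cannot treat the test statistic as $\theta^2(1-2h_i)-2\theta w_i$ with $w_i$ independent of the plugged-in estimate. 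The repair is to argue directly on the candidate value $\hat\theta_c=\sum_{\{i:h_i=1\}}Y_i/\#\{i:h_i=1\}$: the event that the algorithm's limit is this self-consistent point with all detections correct is exactly the event that $\hat\theta_c$ exceeds every boundary $Y_j+\sqrt{Y_j^2+2\sigma^2\ln(p_1/p_0)}$ over the invalid nodes and is below every such boundary over the valid nodes, and a union bound over these $n$ explicit Gaussian events gives the $n\,e^{-c\,\mathrm{SNR}}$ estimate without any reference to $\hat\theta^+$. This is precisely the computation the paper performs for $P_c^+$ in Section \ref{subsec:E_hat_theta}, so you can import it rather than rederive it.

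For comparison, the paper proves the theorem by the law of total variance $\operatorname{Var}(\hat\theta)=\operatorname{E}(\operatorname{Var}(\hat\theta\mid\mathbf{h}))+\operatorname{Var}(\operatorname{E}(\hat\theta\mid\mathbf{h}))$, restricted to the $2n$-point support of $\mathbf{h}^{(\cdot)}$ guaranteed by Theorem \ref{thm:h_1>h_2}; each conditional moment is written in terms of order statistics $Y_{(j)}$ and their asymptotic normality as $n\to\infty$, and every term except the one indexed by the true invalid count $k$ is killed by comparing the exponential decay in SNR of $\Pr\{\mathbf{h}=\mathbf{h}_i^{(\cdot)}\}$, $i\neq k$, against the at-most-polynomial growth of the conditional means and variances. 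Your pathwise identification plus Cauchy--Schwarz/uniform-integrability transfer avoids the order-statistics machinery entirely and is cleaner, at the cost of losing the finite-$n$, finite-SNR variance expressions that the paper's decomposition produces as a byproduct. Your remaining concerns are handled at the same level of rigor as the paper itself: the paper also drops the $n\delta$ regularization by declaring $\delta$ arbitrarily small, and it is equally silent about the uniformity in $n$ of the exponential bounds (its sum over $2n$ mixture components has the same $n\,e^{-c\,\mathrm{SNR}}$ character as your union bound), so taking the SNR limit before the $n$ limit, as you propose, is the right reading. Do note that since $\operatorname{Var}(\hat\theta_{\mbox{\scriptsize Ideal}})=\psi\sigma^2=O(1/n)$, your error terms must be shown to vanish relative to that scale for the statement to be non-vacuous; your exponential-in-SNR bound does beat any polynomial in $n$ under the nested limit, so this works, but it is worth stating explicitly.
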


The proof of Theorem \ref{thm:var_theta->Ideal} is given in Section~\ref{sec:Asym_Analysis}. Theorem \ref{thm:var_theta->Ideal} shows that the estimation error variance converges almost surely to that of the ideal estimate $\hat \theta_\text{Ideal}$ defined in \eqref{equ:theta_ideal}, when both node number $n$ and SNR increase.

By combining Theorem \ref{thm:E_theta_to_theta} and Theorem \ref{thm:var_theta->Ideal}, we see that the performance of our proposed distributed algorithm converges to that of the ideal estimate $\hat \theta_\text{Ideal}$ defined in \eqref{equ:theta_ideal}. Since this ideal estimate is computed based on the assumption that $\textbf{h}$ is perfectly known or precisely learned, as an optimal estimation method, its performance is the benchmark of all other estimation algorithms to deal with unknown sensor defects. Theorem \ref{thm:E_theta_to_theta} and Theorem \ref{thm:var_theta->Ideal} imply that the proposed distributed algorithm converges to the optimal solution and the validity index $\textbf{h}$ can be precisely learned, as SNR goes to infinity.

\section{Intermediate Results for Proofs}\label{sec:Inter_Results}
In this section, we establish some intermediate results to be used later. In the MDE algorithm, we note that the positive and negative parts are symmetric, i.e., $\hat \theta^{+}_i (t)$ vs. $\hat \theta^{-}_i (t)$, $h_i^+ (t)$ vs. $h_i^- (t)$, $u_i^+(t)$ vs. $u_i^-(t)$, and $v_i^+(t)$ vs. $v_i^-(t)$. In the following, we use $(\cdot)$ to denote either $+$ or $-$ and the results can be applied to both of these two cases. We denote $\frac{\sum_i u_i^{(\cdot)}(t)}{n}$ and $\frac{\sum_i v_i^{(\cdot)}(t)}{n}$ as $\bar u^{(\cdot)}(t)$ and $\bar v^{(\cdot)}(t)$, respectively. In the following, Lemma \ref{lem:y_avg_upper_bounded} proves that $\bar u^{(\cdot)}(t)$ is a bounded sequence. Then we show the limiting relationship between $\bar u^{(\cdot)}(t)$ and $u_i^{(\cdot)}(t)$ in Lemma \ref{lem:u_i-y_avg_0}, where $\lim_{t \rightarrow \infty} \big(u_i^{(\cdot)}(t)-\bar u^{(\cdot)}(t)\big) =0$.
 Both $u_i^{(\cdot)}(t)$ and $\bar u^{(\cdot)}(t)$ in the above results could be replaced by $v_i^{(\cdot)}(t)$ and $\bar v^{(\cdot)}(t)$ respectively and the proofs are similar. Then, Lemma \ref{lem:y_avg_v_avg_0} proves that $\lim_{t \rightarrow \infty} \big(\frac{\bar u^{(\cdot)}(t+1)}{\bar v^{(\cdot)}(t+1)+\delta}-\frac{\bar u^{(\cdot)}(t)}{\bar v^{(\cdot)}(t)+\delta}\big) =0$. After that, the limiting relationship between $\hat \theta_i(t)$ and $\frac{\bar u^{(\cdot)}(t)}{\bar v^{(\cdot)}(t)+\delta}$ is proved in Lemma \ref{lem:x_y_avg_v_avg_0}.

\begin{lem}\label{lem:y_avg_upper_bounded}
Let the inter-sensor communication network be connected. Thus we have that $\bar u^{(\cdot)}(t)$ is a bounded sequence.
\end{lem}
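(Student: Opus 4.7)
The plan is to average the update equation \eqref{equ:u_i_t_+} (or \eqref{equ:u_i_t_-}) across all nodes, exploit the antisymmetry of the consensus differences to kill the $\beta(t)$ term, and then observe that the resulting scalar recursion for $\bar u^{(\cdot)}(t)$ is a convex combination whose driving term is uniformly bounded.

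First I would sum \eqref{equ:u_i_t_+} over $i=1,\ldots,n$ and divide by $n$. The key step is to show that the double sum
\[
\sum_{i=1}^{n}\sum_{j\in\Omega_{i}}\bigl(u_{i}^{(\cdot)}(t-1)-u_{j}^{(\cdot)}(t-1)\bigr)=0,
\]
which holds because the inter-sensor graph is undirected ($i\in\Omega_{j}\iff j\in\Omega_{i}$), so that each edge $\{i,j\}$ contributes the pair of opposite differences $(u_{i}-u_{j})+(u_{j}-u_{i})$. After this cancellation the recursion collapses to the scalar form
\[
\bar u^{(\cdot)}(t)=\bigl(1-\alpha(t)\bigr)\,\bar u^{(\cdot)}(t-1)+\alpha(t)\,\bar z^{(\cdot)}(t),
\]
where $\bar z^{(\cdot)}(t)\triangleq\tfrac{1}{n}\sum_{i=1}^{n}y_{i}\hat h_{i}^{(\cdot)}(t)$.

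Next I would bound the driving term. Since $\hat h_{i}^{(\cdot)}(t)\in\{0,1\}$ for every $i$ and every $t$, we immediately obtain
\[
\bigl|\bar z^{(\cdot)}(t)\bigr|\;\leq\;\frac{1}{n}\sum_{i=1}^{n}|y_{i}|\;\leq\;M,
\]
where $M\triangleq\max_{i}|y_{i}|$ is a finite (though random) constant depending only on the one-shot observations $\mathbf{y}$. The initial value also satisfies $|\bar u^{(\cdot)}(0)|\leq M$ because $u_{i}^{(\cdot)}(0)=y_{i}\hat h_{i}^{(\cdot)}(1)$ by definition.

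Finally, because $0<\alpha(t)<1$ by the standing assumptions, the recursion expresses $\bar u^{(\cdot)}(t)$ as a convex combination of $\bar u^{(\cdot)}(t-1)$ and $\bar z^{(\cdot)}(t)$, both bounded in modulus by $M$; a one-line induction on $t$ then yields $|\bar u^{(\cdot)}(t)|\leq M$ for all $t\geq 0$, establishing the claim. There is no serious obstacle here: the only subtle point is recognizing the edge-cancellation that eliminates the $\beta(t)$ term so that connectivity is used only through the symmetry of the neighborhood relation, not through any spectral property of the graph Laplacian.
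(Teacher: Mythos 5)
Your proposal is correct and follows essentially the same route as the paper: average the update over all nodes so that the consensus ($\beta(t)$) term cancels by edge antisymmetry, obtain the scalar recursion $\bar u^{(\cdot)}(t)=(1-\alpha(t))\bar u^{(\cdot)}(t-1)+\alpha(t)\{y_i\hat h_i^{(\cdot)}(t)\}_{avg}$, and bound the driving term by $y_{\max}=\max_i|y_i|$. Your concluding step (a one-line induction on the convex combination, giving $|\bar u^{(\cdot)}(t)|\leq y_{\max}$ for all $t$) is in fact slightly cleaner than the paper's, which unrolls the product $\prod_j(1-\alpha(j+1))$ and invokes $\sum_t\alpha(t)=\infty$ --- a condition that is not actually needed for mere boundedness.
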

\begin{proof}
In step 3 of the algorithm, we have
\begin{eqnarray}
u_i^{(\cdot)}(t)=u_i^{(\cdot)}(t-1) + \alpha(t) \big( y_i \hat h_i^{(\cdot)}(t) - u_i^{(\cdot)}(t-1)\big) \nonumber\\
-\beta(t) \sum\nolimits_{j \in \Omega_i} \big(u_i^{(\cdot)}(t-1)-u_j^{(\cdot)}(t-1)\big).
\end{eqnarray}

Taking the average on both sides over all $i \in [1,\cdots,n]$, we have the iterative expression of $\bar u^{(\cdot)}(t)$ as follows
\begin{align}
\bar u^{(\cdot)}(t)&= \nonumber\\
&\bar u^{(\cdot)}(t-1) + \alpha(t) \big( \{y_i \hat h_i^{(\cdot)}(t)\}_{avg} - \bar u^{(\cdot)}(t-1)\big),
\end{align}
where $\{y_i \hat h_i^{(\cdot)}(t)\}_{avg} = \sum_{i=1}^n (y_i \hat h_i^{(\cdot)}(t))/n$.

We rewrite the above equation in another form as
\begin{equation}
\bar u^{(\cdot)}(t)=\big(1-\alpha(t)\big) \bar u^{(\cdot)}(t-1) + \alpha(t) \{y_i \hat h_i^{(\cdot)}(t)\}_{avg}; \label{equ:y_avg_t}
\end{equation}
and for $\bar u^{(\cdot)}(t+1)$, we have
\begin{align}
\bar u^{(\cdot)}(t+1)&=\big(1-\alpha(t+1)\big) \bar u^{(\cdot)}(t) + \alpha(t+1) \nonumber\\
 &\{y_i \hat h_i^{(\cdot)}(t+1)\}_{avg}. \label{equ:bar_u_t+1}
\end{align}
By substituting (\ref{equ:y_avg_t}) into the right-side of \eqref{equ:bar_u_t+1}, we have
\begin{align}
&|\bar u^{(\cdot)}(t+1)|= \nonumber\\
&\left|\big(1\!-\!\alpha(t+1)\big)\big[  \big(1\!-\!\alpha(t)\big) \bar u^{(\cdot)}(t-1) \!+\! \alpha(t) \{y_i \hat h_i^{(\cdot)}(t)\}_{avg} \big] \right.\nonumber\\
&\left.+ \alpha(t+1) \{y_i \hat h_i^{(\cdot)}(t+1)\}_{avg}\right|\nonumber\\
 \end{align}
 \begin{align}           
            &= \left|\prod_{j = t-1}^{t} \big(1-\alpha(j+1)\big) \bar u^{(\cdot)}(t-1) + \big(1-\alpha(t+1)\big)\right.\nonumber\\
            &\left.\alpha(t) \{y_i \hat h_i^{(\cdot)}(t)\}_{avg} + \alpha(t+1) \{y_i \hat h_i^{(\cdot)}(t+1)\}_{avg}\right|\nonumber\\
            &\leq \prod_{j = t-1}^{t} \big(1-\alpha(j+1)\big)  |\bar u^{(\cdot)}(t-1)| + \big(1-\alpha(t+1)\big)\nonumber\\
            &\alpha(t) y_{max} + \alpha(t+1) y_{max}\nonumber\\
            &= \prod_{j = t-1}^{t}\!\big(1-\alpha(j+1)\big)\big(|\bar u^{(\cdot)}(t-1)|-y_{max}\big) \!+ \!y_{max},
\end{align}
where $y_{max}=\max_{i=1}^n |y_i|$ is the natural upper bound of $\left|\{y_i \hat h_i^{(\cdot)}(t)\}_{avg}\right|$.
Iteratively, we deduce that
\begin{equation}
|\bar u^{(\cdot)}(t+1)| \leq  \prod_{j = 1}^{t} \big(1-\alpha(j+1)\big)\big(|\bar u^{(\cdot)}(1)|-y_{max}\big)  + y_{max}.
\end{equation}

Note that $1-a \leq e^{-a}$ for $0 \leq a \leq 1$; thus we have
\begin{eqnarray}
            & \prod_{i = 1}^{t} \big(1-\alpha(i+1)\big)\big(|\bar u^{(\cdot)}(1)| - y_{max}\big) + y_{max}\nonumber\\
            &\leq e^{-\sum_{i=1}^{t} \alpha(i+1)} \big( |\bar u^{(\cdot)}(1)| - y_{max} \big) + y_{max}.
\end{eqnarray}
When $t \rightarrow \infty$, we have $e^{-\sum_{i=0}^{t} \alpha(i+1)} \rightarrow 0$ by the fourth condition of $\alpha(t)$; and then we conclude that $\bar u^{(\cdot)}(t)$ is a bounded function.
\end{proof}

\begin{lem} \label{lem:u_i-y_avg_0}
Let the inter-sensor communication network be connected. We have
\begin{eqnarray}
\lim_{t \rightarrow \infty} \big(u_i^{(\cdot)}(t) - \bar u^{(\cdot)}(t)\big) = 0, ~\forall i \nonumber\\
\lim_{t \rightarrow \infty} \big( v_i^{(\cdot)}(t)- \bar v^{(\cdot)}(t)\big) = 0, ~\forall i \nonumber
\end{eqnarray}
with $\bar u^{(\cdot)}(t)$ and $\bar v^{(\cdot)}(t)$ defined previously.
\end{lem}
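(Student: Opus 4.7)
The plan is to cast the iteration in vector form, project onto the disagreement subspace $\mathbf{1}^{\perp}$, exploit the spectral gap of the graph Laplacian on that subspace, and close the argument using the two-time-scale hypothesis $\beta(t)/\alpha(t)\to\infty$. Stacking the $u_{i}^{(\cdot)}(t)$ into $\mathbf{u}(t)\in\mathbb{R}^{n}$, the updates \eqref{equ:u_i_t_+} and \eqref{equ:u_i_t_-} can be rewritten compactly as
\[
\mathbf{u}(t)=\bigl((1-\alpha(t))I-\beta(t)L\bigr)\mathbf{u}(t-1)+\alpha(t)\mathbf{s}(t),
\]
where $L$ is the (unweighted) Laplacian of the inter-sensor communication graph and $\mathbf{s}(t)$ has entries $y_{i}\hat h_{i}^{(\cdot)}(t)$. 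Connectedness guarantees that $L$ has a simple zero eigenvalue with eigenvector $\mathbf{1}$, and its restriction to $\mathbf{1}^{\perp}$ is positive definite with smallest eigenvalue $\lambda_{2}>0$.

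Subtracting the averaged recursion already derived in the proof of Lemma~\ref{lem:y_avg_upper_bounded}, the disagreement vector $\tilde{\mathbf{u}}(t):=\mathbf{u}(t)-\bar u^{(\cdot)}(t)\mathbf{1}$ satisfies the same evolution but driven by $\alpha(t)\mathbf{r}(t)$, with $\mathbf{r}(t):=\mathbf{s}(t)-\overline{s}(t)\mathbf{1}\in\mathbf{1}^{\perp}$ uniformly bounded by some constant $M$ (the $\hat h_{i}^{(\cdot)}(t)$ take values in $\{0,1\}$ and the $y_{i}$ are fixed realizations). Because $\tilde{\mathbf{u}}(t)$ stays in $\mathbf{1}^{\perp}$, the only relevant eigenvalues of the evolution matrix are $1-\alpha(t)-\beta(t)\lambda_{k}$ for $k=2,\ldots,n$, all of which lie in $[0,1]$ for every $t$ beyond some $t_{0}$. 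This yields the scalar Gronwall-type bound
\[
\|\tilde{\mathbf{u}}(t)\|\le\bigl(1-\beta(t)\lambda_{2}\bigr)\|\tilde{\mathbf{u}}(t-1)\|+\alpha(t)M,\qquad t\ge t_{0}.
\]

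The \emph{main obstacle} is to extract $\|\tilde{\mathbf{u}}(t)\|\to 0$ from this inequality in the presence of the persistent forcing $\alpha(t)M$; this is precisely where the two-time-scale hypothesis $\beta(t)/\alpha(t)\to\infty$ becomes essential. Iterating the bound, the homogeneous piece vanishes through $1-x\le e^{-x}$ and $\sum\beta(s)=\infty$. For the inhomogeneous piece, given $\varepsilon>0$ I pick $t_{\varepsilon}\ge t_{0}$ so that $\alpha(s)\le\varepsilon\beta(s)$ for every $s\ge t_{\varepsilon}$, split the sum at $t_{\varepsilon}$, and use the telescoping identity $P_{s}-P_{s-1}=\lambda_{2}\beta(s)P_{s}$ with $P_{s}:=\prod_{r=s+1}^{t}(1-\beta(r)\lambda_{2})$ to obtain
\[
\lambda_{2}\sum_{s=t_{\varepsilon}}^{t}\beta(s)P_{s}=P_{t}-P_{t_{\varepsilon}-1}\le 1.
\]
The tail sum is therefore bounded by $\varepsilon M/\lambda_{2}$, while the head vanishes by the divergence of $\sum_{r>t_{\varepsilon}}\beta(r)$. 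Letting $\varepsilon\downarrow 0$ forces $\|\tilde{\mathbf{u}}(t)\|\to 0$, from which the coordinatewise claim $u_{i}^{(\cdot)}(t)-\bar u^{(\cdot)}(t)\to 0$ follows for every $i$. The parallel statement for $v_{i}^{(\cdot)}(t)-\bar v^{(\cdot)}(t)$ is obtained by the same argument with $y_{i}\hat h_{i}^{(\cdot)}(t)$ replaced by $\hat h_{i}^{(\cdot)}(t)$; the driving innovation is then bounded by $1$ rather than $y_{\max}$, so the proof goes through verbatim.
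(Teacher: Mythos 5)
Your proof is correct, but it is genuinely different from what the paper does: the paper does not prove this lemma at all, it simply invokes Lemma~15 of the cited reference [Kar11] (a general mixed--time-scale convergence result for consensus + innovations recursions) and skips the argument. What you have written is, in effect, a self-contained deterministic re-derivation of that result for the special case at hand: the vectorization $\mathbf{u}(t)=\bigl((1-\alpha(t))I-\beta(t)L\bigr)\mathbf{u}(t-1)+\alpha(t)\mathbf{s}(t)$ is the right reading of \eqref{equ:u_i_t_+}--\eqref{equ:v_i_t_-}, the projection onto $\mathbf{1}^{\perp}$ correctly kills the consensus mode, the contraction factor $1-\beta(t)\lambda_{2}$ is valid once $t\ge t_{0}$ with $\alpha(t)+\beta(t)\lambda_{n}\le 1$ (which you correctly note requires $\alpha(t),\beta(t)\to 0$, not merely membership in $(0,1)$), and the telescoping bound $\lambda_{2}\sum_{s}\beta(s)P_{s}\le 1$ combined with $\alpha(s)\le\varepsilon\beta(s)$ on the tail is exactly the place where the hypothesis $\beta(t)/\alpha(t)\to\infty$ is consumed --- without it the persistent forcing $\alpha(t)M$ need not be absorbed. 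The only simplification you exploit, legitimately, is that the innovation $y_{i}\hat h_{i}^{(\cdot)}(t)$ is uniformly bounded for a fixed realization of $\mathbf{y}$, so a deterministic Gronwall argument suffices where the cited lemma also handles stochastic perturbations. Your version buys the reader a transparent, elementary proof and makes explicit which of the four step-size conditions does what; the paper's citation buys brevity and generality at the cost of opacity.
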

\begin{proof}
This Lemma can be proved by applying Lemma 15 in \cite{Kar11}, which is skipped here.
\end{proof}

In Lemmas \ref{lem:y_avg_upper_bounded} and \ref{lem:u_i-y_avg_0}, $u_i^{(\cdot)}(t)$ and $\bar u^{(\cdot)}(t)$ could be directly replaced by $v_i^{(\cdot)}(t)$ and $\bar v^{(\cdot)}(t)$ respectively, and the proofs are similar.

\begin{lem}\label{lem:y_avg_v_avg_0}
Let the inter-sensor communication network be connected. Then,
\begin{equation}
\lim_{t \rightarrow \infty} \left(\frac{\bar u^{(\cdot)}(t+1)}{\bar v^{(\cdot)}(t+1)+\delta}-\frac{\bar u^{(\cdot)}(t)}{\bar v^{(\cdot)}(t)+\delta}\right) = 0.
\end{equation}
\end{lem}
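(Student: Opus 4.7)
The plan is to express the one-step ratio change as a quotient $N(t)/D(t)$, then verify separately that the numerator vanishes while the denominator stays bounded away from zero. I would rely on three facts already at hand: (a) $\{\bar u^{(\cdot)}(t)\}$ is bounded (Lemma~\ref{lem:y_avg_upper_bounded}), and by the identical argument with $|\hat h_i^{(\cdot)}|\le 1$ replacing $y_{\max}$, so is $\{\bar v^{(\cdot)}(t)\}$; (b) averaging the update~\eqref{equ:v_i_t_+} (resp.~\eqref{equ:v_i_t_-}) over $i$ kills the consensus term and leaves $\bar v^{(\cdot)}(t+1)=(1-\alpha(t+1))\bar v^{(\cdot)}(t)+\alpha(t+1)\{\hat h_i^{(\cdot)}(t+1)\}_{avg}$, a convex combination of quantities in $[0,1]$ starting from a value in $[0,1]$, so $\bar v^{(\cdot)}(t)\in[0,1]$ throughout and hence $\bar v^{(\cdot)}(t)+\delta\ge \delta$; (c) the recursions are standard stochastic-approximation updates with step size $\alpha(t)\to 0$.

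First I would establish that the one-step increments vanish. From~\eqref{equ:y_avg_t} and its $\bar v^{(\cdot)}$-analogue,
\begin{align*}
\bar u^{(\cdot)}(t+1)-\bar u^{(\cdot)}(t) &= \alpha(t+1)\bigl(\{y_i\hat h_i^{(\cdot)}(t+1)\}_{avg}-\bar u^{(\cdot)}(t)\bigr), \\
\bar v^{(\cdot)}(t+1)-\bar v^{(\cdot)}(t) &= \alpha(t+1)\bigl(\{\hat h_i^{(\cdot)}(t+1)\}_{avg}-\bar v^{(\cdot)}(t)\bigr).
\end{align*}
By (a), both bracketed expressions are uniformly bounded (by $|\bar u^{(\cdot)}(t)|+y_{\max}$ and by $|\bar v^{(\cdot)}(t)|+1$ respectively), and since $\alpha(t)\to 0$, each increment tends to zero.

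Next I would put the target difference over a common denominator and apply the add-and-subtract trick:
\begin{equation*}
\frac{\bar u^{(\cdot)}(t+1)}{\bar v^{(\cdot)}(t+1)+\delta}-\frac{\bar u^{(\cdot)}(t)}{\bar v^{(\cdot)}(t)+\delta} = \frac{N(t)}{D(t)},
\end{equation*}
with $D(t):=\bigl(\bar v^{(\cdot)}(t+1)+\delta\bigr)\bigl(\bar v^{(\cdot)}(t)+\delta\bigr)\ge \delta^2>0$ by (b), and
\begin{align*}
N(t) &= \bar u^{(\cdot)}(t+1)\bigl(\bar v^{(\cdot)}(t)+\delta\bigr)-\bar u^{(\cdot)}(t)\bigl(\bar v^{(\cdot)}(t+1)+\delta\bigr) \\
&= \bigl(\bar u^{(\cdot)}(t+1)-\bar u^{(\cdot)}(t)\bigr)\bigl(\bar v^{(\cdot)}(t)+\delta\bigr)-\bar u^{(\cdot)}(t)\bigl(\bar v^{(\cdot)}(t+1)-\bar v^{(\cdot)}(t)\bigr).
\end{align*}
Each of the two summands in $N(t)$ is a product of a vanishing increment (from the previous step) with a factor bounded by (a) and the upper bound $1+\delta$ from (b). Hence $N(t)\to 0$, and dividing by $D(t)\ge \delta^2$ yields the claim.

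Since the decay of $\alpha(t)$ does all the work and the state variables are uniformly bounded, no substantial obstacle is anticipated; the only point needing minor care is the dual role of the regularizer $\delta$, which simultaneously ensures $D(t)\ge \delta^2$ (so that the quotient is well defined even when $\bar v^{(\cdot)}(t)=0$) and keeps the algebraic rearrangement of $N(t)$ into two ``vanishing increment $\times$ bounded factor'' terms clean.
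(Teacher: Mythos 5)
Your proposal is correct and follows essentially the same route as the paper: both average the update to get $\bar u^{(\cdot)}(t+1)=(1-\alpha(t+1))\bar u^{(\cdot)}(t)+\alpha(t+1)\{y_i\hat h_i^{(\cdot)}(t+1)\}_{avg}$ (and its $\bar v$-analogue), invoke the boundedness from Lemma~\ref{lem:y_avg_upper_bounded} together with the $\delta$-regularized denominator, and conclude that the one-step change of the ratio is $\alpha(t+1)$ times a bounded quantity, hence vanishes. Your add-and-subtract decomposition of the numerator is only a cosmetic variant of the paper's direct factoring of $\alpha(t+1)$ out of the combined fraction.
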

\begin{proof}
We have
\begin{align}
&\frac{\bar u^{(\cdot)}(t+1)}{\bar v^{(\cdot)}(t+1)+\delta}-\frac{\bar u^{(\cdot)}(t)}{\bar v^{(\cdot)}(t)+\delta}\nonumber\\
 &= \frac{\big(1-\alpha(t+1)\big)\bar u^{(\cdot)}(t) + \alpha(t+1) \{y_i \hat h_i^{(\cdot)}(t+1)\}_{avg}}{\big(1-\alpha(t+1)\big)\bar v^{(\cdot)}(t) + \alpha(t+1) \{\hat h_i^{(\cdot)}(t+1)\}_{avg}+\delta}\nonumber\\
 &- \frac{\bar u^{(\cdot)}(t)}{\bar v^{(\cdot)}(t)+\delta}\nonumber\\
 \end{align}
 \begin{align}
&= \left\{\frac{ \{y_i \hat h_i^{(\cdot)}(t+1)\}_{avg} (\bar v^{(\cdot)}(t) + \delta) - \bar u^{(\cdot)}(t) \delta}{\big(1-\alpha(t+1)\big)\bar v^{(\cdot)}(t) + \alpha(t) \{\hat h_i^{(\cdot)}(t+1)\}_{avg} +\delta   } \right.\nonumber\\
&\left.-\frac{ \{\hat h_i^{(\cdot)}(t+1)\}_{avg} \bar u^{(\cdot)}(t)}{ \big(1-\alpha(t+1)\big)\bar v^{(\cdot)}(t) + \alpha(t) \{\hat h_i^{(\cdot)}(t+1)\}_{avg} +\delta  } \right\}\nonumber\\
&\cdot \frac{\alpha(t+1)}{(\bar v(t)+\delta)}
\end{align}

By Lemma \ref{lem:y_avg_upper_bounded}, both $\bar v^{(\cdot)}(t)$ and $\bar u^{(\cdot)}(t)$ are bounded (both upper- and lower-bounded). In addition, $\{y_i \hat h_i^{(\cdot)}(t+1)\}_{avg}$ and $\{\hat h_i^{(\cdot)}(t+1)\}_{avg}$ are naturally bounded. Together with the fact that $\delta$ is an arbitrarily small constant, and $\lim_{t \rightarrow \infty} \alpha(t) = 0$, we conclude that $\lim_{t \rightarrow \infty} \left(\frac{\bar u^{(\cdot)}(t+1)}{\bar v^{(\cdot)}(t+1)+\delta}-\frac{\bar u^{(\cdot)}(t)}{\bar v^{(\cdot)}(t)+\delta}\right) = 0$.
\end{proof}

\begin{lem}\label{lem:x_y_avg_v_avg_0}
Let the inter-sensor communication network be connected. Then,
\begin{eqnarray}
\lim_{t \rightarrow \infty} \left(\hat \theta_i^{+}(t+1) - \max\left\{\frac{\bar u^{+}(t)}{\bar v^{+}(t)+\delta},0\right\}\right) = 0,~\forall i,\nonumber\\
\lim_{t \rightarrow \infty} \left(\hat \theta_i^{-}(t+1) - \min\left\{\frac{\bar u^{-}(t)}{\bar v^{-}(t)+\delta},0\right\}\right) = 0,~\forall i,\nonumber
\end{eqnarray}
where $\bar u^{(\cdot)}(t)$ and $\bar v^{(\cdot)}(t)$ denote the averaging values of $u_i^{(\cdot)}(t)$ and $v_i^{(\cdot)}(t)$, respectively.
\end{lem}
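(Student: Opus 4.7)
The plan is to reduce the statement to a continuity argument. Since the projection $y\mapsto\max\{y,0\}$ (respectively $y\mapsto\min\{y,0\}$) is $1$-Lipschitz, and by definition $\hat\theta_i^{+}(t+1)=\max\{u_i^{+}(t)/(v_i^{+}(t)+\delta),0\}$ and $\hat\theta_i^{-}(t+1)=\min\{u_i^{-}(t)/(v_i^{-}(t)+\delta),0\}$, it suffices to prove that
\[
\frac{u_i^{(\cdot)}(t)}{v_i^{(\cdot)}(t)+\delta} \;-\; \frac{\bar u^{(\cdot)}(t)}{\bar v^{(\cdot)}(t)+\delta}\;\longrightarrow\; 0,\qquad \forall\,i,
\]
after which the same limit for $\hat\theta_i^{(\cdot)}(t+1)-\max/\min\{\bar u^{(\cdot)}(t)/(\bar v^{(\cdot)}(t)+\delta),0\}$ follows by Lipschitz-ness.

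First I would put the two ratios over a common denominator and write their difference as
\[
\frac{\bigl(u_i^{(\cdot)}(t)-\bar u^{(\cdot)}(t)\bigr)\bigl(\bar v^{(\cdot)}(t)+\delta\bigr)-\bar u^{(\cdot)}(t)\bigl(v_i^{(\cdot)}(t)-\bar v^{(\cdot)}(t)\bigr)}{\bigl(v_i^{(\cdot)}(t)+\delta\bigr)\bigl(\bar v^{(\cdot)}(t)+\delta\bigr)}.
\]
Lemma \ref{lem:u_i-y_avg_0} supplies $u_i^{(\cdot)}(t)-\bar u^{(\cdot)}(t)\to 0$ and $v_i^{(\cdot)}(t)-\bar v^{(\cdot)}(t)\to 0$; Lemma \ref{lem:y_avg_upper_bounded} (together with its verbatim analog for $v$, which is actually simpler because the innovation $\hat h_i^{(\cdot)}(t)\in\{0,1\}\subset[0,1]$) gives boundedness of $\bar u^{(\cdot)}(t)$ and $\bar v^{(\cdot)}(t)$. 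These two ingredients drive the numerator to zero. Provided the denominator is uniformly bounded below by a positive constant for all $t$ large enough, one obtains the required limit, and the Lipschitz argument on $\max$ (respectively $\min$) closes the positive (respectively negative) branch of the lemma.

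The main obstacle — the point requiring real care — is establishing that the denominators $\bar v^{(\cdot)}(t)+\delta$ and $v_i^{(\cdot)}(t)+\delta$ stay strictly positive. The cleanest way is to average the $v$-recursion \eqref{equ:v_i_t_+}/\eqref{equ:v_i_t_-} over all nodes: the consensus term is antisymmetric and vanishes upon averaging, leaving
\[
\bar v^{(\cdot)}(t)=\bigl(1-\alpha(t)\bigr)\bar v^{(\cdot)}(t-1)+\alpha(t)\,\{\hat h_i^{(\cdot)}(t)\}_{avg},
\]
a convex combination of quantities in $[0,1]$ thanks to $0<\alpha(t)<1$. Starting from $\bar v^{(\cdot)}(0)\in[0,1]$, induction yields $\bar v^{(\cdot)}(t)\in[0,1]$ for all $t$, hence $\bar v^{(\cdot)}(t)+\delta\geq\delta>0$. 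Combined with $v_i^{(\cdot)}(t)-\bar v^{(\cdot)}(t)\to 0$, this gives $v_i^{(\cdot)}(t)+\delta\geq\delta/2$ for all $t$ sufficiently large, uniformly in $i$. With the denominator so bounded below and the numerator vanishing, the remaining step is a routine squeeze on the explicit expression displayed above, completing the proof.
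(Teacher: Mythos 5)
Your proof is correct and follows essentially the same route as the paper's: both put the two ratios over a common denominator (your numerator $(u_i^{(\cdot)}(t)-\bar u^{(\cdot)}(t))(\bar v^{(\cdot)}(t)+\delta)-\bar u^{(\cdot)}(t)(v_i^{(\cdot)}(t)-\bar v^{(\cdot)}(t))$ is algebraically identical to the paper's $u_i^{(\cdot)}(t)\bar v^{(\cdot)}(t)-\bar u^{(\cdot)}(t)v_i^{(\cdot)}(t)+\delta(u_i^{(\cdot)}(t)-\bar u^{(\cdot)}(t))$) and conclude via Lemma \ref{lem:u_i-y_avg_0} together with the boundedness from Lemma \ref{lem:y_avg_upper_bounded}. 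The only difference is that you explicitly supply the lower bound on the denominators (via the convex-combination recursion for $\bar v^{(\cdot)}(t)$) and the Lipschitz step for $\max/\min$, details the paper leaves implicit.
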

\begin{proof}
Recall $\hat \theta_i^{+}(t+1) = \max\left\{\frac{u_i^{+}(t)}{v_i^{+}(t)+\delta},0\right\}$ and $\hat \theta_i^{-}(t+1) = \min\left\{\frac{u_i^{-}(t)}{v_i^{-}(t)+\delta},0\right\}$. We have
\begin{align}
&  \lim_{t \rightarrow \infty} \left(\frac{u_{i}^{(\cdot)}(t)}{v_{i}^{(\cdot)}(t)+\delta} - \frac{\bar u^{(\cdot)}(t)}{\bar v^{(\cdot)}(t)+\delta}\right)=\nonumber\\
& \lim_{t \rightarrow \infty} \left(\frac{u_{i}^{(\cdot)}(t)\bar v^{(\cdot)}(t)-\bar u^{(\cdot)}(t)v_{i}^{(\cdot)}(t) +\delta(u_i^{(\cdot)}(t)-\bar u^{(\cdot)}(t))}{(v_{i}^{(\cdot)}(t)+\delta)(\bar v^{(\cdot)}(t)+\delta)} \right)\nonumber\\
&=0,\nonumber
\end{align}
which is according to Lemma \ref{lem:u_i-y_avg_0}.
Therefore, the proof is completed.
\end{proof}

\section{Proof of Theorem \ref{thm:hat_x_t}}\label{sec:Proof_Con}
In this section, we prove the convergence and derive the limiting value for Theorem \ref{thm:hat_x_t}. Without loss of generality, we prove the case of $\hat \theta_i^+(t)$ and skip the proof of $\hat \theta_i^-(t)$, which is similar. We first partition the real axis in Subsection \ref{subsec:partition}, such that the detection of $\mathbf{\hat h}^+(t)$ has the same results when $\hat \theta_i^+(t)$'s are in the same interval. Then we derive the smooth moving condition in Subsection \ref{subsec:smooth_moving_condition}, under which $\hat \theta_i^+(t)$ moves on the real axis by passing the partitions sequentially along the iteration process, such that the changing of $\mathbf{\hat h}^+(t)$ is successive with time. From the proposed algorithm, we notice that the local estimation is the greater one between $0$ and $\frac{u_i^+(t)}{v_i^+(t)+\delta}$, when $\hat {\bar y}_i(t) \geq 0$. As such, we only need to prove the convergence of $\frac{u_i^+(t)}{v_i^+(t)+\delta}$, and then the convergence of $\hat \theta_i^+(t)$ is guaranteed.
In Subsection \ref{subsec:MDE_algorithm}, we complete the proof of Theorem \ref{thm:hat_x_t}.

\subsection{Partitions of the Real Axis}\label{subsec:partition}
We now seek a suitable scale to study the iteration procedure. We start by exploring step 2 of the proposed algorithm. For each $i$, we make a hard decision at step 2. We define the region that returns $\hat h_i^+(t) = 1$ as the decision region of $\hat \theta_i^+ (t)$, denoted by $\mathcal{D}_i$. In particular, if $y_i^2 +2 \sigma^2 \ln \frac{p_1}{p_0}\ge0$, we have $\mathcal{D}_i = [r_{i}^-,r_{i}^+]$ for node $i$, where $r_{i}^- = y_i -\sqrt{y_i^2 +2 \sigma^2 \ln \frac{p_1}{p_0}},r_{i}^+= y_i +\sqrt{y_i^2 +2 \sigma^2 \ln \frac{p_1}{p_0}}$; otherwise, $\mathcal{D}_i = \emptyset$. Next we partition the real axis into at most $2n+1$ parts by these boundaries of $\mathcal{D}_i$'s, i.e., $r_i^-$'s and $r_i^+$'s. Here, we say ``at most" due to the fact that some of the $r_i^{\cdot}$'s may not exist, e.g., when $y_i^2 +2 \sigma^2 \ln \frac{p_1}{p_0} < 0$ or when multiple boundaries share the same value.  Then we name these boundaries in an increasing order of their values as $b_1$ to $b_M$ and name the partitioned left-open and right-closed intervals as $\mathcal{I}_1$ to $\mathcal{I}_{M+1}$, from left to right on the real axis.

\subsection{Smooth Moving Condition}\label{subsec:smooth_moving_condition}
In this subsection, we define the gathering region of $\{\hat \theta_i^+(t)\}$ as $\mathcal{G}^+(t)$, which is the range that covers all possible values of $\hat \theta_i^+(t)$'s. Then we study the condition for $\mathcal{G}^+(t)$ to move on the axis smoothly during the iteration process. In other words, the gathering region touches those boundaries $b_m$'s (from $\{b_1,\cdots,b_M\}$) sequentially in order without jumping if it passes through the boundaries. Also, for each time, the gathering region $\mathcal{G}^+(t)$ touches at most one of those different boundaries at each iteration. Next, we propose two conditions to guarantee the above situation.

We choose $\varepsilon$ that is at least 3 (the reason of choosing 3 is explained at the end of this subsection) times smaller than the narrowest range in $\mathcal{K}_j$'s, i.e., $3 \varepsilon < \min_{j}\{ |\mathcal{K}_j|\}$, where $\mathcal{K}_j$'s are the intervals partitioned jointly by $b_m$'s and $y_i$'s.  (such that the number of $\mathcal{K}_j$'s is larger, the minimum length of $\mathcal{K}_j$'s is shorter, than $\mathcal{I}_m$'s).
\begin{itemize}
\item By Lemma \ref{lem:x_y_avg_v_avg_0}, we have \[\lim_{t \rightarrow \infty} \left( \hat \theta_i^+(t) - \max\left\{\frac{\bar u^+(t-1)}{\bar v^+(t-1)+\delta},0\right\}\right) = 0.\] Thus we could find $t_\varepsilon$, such that for any $t>t_\varepsilon$, we have $\left|\hat \theta_i^+(t) - \max\left\{\frac{\bar u^+(t-1)}{\bar v^+(t-1)+\delta},0\right\}\right|< \varepsilon$.
\item  By Lemma \ref{lem:y_avg_v_avg_0}, we have \[\lim_{t \rightarrow \infty} \left(\frac{\bar u^+(t+1)}{\bar v^+(t+1) + \delta }-\frac{\bar u^+(t)}{\bar v^+(t)+\delta}\right) = 0.\] Thus we could find $t_\alpha$, such that for any $t>t_\alpha$, $\left|\frac{\bar u^+(t+1)}{\bar v^+(t+1)+\delta}-\frac{\bar u^+(t)}{\bar v^+(t)+\delta}\right| < \varepsilon$.
\end{itemize}
When $t> \max(t_\varepsilon, t_\alpha)$, we define $\mathcal{G}^+(t) = \left(\max\left\{\frac{\bar u^+(t-1)}{\bar v^+(t-1)+\delta},0\right\} - \varepsilon, \max\left\{\frac{\bar u^+(t-1)}{\bar v^+(t-1)+\delta},0\right\} + \varepsilon\right)$ as the gathering region of $\hat \theta_i^+(t)$, i.e., $\hat \theta_i^+(t) \in \mathcal{G}^+(t)$, $\forall i$. Since $\varepsilon < \frac{1}{3} \min_{j}\{ |\mathcal{K}_j|\}\leq \frac{1}{3} \min_{m}\{ |\mathcal{I}_m|\}$, $\mathcal{G}^+(t)$ does not touch or pass two successive $b_m$'s during two successive iterations as desired. In the sequel, we assume all the iterations under concern satisfy $t > \max(t_\varepsilon, t_\alpha)$.

\subsection{Proof of Theorem~\ref{thm:hat_x_t}}\label{subsec:MDE_algorithm}


We now prove the convergence result stated in Theorem \ref{thm:hat_x_t}.

\begin{proof}
In this proof, we first prove that the estimate sequence $\{\hat \theta_i^{(\cdot)} (t)\}$ at each node $\mathcal{N}_{i}$ converges almost surely (a.s.), and the limiting value is given by
\begin{align}
\lim_{t \rightarrow \infty} \hat \theta_i^{+} (t) &= \max\left\{\frac{\sum_{i=1}^n \hat h_i^{+} y_i}{\sum_{i=1}^n \hat h_i^{+} + n\delta},0\right\},~\forall i,\nonumber\\
\lim_{t \rightarrow \infty} \hat \theta_i^{-} (t) &= \min\left\{\frac{\sum_{i=1}^n \hat h_i^{-} y_i}{\sum_{i=1}^n \hat h_i^{-} + n\delta},0\right\},~\forall i.\nonumber
\end{align}
with $\hat h_i^{(\cdot)} \in \{0,1\}$ denoting the limiting value of the convergent sequence $\{\hat h_i^{(\cdot)}(t)\}$. We then use the fact that $\lim_{t \rightarrow \infty} \hat {\bar y}_i(t) = \bar y, \forall i$ \cite{Kar11}, to prove the convergence of $\{\hat \theta_i (t)\}$.

Without loss of generality, we only prove the positive case, i.e., $\{\hat \theta_i^+ (t)\}$. In Lemma \ref{lem:x_y_avg_v_avg_0}, we have proved that $\lim_{t \rightarrow \infty} (\hat \theta_i^+(t)-\max\{\hat \theta_{current}^+(t),0\}) =0$. Thus, we only need to show that $\max\{\hat \theta_{current}^+(t),0\}$ converges. Since $\mathcal{G}^+(t) = (\max\{\hat \theta_{current}^+(t),0\} -\varepsilon ,\max\{\hat \theta_{current}^+(t),0\} +\varepsilon)$, the study on $\max\{\hat \theta_{current}^+(t),0\}$ is equivalent to the study on $\mathcal{G}^+(t)$ in term of convergence. By the smooth moving condition, there is at most one $b_k$ in $\mathcal{G}^+(t)$, $\forall t$. Thus, there are two different moving statuses of $\hat \theta_{current}^+(t)$ at each iteration cataloged by the number of boundaries in $\mathcal{G}^+(t)$:


\begin{itemize}
\item Case 1: No boundaries belong to $\mathcal{G}^+(t)$, i.e., $b_k \not\in \mathcal{G}^+(t)$, $\forall k$. In other words, $\mathcal{G}^+(t)$ belongs to a single interval $\mathcal{I}_j$, i.e., $\mathcal{G}^+(t) \subseteq \mathcal{I}_j$.


\item Case 2: A boundary exists in $\mathcal{G}^+(t)$, i.e., $\exists k, b_k \in \mathcal{G}^+(t)$.

%
%
\end{itemize}

In Appendix~\ref{App_approach}, we provide Lemmas \ref{lem:motion_case_1} through \ref{lem:case_2_con}. Specifically, in Lemmas  \ref{lem:motion_case_1}, \ref{lem:motion_case_2}, and \ref{lem:swithcing_case_1_case_2}, we prove the convergence of $\max\{\hat \theta_{current}^+(t),0\}$. In particular, we show that $\max\{\hat \theta_{current}^+(t),0\}$ either converges or the moving status switches to the other one for Case 1 and Case 2 in Lemma \ref{lem:motion_case_1} and Lemma \ref{lem:motion_case_2}, respectively. For the moving status switching, Lemma \ref{lem:swithcing_case_1_case_2} further shows that the number of switching between Case 1 and Case 2 is finite, which implies the convergence of $\max\{\hat \theta_{current}^+(t),0\}$. In Lemmas \ref{lem:case_1_con} and \ref{lem:case_2_con}, we further derive the limiting values for Case 1 and Case 2, respectively.

Together with the fact that $\lim_{t \rightarrow \infty} \hat {\bar y}_i(t) = \bar y, \forall i$, the convergence of $\hat \theta_i(t)$ is guaranteed, which could be expressed as
\begin{equation}
\lim_{t \rightarrow \infty} \hat \theta_i (t) = \left\{
   \begin{array}{c}
   \max\left\{\frac{\sum_{i=1}^n \hat h_i^{+} y_i}{\sum_{i=1}^n \hat h_i^{+} + n \delta},0\right\},~ {\bar y} \geq 0 \\
   \min\left\{\frac{\sum_{i=1}^n \hat h_i^{-} y_i}{\sum_{i=1}^n \hat h_i^{-} + n \delta},0\right\},~{\bar y} < 0  \\
   \end{array}, \forall i.
  \right.
\end{equation}
\end{proof}

\section{Proofs for Theorem~\ref{thm:h_1>h_2} and Theorem~\ref{thm:E_theta_to_theta}}\label{sec:Performance}
In this section, we derive the expectation and the variance of local estimate with the proposed algorithm. In Section \ref{sec:Proof_Con}, we have proven that $\hat \theta_i^{+}$ in the MDE algorithm converges to $\max\left\{\frac{\sum_i \hat h_i^{+} y_i/n}{\sum_i \hat h_i^{+}/n + \delta},0\right\}$. Since $\delta$ can be arbitrarily small, we approximate the converged value $\hat \theta^{+}$ as $\max\left\{\frac{\sum_i \hat h_i^{+} y_i}{\sum_i \hat h_i^{+}},0\right\}$ here. In addition, the converged values $\hat \theta^{+}$'s (even with the same initial observations) may be different over different network realizations. In particular, the proposed algorithm might lead to random realizations of $\hat \theta^{+}$ and $\mathbf{\hat h}^{+}$, which satisfy
\begin{align}
&(\hat \theta^+)^2 - 2 y_i \hat \theta^+ \overset{\hat h_i^+=0}{\underset{\hat h_i^+=1}\gtrless} 2 \sigma^2 \ln \frac{p_1}{p_0}\label{equ:hard_decision_a_i}\\
&\hat \theta^+ = \max\left\{\frac{\sum_i \hat h_i^{+} y_i}{\sum_i \hat h_i^{+}},0\right\} \geq 0,
\end{align}
where $\mathbf{\hat h}^{+}$ is a random vector denoting $[\hat h_1^{+},\cdots,\hat h_n^{+}]$. In total, there are $2^n$ possible random values for $\mathbf{\hat h}^{+}$. In order to derive a meaningful result, we adopt order statistics into the rest of the analysis. In Subsection \ref{subsec:shrink}, we first prove Theorem \ref{thm:h_1>h_2} to establish the shrinking over the dimension of the probability space from $2^n$ to $2n$, with a more structured format when we order the observations. We then study the expectation of $\hat \theta$ in Theorem \ref{thm:E_theta_to_theta} at Subsection \ref{subsec:E_hat_theta}. We also study the variance $\operatorname{Var}(\hat \theta^{(\cdot)})$ of $\hat \theta$, whose elements are derived respectively in Subsections \ref{subsec:var_theta}, \ref{subsec:sta_fea_Y_Y_k}, and \ref{subsec:Prob_h_k}.


\subsection{Shrinking the Probability Space of $\mathbf{\hat h}^{(\cdot)}$}\label{subsec:shrink}
In this subsection, we prove Theorem \ref{thm:h_1>h_2} to establish the shrinking over the probability space of interest when we order the observations.
\begin{proof}
Here we only prove the $\mathbf{\hat h}^{+}$ part, for the proof of the $\mathbf{\hat h}^{-}$ part is similar.
We define the decision region of $\hat h_{(i)}^+$ as $\mathcal{D}_{(i)}^+$, which is the region of $\hat \theta^+$ when $\hat h_{(i)}^+ =1$.
By (\ref{equ:hard_decision_a_i}), $\mathcal{D}_{(i)}^+$ can be expressed as:

1) If $y_{(i)}^2 + 2 \sigma^2 \ln \frac{p_1}{p_0} < 0$, we have $\hat h_i^+ =0$ for any $\hat \theta^+$. Thus, we have $\mathcal{D}_{(i)}^+ =\emptyset$;

2) If $y_{(i)}^2 + 2 \sigma^2 \ln \frac{p_1}{p_0} \geq 0$, we have $\mathcal{D}_{(i)}^+ = \Big [y_{(i)} - \sqrt{y_{(i)}^2 + 2 \sigma^2 \ln \frac{p_1}{p_0}}, y_{(i)} + \sqrt{y_{(i)}^2 + 2 \sigma^2 \ln \frac{p_1}{p_0}} \Big]$.

The proof here is equivalent to proving that $\mathcal{D}_{(1)}^+ \subseteq \mathcal{D}_{(2)}^+ \subseteq ... \subseteq \mathcal{D}_{(n)}^+$ is true.
Next, we prove the above statement for both of the two cases: $p_1 \geq 0.5$ and $p_1 < 0.5$.

\begin{flushleft}
Case 1: $p_1 \geq 0.5$. In this case, we have $2 \sigma^2 \ln \frac{p_1}{p_0} \geq 0$ and $\mathcal{D}_{(i)}^+ \neq \emptyset$ for all $i$.
For the upper boundaries of $\mathcal{D}_{(i)}^+$'s, they are increasing with their index $i$, which could be proven by showing that $r(y) = y + \sqrt{y^2 + 2 \sigma^2 \ln \frac{p_1}{p_0}}$ is a monotonic increasing function when $2 \sigma^2 \ln\frac{p_1}{p_0} \geq 0$, i.e.,
\begin{align}
r'(y) &=\left(y + \sqrt{y^2 + 2 \sigma^2 \ln \frac{p_1}{p_0}}\right)' \nonumber\\
&= 1 + \frac{y}{\sqrt{y^2 + 2 \sigma^2 \ln \frac{p_1}{p_0}}} > 0. \label{equ:upper_boundaru_increasing}
\end{align}
For the lower boundaries of $\mathcal{D}_{(i)}^+$'s, they are all negative. Since $\hat \theta^+$ is always positive, the negative part of $\mathcal{D}_{(i)}^+$'s are infeasible. Thus, we redefine $\mathcal{D}_{(i)}^+ = \left[0, y_{(i)} + \sqrt{y_{(i)}^2 + 2 \sigma^2 \ln \frac{p_1}{p_0}}\right]$ in this case. Thus, we conclude that $\mathcal{D}_{(1)}^+ \subseteq \mathcal{D}_{(2)}^+ \subseteq ... \subseteq \mathcal{D}_{(n)}^+$ when $p_1 \geq 0.5$.
\end{flushleft}
Case 2: $p_1 < 0.5$. In this case, we have $2 \sigma^2 \ln \frac{p_1}{p_0} < 0$.  Next, we derive the expression of $\mathcal{D}(i)^+$ for different values of $y_{(i)}$.
When $y_{(i)}^2 < - 2 \sigma^2 \ln \frac{p_1}{p_0}$, we have $\mathcal{D}_{(i)}^+ =\emptyset$;
When $y_{(i)}^2 \geq - 2 \sigma^2 \ln \frac{p_1}{p_0}$, for the case of $y_{(i)} \leq -\sqrt{- 2 \sigma^2 \ln \frac{p_1}{p_0}}$, $\mathcal{D}_{(i)}^+ =  \left[y_{(i)} - \sqrt{y_{(i)}^2 + 2 \sigma^2 \ln \frac{p_1}{p_0}}, y_{(i)} + \sqrt{y_{(i)}^2 + 2 \sigma^2 \ln \frac{p_1}{p_0}}\right]$ is in the negative field. Since $\hat \theta^+$ is always positive, the case of $y_{(i)} \leq -\sqrt{- 2 \sigma^2 \ln \frac{p_1}{p_0}}$ is infeasible. Thus we only need to consider the case of $y_{(i)} \geq \sqrt{- 2 \sigma^2 \ln \frac{p_1}{p_0}}$. We then have $\mathcal{D}_{(i)}^+ =  \left[y_{(i)} - \sqrt{y_{(i)}^2 + 2 \sigma^2 \ln \frac{p_1}{p_0}}, y_{(i)} + \sqrt{y_{(i)}^2 + 2 \sigma^2 \ln \frac{p_1}{p_0}}\right]$, where the upper boundary is an increasing sequence over $i$ by the same argument as $(\ref{equ:upper_boundaru_increasing})$ and the lower boundary is a positive decreasing sequence over $i$, which could be proven by showing that $j(y) = y - \sqrt{y^2 + 2 \sigma^2 \ln \frac{p_1}{p_0}}$ is a monotonic decreasing function when $2 \sigma^2 \ln\frac{p_1}{p_0} < 0$, i.e.,
\begin{align}
j'(y) &= \left(y - \sqrt{y^2 + 2 \sigma^2 \ln \frac{p_1}{p_0}}\right)' \nonumber\\
&= 1 - \frac{y}{\sqrt{y^2 + 2 \sigma^2 \ln \frac{p_1}{p_0}}} <0.
\end{align}
Therefore, we have the same conclusion as the previous case, and we conclude that  $\mathcal{D}_{(1)}^+ \subseteq \mathcal{D}_{(2)}^+ \subseteq ... \subseteq \mathcal{D}_{(n)}^+$ as desired.
\end{proof}

We denote the corresponding convergence vector according to the ordered observations as a random vector $\mathbf{h}^{(\cdot)}$. Although there are totally $2^n$ possible values for $\mathbf{\hat h}^{(\cdot)}$, only $n$ possible values are in the probability space of $\mathbf{h}^{+}$ or $\mathbf{h}^{-}$, i.e., $\mathbf{h}_1^+ = [1,1,...,1], \mathbf{h}_2^+ = [0,1,...,1],...,\mathbf{h}_n^+ = [0,0,...,0,1]$, and $\mathbf{h}_1^- = [1,1,...,1], \mathbf{h}_2^- = [1,...,1,0],...,\mathbf{h}_n^- = [1,0,...,0,0]$, which means that the possible values of $\mathbf{h}^+$ could only be in the form that starts with successive $0$'s and followed with successive $1$'s, with similar rules held for $\mathbf{h}^{-}$.


\subsection{Expectation of $\hat \theta$}\label{subsec:E_hat_theta}
In this subsection, we prove Theorem \ref{thm:E_theta_to_theta} to derive the expected value of the achieved estimate.
\begin{proof}
Without loss of generality, for the $n$ given observations of $\theta$, we denote the $k$ invalid observations as $Y_1, Y_2, ..., Y_k$, with $Y_j \sim N(0, \sigma^2)$, $j \in \{1, ..., k\}$,  and the $n-k$ valid observations as $Y_{k+1}, Y_{k+2}, ... , Y_{n}$, with $Y_j \sim N(\theta, \sigma^2)$, $j \in \{k+1, ..., n\}$.

We first prove $\mathop{\rm \operatorname{sgn}}(\hat {\bar y}_i) \overset{\text{p}}{\rightarrow} \mathop{\rm \operatorname{sgn}}(\theta)$ (where $\overset{\text{p}}{\rightarrow}$ denotes convergence in probability), $\forall i$, as $\text{SNR} \rightarrow \infty$, where $\mathop{\rm \operatorname{sgn}}$ is a function such that $\mathop{\rm \operatorname{sgn}}(x)=+$ when $x\geq 0$ and $\mathop{\rm \operatorname{sgn}}(x)=-$ when $x< 0$. Since $\hat {\bar{y}}_i \overset{p}\rightarrow \bar y$, $\forall i$ \cite{Kar11}, it is enough to show that $\mathop{\rm \operatorname{sgn}}(\bar y) \overset{\text{p}}{\rightarrow} \mathop{\rm \operatorname{sgn}}(\theta)$. The mean of $y_i$ could be expressed as,
\begin{eqnarray}
\bar y = \frac{\sum_i y_i}{n} = \frac{k}{n}\theta +\frac{\sum_i w_i}{n}.
\end{eqnarray}
Since $w_i$'s are i.i.d. Gaussian white noises with zero mean and variance $\sigma^2$, $\frac{\sum_i w_i}{n}$ is Gaussian random variable with zero mean and variance $\sigma^2/n$. Thus, the error probability is given as,
\begin{align}
\Pr\{\mathop{\rm \operatorname{sgn}}(\bar y) \neq \mathop{\rm \operatorname{sgn}}(\theta)\} = Q \left (\frac{\frac{k}{n}\theta}{\frac{\sigma}{\sqrt{n}}} \right)
< \frac{1}{2} e^{-\frac{k^2 \theta^2}{2 \sigma^2 n}}.
\end{align}
Thus, $\mathop{\rm \operatorname{sgn}}(\bar y) \overset{\text{p}}{\rightarrow} \mathop{\rm \operatorname{sgn}}(\theta)$, as $\text{SNR} \rightarrow \infty$.

Next, we prove that $\operatorname{E}(\hat \theta^+) \overset{\text{p}}{\rightarrow} \theta$ (for the case of $\hat \theta^-$, the proof is similar and skipped). Define
$
\hat \theta_c = \frac{\sum_{i=k+1}^{n} Y_i}{n-k}.
$
Thus, $\operatorname{E}(\hat \theta_c) = \theta$. Define the probability of successful estimate as,
$P^+_c =\Pr\{\hat \theta^+ = \hat \theta_c\}$.

In the following part, we prove that $P^+_c \rightarrow 1$, as $\text{SNR} \rightarrow \infty$ for both of the two cases: $p_1 \geq 0.5$ and $p_1 <0.5$. When $p \geq 0.5$, $P^+_c$ can be expressed with the boundaries of the decision regions:
\begin{align}
&P^+_c = \Pr\left\{ \max_{j \in \{1, ..., k\} } \left ( Y_j + \sqrt{Y_j^2 + 2 \sigma^2 \ln \frac{p_1}{p_0}}  \right )\right.\nonumber\\
&\left.\leq \frac{\sum_{i=k+1}^{n} Y_i}{n-k}  \leq  \min_{j \in \{k+1, ..., n\} } \left ( Y_j + \sqrt{Y_j^2 + 2 \sigma^2 \ln \frac{p_1}{p_0}} \right ) \right\}.
\end{align}
Thus, the union bound of the probability of error, $P^+_e = 1- P^+_c$, could be expressed as
{\small\begin{align}
&P_e \leq\nonumber\\
 &\Pr\left\{\!\min_{j \in \{k+1, ..., n\} }\!\left ( Y_j + \sqrt{Y_j^2 + 2 \sigma^2 \ln \frac{p_1}{p_0}} \right ) \!\leq\! \frac{\sum_{i=k+1}^{n} Y_i}{n-k} \right\} \nonumber\\
 &+ \Pr\left\{ \frac{\sum_{i=k+1}^{n} Y_i}{n-k}\!\leq\!\max_{j \in \{1, ..., k\} } \left ( Y_j + \sqrt{Y_j^2 + 2 \sigma^2 \ln \frac{p_1}{p_0}} \right )\right\}
\end{align}
}
where both of the above two items go to $0$ as $\text{SNR} \rightarrow \infty$.

The proof for the case of $p_1 < 0.5$ is similar.
Therefore, we conclude that $\lim_{\text{SNR} \rightarrow \infty}\operatorname{E}(\hat \theta^+) = \operatorname{E}(\hat \theta_c) = \theta$. Similarly we could have $\lim_{\text{SNR} \rightarrow \infty} \operatorname{E}(\hat \theta^-) = \operatorname{E}(\hat \theta_c) = \theta$. Together with the result in the first part for $\mathop{\rm \operatorname{sgn}}(\bar y) \overset{\text{p}}{\rightarrow} \mathop{\rm \operatorname{sgn}}(\theta)$, as $\text{SNR} \rightarrow \infty$, we have $\lim_{\text{SNR} \rightarrow \infty} \operatorname{E}(\hat \theta) = \theta$.
\end{proof}

\subsection{Variance of $\hat \theta$}\label{subsec:var_theta}
In this subsection, we derive the variance of $\hat \theta$. We have ordered the observations as $y_{(1)}\leq y_{(2)}\leq ...\leq y_{(n)}$, and we define the corresponding random variables as $Y_{(1)} \leq Y_{(2)} \leq ... \leq Y_{(n)}$.

Conditioned on $\mathbf{h}$, the variance of $\hat \theta$ can be derived as
\begin{eqnarray}
\operatorname{Var}(\hat \theta) = \operatorname{E}(\operatorname{Var}(\hat \theta \mid \mathbf{h}))+\operatorname{Var}(\operatorname{E}(\hat \theta\mid \mathbf{h})). \label{equ:var_hat_X}
\end{eqnarray}

The first term on the right-hand side of (\ref{equ:var_hat_X}) can be expressed as,
\begin{align}\label{eq_var}
\operatorname{E}(\operatorname{Var}(\hat \theta \mid \mathbf{h}))
&= \sum_{k=1}^{n} \operatorname{Var}(\hat \theta^{k+} ) \Pr\{\mathbf{h}= \mathbf{h}_k^+\} \nonumber\\
&+ \sum_{k=1}^{n} \operatorname{Var}(\hat \theta^{k-} ) \Pr\{\mathbf{h}= \mathbf{h}_k^-\},
\end{align}
where $\hat \theta^{k+}$ and $\hat \theta^{k-}$ are the estimates when $\mathbf{h}= \mathbf{h}_k^+$ and $\mathbf{h}= \mathbf{h}_k^-$, respectively, i.e.,
\begin{align}
\hat \theta^{k+} &=\frac{\sum_i \hat h_i^+ Y_i}{\sum_i \hat h_i^+} = \frac{\sum_{i=k}^{n} Y_{(i)}}{n-k+1}, \label{equ:hat_theta_k_+}\\
\hat \theta^{k-} &=\frac{\sum_i \hat h_i^- Y_i}{\sum_i \hat h_i^-} = \frac{\sum_{i=1}^{n-k+1} Y_{(i)}}{n-k+1}, \label{equ:hat_theta_k_-}
\end{align}
and the variances of $\hat \theta^{k+}$ and $\hat \theta^{k-}$ can be expressed as,
\begin{align}
\operatorname{Var}(\hat \theta^{k+}) = \operatorname{Var}\left(\frac{\sum_{i=k}^n Y_{(i)}}{n-k+1}\right) 
= \frac{\sum_{i=k}^{n} \sigma^2_{(i)}}{(n-k+1)^2},\nonumber\\
\operatorname{Var}(\hat \theta^{k-}) = \operatorname{Var}\left(\frac{\sum_{i=1}^{n-k+1} Y_{(i)}}{n-k+1}\right)
= \frac{\sum_{i=1}^{n-k+1} \sigma^2_{(i)}}{(n-k+1)^2},\nonumber
\end{align}
where $\sigma^2_{(i)}$ is the variance of $Y_{(i)}$, which will be derived in the next subsection.

The second term on the right-hand side of (\ref{equ:var_hat_X}) can be expressed as,
\begin{align}
&\operatorname{Var}(\operatorname{E}(\hat \theta\mid \mathbf{h}))\nonumber\\
&= \operatorname{E} \big((\operatorname{E}(\hat \theta\mid \mathbf{h}))^2\big) - \operatorname{E}^2(\operatorname{E}(\hat \theta\mid \mathbf{h})) \nonumber\\
&= \big[ \sum_{k=1}^n \operatorname{E}^2(\hat \theta^{k+} ) \Pr \{\mathbf{h}=\mathbf{h}_k^+\} + \sum_{k=1}^n \operatorname{E}^2(\hat \theta^{k-} ) \Pr \{\mathbf{h}=\mathbf{h}_k^-\}\big] \nonumber\\
&  - \big[\sum_{k=1}^n \operatorname{E}(\hat \theta^{k+} ) \Pr \{\mathbf{h}=\mathbf{h}_k^+\} + \sum_{k=1}^n \operatorname{E}(\hat \theta^{k-} ) \Pr \{\mathbf{h}=\mathbf{h}_k^-\} \big ]^2,
\end{align}
where the expectation of $\hat \theta^{k(\cdot)}$ can be derived as,
\begin{align}
\operatorname{E}(\hat \theta^{k+} ) = \frac{\operatorname{E}(\sum_{i=k}^n Y_{(i)})} {n-k+1}
= \frac{\sum_{i=k}^n \mu_{(i)}}{n-k+1},
\end{align}
with $\mu_{(i)}$ as the mean of $Y_{(i)}$, which will be derived in the next subsection. For the negative part, similarly, we have
\begin{align}
\operatorname{E}(\hat \theta^{k-} ) = \frac{\operatorname{E}(\sum_{i=1}^{n-k+1} Y_{(i)})} {n-k+1}
= \frac{\sum_{i=1}^{n-k+1} \mu_{(i)}}{n-k+1}.
\end{align}

 From the above expressions, we see that both of the two terms on the right-hand side of (\ref{equ:var_hat_X}) are constructed by three basic elements, i.e., $\Pr \{\mathbf{h}=\mathbf{h}_k^{(\cdot)}\}$'s, $\mu_{(i)}$'s, and $\sigma^2_{(i)}$'s. In the following subsections, we derive them by exploring the statistics of $Y_{(i)}$.

\subsection{Statistics of~$Y_{(i)}$}\label{subsec:sta_fea_Y_Y_k}
 First, we start from the pdf of $Y$, where the received signal $Y$ is a random variable, which is the sum of two independent random variables, i.e., $Y=h \theta + W$, where $\Pr(h \theta=\theta)=p_1$ and $\Pr(h \theta=0)=p_0$, and $W$ is an independent Gaussian random variable with zero mean and variance $\sigma^2$. The pdf of $Y$ can be expressed as

\begin{equation}\label{equ:f_Y(y)}
f_Y(y) = \frac{1}{\sigma\sqrt{2\pi}} e^{ -\frac{y^2}{2\sigma^2} } p_0 + \frac{1}{\sigma\sqrt{2\pi}} e^{ -\frac{(y-\theta)^2}{2\sigma^2} } p_1 \nonumber
\end{equation}
%
and its cdf is expressed as
\begin{equation}
F_Y(y)= \Phi\left(\frac{y}{\sigma}\right) \cdot p_0 + \Phi\left(\frac{y-\theta}{\sigma}\right) \cdot p_1,\nonumber
\end{equation}
where $\Phi(x) = \frac{1}{\sqrt{2\pi}} \int_{-\infty}^x e^{-t^2/2} \, dt$.
%


Next, we derive the cdf of the ordered received signals $Y_{j}$'s. The cdf of $Y_{(i)}$ can then be expressed as
\begin{align}
&F_{Y_{(i)}}(r)= \Pr\{ Y_{(i)} < r \} \nonumber\\
&= \Pr \{\text{the~number~of~$Y_j$~less~than~or~equal~to~$r$~is~at~least~$i$} \}\nonumber\\
&= \sum_{k=i}^{n} \binom{n}{k} F_Y^k(r)[1- F_Y(r)]^{n-k}.
\end{align}



The joint pdf of $Y_{(k_1)},Y_{(k_2)},...,Y_{(k_j)}$, ($1 \leq k_1 < k_2 <...<k_j\leq n; 1 \leq j \leq n$), is, for $y_1\leq y_2 \leq \cdots \leq y_j$,
\begin{align}
&  f_{k_1 k_2 \cdots k_j} (y_1,y_2,...,y_j)\nonumber\\
&=\frac{n! \cdot F_Y^{k_1-1}(y_1) f_Y(y_1)[F_Y(y_2)-F_Y(y_1)]^{k_2-k_1-1}f_Y(y_2)} {(k_1 -1)! (k_2-k_1-1)! \cdots (n-k_j)!} \nonumber\\
&~~\times\cdots [1-F_Y(y_k)]^{n-k_j}f_Y(y_j)
\end{align}
By the result in \cite{David03}, the mean of $Y_{(i)}$ can be calculated as
\begin{align}
\mu_{(i)} &= n \binom{n-1}{i-1}\!\int_{-\infty}^{\infty} x [F_Y(x)]^{i-1} [1-F_Y(x)]^{n-i} f_Y(x) d x \nonumber\\
&= n \binom{n-1}{i-1}\!\int_{0}^{1} F_Y^{-1}(u) u^{i-1} (1-u)^{n-i} d u,
\end{align}
and the variance of $Y_{(i)}$ is given as
\begin{eqnarray}
\sigma^2_{(i)} = E((Y_{(i)})^2) - \mu_{(i)}^2.
\end{eqnarray}

\subsection{Probability of $\mathbf{h}=\mathbf{h}_{k}^{(\cdot)}$}\label{subsec:Prob_h_k}
Next, we derive the probability that $\mathbf{h}$ equals $\mathbf{h}_{k}^{+}$. We have
\begin{align}
\Pr\{\mathbf{h}= &  \mathbf{h}_{k}^{+}\} = \Pr\left\{\hat \theta \in \mathcal{D}_{(i)}, i = k, k+1, ... , n; \right. \nonumber\\
&\left.\hat \theta \not\in \mathcal{D}_{(j)}, j = 1,2, ..., k-1 ; \operatorname{sgn}\left(\sum Y_i\right) = + \right\}.
\end{align}
Specifically, when $p_1 \geq 0.5$, we have
\begin{align}\label{equ:Pr_ak_is_true_1}
&\Pr\{\mathbf{h}= \mathbf{h}_{k}^{+}\}= \nonumber\\
&\Pr\left\{ Y_{(k-1)} + \sqrt{Y_{(k-1)}^2 + 2 \sigma^2 \ln\frac{p_1}{p_0}} \leq \frac{\sum_{i=k}^n Y_{(i)}}{n-k+1}\right.\nonumber\\
&\left. \leq Y_{(k)} + \sqrt{Y_{(k)}^2 + 2 \sigma^2 \ln\frac{p_1}{p_0}} ; \operatorname{sgn}\left(\sum Y_i\right) =+\right\}.
\end{align}

When $p_1 <0.5$, we have
\begin{align}\label{equ:Pr_ak_is_true_0}
&\Pr\{\mathbf{h}= \mathbf{h}_{k}^{+}\} \nonumber\\
&= \Pr\left\{ Y_{(k-1)} + \sqrt{Y_{(k-1)}^2 + 2 \sigma^2 \ln\frac{p_1}{p_0}} \leq \frac{\sum_{i=k}^n Y_{(i)}}{n-k+1} \nonumber\right.\\
&~~~~~~~~\leq Y_{(k)} + \sqrt{Y_{(k)}^2 + 2 \sigma^2 \ln\frac{p_1}{p_0}} ; \nonumber\\
&~~~~~~~~ \left. Y_{(k-1)} \geq -\sqrt{-2 \sigma^2 \ln\frac{p_1}{p_0}} ; \operatorname{sgn}\left(\sum Y_i\right) = + \right\} \nonumber\\
&+\Pr\left\{ \frac{\sum_{i=k}^n Y_{(i)}}{n-k+1} \leq Y_{(k)} + \sqrt{Y_{(k)}^2 + 2 \sigma^2 \ln\frac{p_1}{p_0}} ;\right.\nonumber\\
 & ~~~~~~~~Y_{(k)} \geq -\sqrt{-2 \sigma^2 \ln\frac{p_1}{p_0}} ; \nonumber\\
& ~~~~~~~~\left. Y_{(k-1)} < -\sqrt{-2 \sigma^2 \ln\frac{p_1}{p_0}} ; \operatorname{sgn}\left(\sum Y_i\right) = +\right\} \nonumber\\
\end{align}
\begin{align}
&+\Pr\left\{ Y_{(k-1)} - \sqrt{Y_{(k-1)}^2 + 2 \sigma^2 \ln\frac{p_1}{p_0}} \geq \frac{\sum_{i=k}^n Y_{(i)}}{n-k+1} \nonumber \right.\\
& ~~~~~~~~ \geq Y_{(k)} - \sqrt{Y_{(k)}^2 + 2 \sigma^2 \ln\frac{p_1}{p_0}} ;  \nonumber\\
& ~~~~~~~~ \left. Y_{(k-1)} \geq -\sqrt{-2 \sigma^2 \ln\frac{p_1}{p_0}} ; \operatorname{sgn}\left(\sum Y_i\right) = + \right\} \nonumber\\
&+ \Pr\left\{\frac{\sum_{i=k}^n Y_{(i)}}{n-k+1} \geq Y_{(k)} - \sqrt{Y_{(k)}^2 + 2 \sigma^2 \ln\frac{p_1}{p_0}} ; \right. \nonumber\\
& ~~~~~~~~ Y_{(k)} \geq -\sqrt{-2 \sigma^2 \ln\frac{p_1}{p_0}};  \nonumber\\
& ~~~~~~~~\left. Y_{(k-1)} < -\sqrt{-2 \sigma^2 \ln\frac{p_1}{p_0}} ; \operatorname{sgn}\left(\sum Y_i\right) = + \right\}.
\end{align}

The expression for the negative case of $\mathbf{h}_{k}^{-}$ is similar, which is omitted here. So far, all the terms in (\ref{equ:var_hat_X}) have been calculated. Thus, the closed-form variance could be derived. However, this expression is too complicated to make any intuitive observations. In the next section, we analyze the asymptotic performance of the proposed algorithm, which could lead to some compact and intuitive observations.

\section{Asymptotic Analysis}\label{sec:Asym_Analysis}

In the previous section, we studied the mean and variance of the limiting value with the proposed algorithm. In this section, we study the asymptotic performance of the proposed algorithm as $n \rightarrow \infty$. We first review the asymptotic theory of order statistics, then we study the asymptotic result of the given estimator. Afterwards, we show that $\operatorname{Var}(\hat \theta)$ is of the same order as $\operatorname{Var}(\hat \theta_\text{Ideal})$ when $n$ tends to infinity.

In the asymptotic theory of order statistics \cite{David03}, the limiting distributions of appropriately standardized sequences of $k$th order statistics $\{X_{(k)}\}$ as the number of samples $n$ tends infinity are studied. Generally, the order number $k$ can change as a function of $n$.
If $\lim_{n \rightarrow \infty} k/n$ exists between 0 and 1, but not equal to $0$ or $1$, the corresponding order statistics $X_{(k)}$ of the sequence $\{X_{(k)}\}$ are called the central order statistics. Otherwise, they are called the extreme order statistics.

In mathematical statistics, central order statistics are used to construct consistent sequences of estimators for quantiles of the unknown distribution $F(u)$ based on the realization of a random vector $X$. 
For instance, let $x_q$ be a quantile at level $q$, $(0<q<1)$, of the distribution function $F(u)$ with a continuous probability density $f(u)$ and strictly positive in some neighborhood of the point $x_q$. As such, the sequence of central order statistics $\{X_{(k)}\}$ with order numbers $k = \lceil nq \rceil$, where $\lceil \cdot \rceil$ is the ceiling function, is a sequence of consistent estimators for the quantiles $x_q$, as $n \rightarrow \infty$ \cite{David03}.

For a general distribution $F$ with a continuous non-zero density at $F^{-1}(q)$, the $q-$th sample quantile is asymptotically normally distributed as $n$ tends to infinity, and is approximated by
\begin{equation}
\lim_{n \rightarrow \infty} F_{X_{(\lceil nq \rceil)}}(x) = F_{X_{n,q}} (x),\label{equ:X_normal}
\end{equation}
where $X_{n,q}\sim N \left( F^{-1} (q), \frac{q(1-q)}{n[f(F^{-1}(q))]^2} \right)$ \cite{David03}.


In (\ref{equ:hat_theta_k_+}) and (\ref{equ:hat_theta_k_-}), we defined $\hat \theta^{k (\cdot)}$ when $n$ is finite. Next, we derive the limiting value of $\hat \theta^{\lceil nq\rceil(\cdot)}$ when $n \rightarrow \infty$.

\begin{thm}\label{thm:theta_to_int}
If $F_Y$ is a continuous function, for any $0<q<1$ and $\varepsilon>0$, we have
\begin{eqnarray}
\lim_{n \rightarrow \infty} \Pr\left\{\left|\hat \theta^{\lceil nq \rceil +} - \frac {\int_{ F_Y^{-1} (q)}^{+\infty} y f_Y(y) dy}{\int_{ F_Y^{-1} (q)}^{+\infty} f_Y(y) dy} \right | \geq \varepsilon\right\}= 0 , \nonumber\\
\lim_{n \rightarrow \infty} \Pr\left\{\left|\hat \theta^{\lceil nq \rceil -} - \frac {\int^{ F_Y^{-1} (q)}_{-\infty} y f_Y(y) dy}{\int^{ F_Y^{-1} (q)}_{-\infty} f_Y(y) dy} \right| \geq \varepsilon\right\}= 0.\nonumber
\end{eqnarray}
\end{thm}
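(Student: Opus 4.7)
The plan is to prove the first (positive) statement; the negative case follows by a symmetric argument on the lower tail. Recall from \eqref{equ:hat_theta_k_+} that $\hat\theta^{\lceil nq\rceil+} = \sum_{i=\lceil nq\rceil}^n Y_{(i)}/(n-\lceil nq\rceil+1)$. Since $F_Y$ is continuous, ties among the $Y_i$'s occur with probability zero, so almost surely
\[
\sum_{i=\lceil nq\rceil}^n Y_{(i)} \;=\; \sum_{i=1}^n Y_i \, \mathbf{1}_{\{Y_i \,\geq\, Y_{(\lceil nq\rceil)}\}}.
\]
The denominator satisfies $(n-\lceil nq\rceil+1)/n \to 1-q = \int_{F_Y^{-1}(q)}^{\infty} f_Y(y)\,dy$ deterministically, so by Slutsky's theorem it suffices to show
\[
S_n \;:=\; \frac{1}{n}\sum_{i=1}^n Y_i \, \mathbf{1}_{\{Y_i \,\geq\, Y_{(\lceil nq\rceil)}\}} \;\xrightarrow{\text{p}}\; \int_{F_Y^{-1}(q)}^{\infty} y f_Y(y)\,dy.
\]

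The argument rests on three ingredients. First, the asymptotic normality in \eqref{equ:X_normal} immediately yields $Y_{(\lceil nq\rceil)} \xrightarrow{\text{p}} F_Y^{-1}(q)$. Second, for the deterministic threshold $t = F_Y^{-1}(q)$, the weak law of large numbers (which applies because the Gaussian-mixture $Y$ has finite first and second moments) gives
\[
\frac{1}{n}\sum_{i=1}^n Y_i \, \mathbf{1}_{\{Y_i \,\geq\, t\}} \;\xrightarrow{\text{p}}\; \operatorname{E}\bigl[Y\,\mathbf{1}_{\{Y \geq t\}}\bigr] \;=\; \int_t^{\infty} y f_Y(y)\,dy.
\]
Third, and crucially, one must bridge between the random-threshold sum $S_n$ and the fixed-threshold version, namely show
\[
E_n \;:=\; \frac{1}{n}\sum_{i=1}^n Y_i \Bigl[ \mathbf{1}_{\{Y_i \,\geq\, Y_{(\lceil nq\rceil)}\}} - \mathbf{1}_{\{Y_i \,\geq\, F_Y^{-1}(q)\}} \Bigr] \;\xrightarrow{\text{p}}\; 0.
\]

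The bridging step is the main obstacle, since a direct WLLN does not apply when the threshold is itself data-dependent. I would handle it by an $\eta$-sandwich argument. Fix $\eta>0$ and condition on the event $A_{n,\eta} = \{|Y_{(\lceil nq\rceil)} - F_Y^{-1}(q)| \leq \eta\}$. On $A_{n,\eta}$, the symmetric difference of the two indicators is supported on $\{|Y_i - F_Y^{-1}(q)| \leq \eta\}$, so
\[
|E_n|\,\mathbf{1}_{A_{n,\eta}} \;\leq\; \frac{1}{n}\sum_{i=1}^n |Y_i|\, \mathbf{1}_{\{|Y_i - F_Y^{-1}(q)| \,\leq\, \eta\}}.
\]
By the WLLN the right-hand side converges in probability to $\operatorname{E}\bigl[|Y|\,\mathbf{1}_{\{|Y - F_Y^{-1}(q)| \leq \eta\}}\bigr]$, which can be made arbitrarily small by taking $\eta$ small (dominated convergence, using continuity of $f_Y$). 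Since $\Pr(A_{n,\eta}^c)\to 0$ by the first ingredient, a standard $\varepsilon$--$\eta$ chase produces $E_n \xrightarrow{\text{p}} 0$. Combining the three ingredients gives the desired convergence of $S_n$, and a final application of Slutsky's theorem with the deterministic limit of the denominator completes the proof.
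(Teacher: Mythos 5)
Your proposal is correct and follows essentially the same route as the paper's proof: rewrite the trimmed mean $\hat\theta^{\lceil nq\rceil+}$ as an indicator sum with the random threshold $Y_{(\lceil nq\rceil)}$, use the consistency of the central order statistic from \eqref{equ:X_normal} to replace that threshold by $F_Y^{-1}(q)$, and then apply a law of large numbers to the truncated sum to obtain the ratio of integrals. The only substantive difference is that your $\eta$-sandwich argument for the random-to-deterministic threshold step is carried out in full, whereas the paper asserts the corresponding claim \eqref{equ:sum_Y_to_sum_Y} with only the remark that the $Y_j$'s are i.i.d.; your version supplies the detail the paper leaves implicit.
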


\begin{proof} Here, we prove the positive part, while the proof of the negative part is similar.
By definition, the cdf of $\hat \theta^{\lceil nq \rceil +}$ can be expressed as
\begin{eqnarray}
F_{\hat \theta^{\lceil nq \rceil+}}(r) = \Pr\left\{\frac{\sum_{i=\lceil nq \rceil}^n Y_{(i)}}{n-\lceil nq \rceil+1} < r \right\}.
\end{eqnarray}
Since $\{Y_{(i)}\}$ is the ordered version of $\{Y_{i}\}$, we have
\begin{eqnarray}
\Pr\left\{  \frac{\sum_{i=\lceil nq \rceil}^n Y_{(i)}}{n-\lceil nq \rceil+1} < r  \right\} = \Pr\left\{ \frac{\sum_{j \in \Omega_{n,q}} Y_{j}}{n- \lceil nq \rceil +1} <r   \right\},
\end{eqnarray}
where $\Omega_{n,q} = \{j: Y_{j} \geq Y_{(\lceil nq \rceil)}, j \in \{1,2,...,n\}\}$.

By (\ref{equ:X_normal}), we have $\lim_{n \rightarrow \infty} F_{Y_{(\lceil nq \rceil)}} (y) = F_{Y_{n,q}} (y)$, where $Y_{n,q} \sim N \left( F^{-1} (q), \frac{q(1-q)}{n[f(F^{-1}(q))]^2} \right)$. Thus, we have
\begin{eqnarray}
\lim_{n \rightarrow \infty} \Pr\{ |Y_{(\lceil nq \rceil)} - F^{-1}(q)|\geq \varepsilon \} = 0.
\end{eqnarray}
Since $Y_j$'s are i.i.d. random variables, we have
\begin{eqnarray}
&\lim_{n \rightarrow \infty} \Pr\left\{ \left|\Pr\left\{ \frac{\sum_{j \in \Omega_{n,q}} Y_{j}}{n- \lceil nq \rceil +1} <r   \right\} \right. \right.\nonumber\\
&\left.\left.- \Pr\left\{ \frac{\sum_{j \in \Omega_{q}} Y_{j}}{n- \lceil nq \rceil +1} <r  \right\} \right| \geq \varepsilon \right\} = 0,\label{equ:sum_Y_to_sum_Y}
\end{eqnarray}
where $\Omega_{q} = \{j: Y_{j} \geq F^{-1} (q), j \in \{1,2,...,n\}\}$.

Also, since we only consider the random variables $Y_j$'s where the index is in $\Omega_{q}$, the cdf of $Y_j$ can be derived from the cdf of $Y$ with a normalization factor $\int_{ F_Y^{-1} (q)}^{+\infty} f_Y(y) dy$ as
\begin{eqnarray}
F_{Y_j; Y_j > F_Y^{-1} (q)}(r) =  \frac{f_Y(r)} {\int_{ F_Y^{-1} (q)}^{+\infty} f_Y(y) dy},~~ r > F_Y^{-1} (q).
\end{eqnarray}
Thus we have
\begin{eqnarray}
\lim_{n \rightarrow \infty} \frac{\sum_{j \in \Omega_{q}} Y_{j}}{n - \lceil nq \rceil + 1} &=& \operatorname{E}(Y_j | j \in \Omega_{q})\nonumber\\
&=& \int_{ F_Y^{-1} (q)}^{+\infty} r \frac { f_Y(r)}{\int_{ F_Y^{-1} (q)}^{+\infty} f_Y(y) dy}dr \nonumber\\
&=& \frac {\int_{ F_Y^{-1} (q)}^\infty y f_Y(y) dy}{\int_{ F_Y^{-1} (q)}^{+\infty} f_Y(y) dy}, \label{equ:sum_r_to_int_f}
\end{eqnarray}
which is a constant.

Combining the results in (\ref{equ:sum_Y_to_sum_Y}) and (\ref{equ:sum_r_to_int_f}), together with the definition of cdf,  we obtain the desired result.
\end{proof}
Next, we prove Theorem \ref{thm:var_theta->Ideal}.
\begin{proof}
Without loss of generality, for the $n$ given observations of a positive $\theta$ (for the case of negative $\theta$, the proof is similar), we denote the $k$ invalid observations as $Y_1, Y_2, ..., Y_k$, with $Y_j \sim N(0, \sigma^2)$, $j \in \{1, ..., k\}$,  and the $n-k$ valid observations as $Y_{k+1}, Y_{k+2}, ... , Y_{n}$, with $Y_j \sim N(\theta, \sigma^2)$, $j \in \{k+1, ..., n\}$.

Conditioned on $\mathbf{h}$, the variance of $\hat \theta$ can be derived as
\begin{eqnarray}
\operatorname{Var}(\hat \theta) = \operatorname{E}(\operatorname{Var}(\hat \theta \mid \mathbf{h}))+\operatorname{Var}(\operatorname{E}(\hat \theta\mid \mathbf{h})). \label{equ:var_hat_X_n}
\end{eqnarray}

According to \eqref{eq_var}, the first term on the right-hand side of (\ref{equ:var_hat_X_n}) can be expressed as,
\begin{align}
\operatorname{E}(\operatorname{Var}(\hat \theta \mid \mathbf{h}))=&\sum_{i=1}^{n} \operatorname{Var}(\hat \theta^{i+} ) \Pr\{\mathbf{h}= \mathbf{h}_i^+\} \nonumber\\
&+ \sum_{i=1}^{n} \operatorname{Var}(\hat \theta^{i-} ) \Pr\{\mathbf{h}= \mathbf{h}_i^-\},\label{equ:e_var_theta_h}
\end{align}
where
\begin{eqnarray}
\operatorname{Var}(\hat \theta^{i+}) = \frac{\sum_{j=i}^{n} \sigma^2_{(j)}}{(n-i+1)^2},\label{var+}\\
\operatorname{Var}(\hat \theta^{i-}) = \frac{\sum_{j=1}^{n-i+1} \sigma^2_{(j)}}{(n-i+1)^2},\label{var-}
\end{eqnarray}
with $\sigma^2_{(j)}$ as the variance of $Y_{(j)}$, which converges to $\frac{\frac{j}{n}(1-\frac{j}{n})}{n[f (F ^{-1}(\frac{j}{n}))]^2}$ when $n$ goes to infinity by (\ref{equ:X_normal}).

 According to \eqref{equ:Pr_ak_is_true_1}, $\Pr\{\mathbf{h}= \mathbf{h}_i^+\}$ is exponentially decreasing over $\mbox{SNR}$ when $i \neq k$, due to the Gaussian assumption. Similarly, we also have that $\Pr\{\mathbf{h}= \mathbf{h}_i^-\}$ is exponentially decreasing over $\mbox{SNR}$. By combining \eqref{var+} and \eqref{var-} with \eqref{equ:X_normal}, we have that the linear rate of $\operatorname{Var}(\hat \theta^{i (\cdot)})$ changing over $\mbox{SNR}$ is lower than the exponential rate of $\Pr\{\mathbf{h}= \mathbf{h}_i^{(\cdot)}\}$ decreasing over $\mbox{SNR}$ when $\mathbf{h} \neq \mathbf{h}_k^+$. Thus, only the terms with $\Pr\{\mathbf{h}= \mathbf{h}_k^+\}$ are left in (\ref{equ:e_var_theta_h}) as $\mbox{SNR} \rightarrow \infty$ and we have $\operatorname{E}(\operatorname{Var}(\hat \theta \mid \mathbf{h})) \rightarrow \operatorname{E}(\operatorname{Var}(\hat \theta_{\mbox{\scriptsize Ideal}} \mid \mathbf{h}))$ almost surely by the definition of $\hat \theta_{\mbox{\scriptsize Ideal}}$.

The second term on the right-hand side of (\ref{equ:var_hat_X_n}) can be expressed as
\begin{align}
\operatorname{Var}(\operatorname{E}(\hat \theta\mid \mathbf{h}))&= \operatorname{E} \big((\operatorname{E}(\hat \theta\mid \mathbf{h}))^2\big) - \operatorname{E}^2(\operatorname{E}(\hat \theta\mid \mathbf{h})) \nonumber\\
&=  \sum_{i=1}^n \operatorname{E}^2(\hat \theta^{i+} \mid \mathbf{h} = \mathbf{h}_i^+) \Pr \{\mathbf{h}=\mathbf{h}_i^+\} \nonumber\\
&~~~+ \sum_{i=1}^n \operatorname{E}^2(\hat \theta^{i-} \mid \mathbf{h} = \mathbf{h}_i^-) \Pr \{\mathbf{h}=\mathbf{h}_i^-\} \nonumber\\ &  - \left[\sum_{i=1}^n \operatorname{E}(\hat \theta^{i+} \mid \mathbf{h} = \mathbf{h}_i^+) \Pr \{\mathbf{h}=\mathbf{h}_i^+\} \right. \nonumber\\
&~~~\left.+ \sum_{i=1}^n \operatorname{E}(\hat \theta^{i-} \mid \mathbf{h} = \mathbf{h}_i^-) \Pr \{\mathbf{h}=\mathbf{h}_i^-\} \right]^2\label{equ:var_e_theta_h}
\end{align}
where
\begin{align}
\operatorname{E}(\hat \theta^{i+}\mid \mathbf{h}=\mathbf{h}_i^+ ) &=\frac{\sum_{j=i}^n \mu_{(j)}}{n-i+1},\label{mean+}\\
\operatorname{E}(\hat \theta^{i-} \mid \mathbf{h}=\mathbf{h}_i^-) &=\frac{\sum_{j=1}^{n-i+1} \mu_{(j)}}{n-i+1}.\label{mean-}
\end{align}
with $\mu_{(j)}$ as the mean of $Y_{(j)}$, which converges to $F^{-1}\left(\frac{j}{n}\right)$ when $n$ goes to infinity by (\ref{equ:X_normal}).

According to \eqref{equ:Pr_ak_is_true_1}, $\Pr\{\mathbf{h}= \mathbf{h}_i^{(\cdot)}\}$ is exponentially decreasing over $\mbox{SNR}$ when $\mathbf{h} \neq \mathbf{h}_k^+$, due to the Gaussian assumption. By combining \eqref{mean+} and \eqref{mean-} with \eqref{equ:X_normal}, we have that the linear rate of $\operatorname{E}(\hat \theta^{i(\cdot)} \mid \mathbf{h} = \mathbf{h}_i^{(\cdot)})$ changing over $\mbox{SNR}$ is lower than the exponential rate of $\Pr\{\mathbf{h}= \mathbf{h}_i^{(\cdot)}\}$ decreasing over $\mbox{SNR}$ when $\mathbf{h} \neq \mathbf{h}_k^+$. Thus, only the terms with $\Pr\{\mathbf{h}= \mathbf{h}_k^+\}$ are left in (\ref{equ:var_e_theta_h}) as $\mbox{SNR} \rightarrow \infty$ and we have $\operatorname{Var}(\operatorname{E}(\hat \theta\mid \mathbf{h})) \rightarrow \operatorname{Var}(\operatorname{E}(\hat \theta_{\mbox{\scriptsize Ideal}}\mid \mathbf{h}))$ almost surely by the definition of $\hat \theta_{\mbox{\scriptsize Ideal}}$.

Combining the results in the above two parts, we have $\operatorname{Var}(\hat \theta) \rightarrow \operatorname{Var}(\hat \theta_{\mbox{\scriptsize Ideal}})$ almost surely.
\end{proof}

\section{Simulation Results}\label{sec:Simulation_and_Discussion}
\begin{figure}
\vspace{-10pt}
\begin{center}
  \includegraphics[width=8cm]{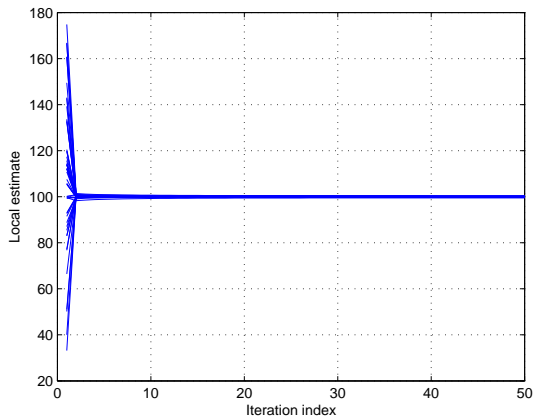}\\
  \vspace{-10pt}
  \caption{The convergence of the MDE algorithm, $\theta=100$.}
  \label{fig:Convergence}
\end{center}
\vspace{-15pt}
\end{figure}

\begin{figure}
\vspace{-2pt}
\begin{center}
  \includegraphics[width=8cm]{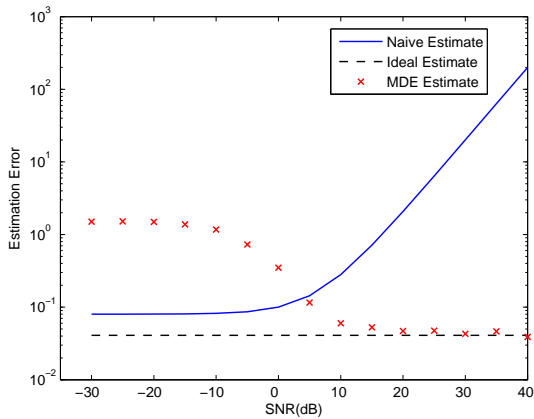}\\
  \vspace{-10pt}
  \caption{The performance comparison among the MDE algorithm, the naive averaging algorithm, and the ideal estimate.}
  \label{fig:Algorithm_Comparison}
\end{center}
\vspace{-15pt}
\end{figure}
In this section, we present simulation results that demonstrate the estimation performance of the proposed MDE algorithm. In our network setting, 50 nodes are uniformly distributed over a unit square where two nodes are connected by an edge if their distance is less than 0.3, which is the predefined transmission range. In addition, $h_i$'s are independently generated with $p_1=0.5$, $w_i$'s are independent white Gaussian noises with zero mean and unit variance, and the other parameter values are specified in the description of each figure.

In Fig. \ref{fig:Convergence}, we demonstrate the convergence (Theorem~\ref{thm:hat_x_t}) of the proposed algorithm. Realizations of the local estimates at the 50 nodes over 50 rounds of iterations, i.e., $\hat \theta_i(t), i \in [1,\cdots,50],t \in [1,\cdots,50]$, are plotted. The target $\theta$ is $100$, which implies $\text{SNR}=40$dB. In the figure, about half of the nodes start around the value 100 and the rest start around 0, indicating that the former ones correspond to valid observations and the latter ones are the nodes with invalid observations. We observe that the local estimates of both types of nodes converge as the number of iteration increases.

In Fig. \ref{fig:Algorithm_Comparison}, we compare the performance of the proposed MDE algorithm with the naive averaging algorithm \eqref{equ:x_naive} and the ideal algorithm \eqref{equ:theta_ideal} discussed in Section~\ref{sec:Network_Model}. In the figure, the estimation error of these three estimates are plotted with SNR ranging from -30 dB to 40 dB. For each SNR, we generate 500 runs of the MDE algorithm, with the limiting consensus value of the local estimate for each realization being taken to be the estimate in the first node at the end of the 3000-th iteration. The estimation error plotted in the figure is the average squared deviation of the limiting consensus value from the true value of $\theta$ over these 500 realizations, i.e., $(\sum (\hat \theta_1(3000)- \theta)^2)/500$. The topology of the communication graph (given by the random node placement) and the observation values across the nodes are independently generated for each realization. We make several observations from this figure. First, the numerical result of the naive averaging algorithm \eqref{equ:x_naive} matches the theoretical results as derived in (\ref{equ:var_x_naive}), i.e., the estimation error variance grows exponentially over SNR; second, the numerical result of the ideal algorithm \eqref{equ:theta_ideal} matches the theoretical results as derived in (\ref{equ:var_ideal}), where the estimation error is the lowest among the three algorithms; and third, although the estimation error of MDE is higher than that of the naive averaging in the lower SNR regime (SNR$<$5dB), it performs much better in the mid and high SNR regimes (SNR$>$20dB), where it approaches the performance of the ideal estimator.

In the following we provide some intuitive explanation of the observed simulation behavior:
1) In the low SNR regime, the target value is relatively small as compared with the Gaussian noise, which leads to a high detection error in \eqref{equ:detection1} and \eqref{equ:detection}. Some invalid observations are wrongly detected as valid ones and negatively incorporated into the estimate update process, whereas, some valid observations are discarded as invalid ones. Thus, the estimate is largely distorted from the ideal estimate, which leads to the poor estimation performance;
2) in the high SNR regime, the detection error in \eqref{equ:detection1} and (\ref{equ:detection}) is very small and almost every observation is correctly detected as valid or invalid. Therefore, the MDE estimate is quite close to the ideal estimate and the MSE of the MDE algorithm approaches the lower bound (i.e., that achieved by the ideal algorithm).
\vspace{-7pt}
\section{Conclusions}\label{sec:Conclusion}
We studied an algorithm named MDE, for distributed estimation of a scalar target signal with imperfect sensing mode information (due to node defects) in a sensor network. For the proposed algorithm, an online learning step assesses the validity of the local observations at each iteration, and then refines the ongoing estimation update process in an iterative fashion. We analytically established the convergence of the MDE algorithm. From the asymptotic results of the performance analysis, we have shown that in the high SNR regime, as the number of nodes goes to infinity, the MDE estimation error converges to that of an ideal estimator with perfect information about the node sensing modes. 



\appendices

\section{Variance of Ideal Estimator in \eqref{equ:var_ideal}} \label{Var_ideal}


We have the first entry of the conditional variance calculated as
\begin{align}
&\operatorname{E} (\operatorname{Var}(\hat \theta_\text{Ideal}^n|\textbf{h})) =\sum_{\textbf{h}} \operatorname{Var}(\hat \theta_\text{Ideal}^n|\textbf{h}) p(\textbf{h})\nonumber\\
                           &= \sum_{k=0}^n \operatorname{Var}\left(\frac{\sum_{i:h_i=1} y_i}{k}\right) \Pr\left\{\sum h_i =k\right\}\nonumber\\
                           &= \sum_{k=0}^n \operatorname{Var}\left(\theta +\frac{\sum_{i:h_i=1} w_i}{k}\right) \Pr\left\{\sum h_i =k\right\}\nonumber\\
                           &= \sigma^2 \cdot \sum_{k=0}^n \frac{1}{k} \binom{n}{k} p_1^k p_0^{n-k}.\nonumber
\end{align}
We have the second entry calculated as
\begin{equation}
\operatorname{Var}(\operatorname{E}(\hat \theta_\text{Ideal}^n|\textbf{h}))= \operatorname{Var}_{\textbf{h}}(\theta)= 0,\nonumber
 \end{equation}
where $\textbf{h}$ is given, and $\operatorname{E}(\hat \theta_\text{Ideal}^n | \textbf{h})= \operatorname{E}\left(\frac{\sum h_i y_i}{\sum h_i} | \textbf{h}\right) = \theta$, which is a constant independent with $\textbf{h}$. Thus we have derived the variance of ideal estimator shown in \eqref{equ:var_ideal}.

%
\section{Lemmas Used in Section~\ref{sec:Proof_Con} to Prove Theorem~\ref{thm:hat_x_t}} \label{App_approach}

\begin{lem}\label{lem:motion_case_1}
If the moving status of $\max\{\hat \theta_{current}^+(t),0\}$ is in Case 1 with $t= t_1$, then $\max\{\hat \theta_{current}^+(t),0\}$ either converges without leaving Case 1 for all $t>t_1$, or the moving status of $\max\{\hat \theta_{current}^+(t),0\}$ changes to Case 2 after $\tilde{t}_1 >t_1$.
\end{lem}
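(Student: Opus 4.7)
My plan is to exploit the fact that on a single partition interval $\mathcal{I}_j$ from Subsection~\ref{subsec:partition}, the detection decisions are frozen. First I would observe that every endpoint of $\mathcal{I}_j$ is some $r_i^\pm$, so for each node $i$, either $\mathcal{I}_j\subseteq\mathcal{D}_i$ or $\mathcal{I}_j\cap\mathcal{D}_i=\emptyset$. Consequently, as long as $\mathcal{G}^+(t)\subseteq\mathcal{I}_j$ (which forces every $\hat\theta_i^+(t)\in\mathcal{I}_j$), the detection vector $\mathbf{\hat h}^+(t)$ equals a fixed vector $\mathbf{\hat h}^{+,j}$ independent of $t$.

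Next I would split the argument according to whether $\mathcal{G}^+(t)$ remains inside $\mathcal{I}_j$ for all $t\geq t_1$. Under this first alternative, equations \eqref{equ:u_i_t_+}--\eqref{equ:v_i_t_+} degenerate into consensus+innovations recursions driven by the constant innovations $\{y_i\hat h_i^{+,j}\}_i$ and $\{\hat h_i^{+,j}\}_i$. Averaging over $i$ as in \eqref{equ:y_avg_t}, I would obtain the scalar iteration $\bar u^+(t)=(1-\alpha(t))\bar u^+(t-1)+\alpha(t)\bar r^u$ with $\bar r^u=\frac{1}{n}\sum_i y_i\hat h_i^{+,j}$, and analogously for $\bar v^+(t)$ with limit $\bar r^v=\frac{1}{n}\sum_i\hat h_i^{+,j}$. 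The telescoping-product estimate already used for Lemma~\ref{lem:y_avg_upper_bounded} then bounds $|\bar u^+(t)-\bar r^u|$ above by $\exp\bigl(-\sum_{s=t_1}^t\alpha(s+1)\bigr)\,|\bar u^+(t_1)-\bar r^u|$, which vanishes because $\sum\alpha(s)=\infty$. Since $\bar v^+(t)+\delta$ is bounded away from zero, the ratio $\hat\theta_{current}^+(t)=\bar u^+(t)/(\bar v^+(t)+\delta)$ converges, and hence so does $\max\{\hat\theta_{current}^+(t),0\}$, giving the first alternative of the lemma.

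For the second alternative, I would suppose instead that $\mathcal{G}^+(t)\not\subseteq\mathcal{I}_j$ for some $t>t_1$, and let $\tilde t_1$ denote the first such time. I would then invoke the smooth-moving condition of Subsection~\ref{subsec:smooth_moving_condition}: the center of $\mathcal{G}^+(\cdot)$ shifts by less than $\varepsilon$ between consecutive iterations, while $3\varepsilon$ is strictly smaller than every gap between consecutive boundaries $b_m$. These two facts together imply that $\mathcal{G}^+(\tilde t_1)$ cannot have jumped over a boundary without first catching it, so $\mathcal{G}^+(\tilde t_1)$ must contain exactly one $b_k$, which is precisely the defining property of Case~2.

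The main obstacle I expect is the fixed-input convergence in the first alternative, because the innovations become constant only from time $t_1$ onward rather than from $t=0$. This forces a careful re-initialisation of the telescoping-product estimate at $t_1$, with the entire history before $t_1$ absorbed into the single bounded term $|\bar u^+(t_1)-\bar r^u|$; this is essentially the same manoeuvre used in Lemma~\ref{lem:y_avg_upper_bounded} and should go through without new difficulty. The remaining verification that $\max\{\cdot,0\}$ does not spoil the convergence is immediate from continuity of the clip.
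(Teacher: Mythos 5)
Your proof is correct, but the convergence half runs on a different engine from the paper's. The paper establishes the first alternative by showing that $\hat \theta_{current}^+(t)$ is monotone and bounded: it first freezes the detection vector on $\mathcal{I}_j$ (exactly as you do), then proves the sign identity $\big(\hat \theta_{goal}^+[j]-\hat \theta_{current}^+(t+1)\big)\big(\hat \theta_{current}^+(t+1)-\hat \theta_{current}^+(t)\big)>0$ by an explicit algebraic computation (the $\Upsilon^2$ term in \eqref{equ:alpha_t_1-_alpha_t}), and gets boundedness from Lemma \ref{lem:y_avg_upper_bounded}. You instead pass directly to the averaged recursions with frozen innovations and contract them to their fixed points via the telescoping product $\prod(1-\alpha(s))\leq e^{-\sum\alpha(s)}\rightarrow 0$; this is precisely the computation the paper defers to Lemma \ref{lem:case_1_con}, where it is used only to identify the limiting value after convergence has already been secured. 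Your route is shorter and delivers the limit for free, so it proves the lemma as stated more economically (the re-initialisation at $t_1$ that you flag as the main obstacle is indeed harmless, since $\bar u^+(t_1)$ is bounded by Lemma \ref{lem:y_avg_upper_bounded} and $\bar v^+(t)+\delta\geq\delta>0$ by nonnegativity of the $\hat h_i^+$'s). What your version does not deliver is the monotonicity of $\hat \theta_{current}^+(t)$ within Case 1, which the paper explicitly reuses in Lemma \ref{lem:swithcing_case_1_case_2} to bound the number of Case-1/Case-2 switches; if one adopts your argument here, that monotonicity has to be established separately there. Your treatment of the second alternative --- first exit time from $\mathcal{I}_j$ together with the smooth-moving condition forcing a boundary into the open gathering region --- is the same argument the paper uses, merely stated in the contrapositive direction (the paper assumes Case 1 persists and deduces that $\mathcal{G}^+(t)$ never leaves $\mathcal{I}_j$).
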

\begin{proof}
If there is a time $\tilde{t}_1$, $\tilde{t}_1 > t_1$, such that the moving status of $\max\{\hat \theta_{current}^+(t),0\}$ changes to Case 2, we have the desired result. Otherwise, for all $t>t_1$, we have that the moving status of $\max\{\hat \theta_{current}^+(t),0\}$ stays in Case 1. In order to show the convergence, we only need to show that $\hat \theta_{current}^+(t)$ is a monotonic and bounded sequence. In this proof, we first prove that $\hat \theta_{goal}^+(t)$ converges when $t>t_1$. After that, we show the monotonicity of $\hat \theta_{current}^+(t)$. At last, we prove that $\hat \theta_{current}^+(t)$ is a bounded sequence for $t>t_1$.

Here we first prove that $\hat \theta_{goal}^+(t)$ converges when $t>t_1$. Since the moving status of $\max\{\hat \theta_{current}^+(t),0\}$ stays in Case 1 for all $t>t_1$, and $\mathcal{G}^+(t)$ cannot jump to a different interval without touching any boundary by the smooth moving condition, we have that $\mathcal{G}^+(t)$ belongs to $\mathcal{I}_j$ for all $t>t_1$. By the definition of $\mathcal{G}^+(t)$, we have $\hat \theta_i^+ (t) \in \mathcal{G}^+(t)$, $\forall i$. Since $\mathcal{G}^+(t)$ belongs to the same $\mathcal{I}_j$ for all $t>t_1$, the inclusion relationship of $\mathcal{G}^+(t)$ and $\mathcal{D}_i$'s do not change for all $t>t_1$. In other words, the detection results of $\hat h_i^+(t)$'s stay the same for all $t>t_1$. Specifically, if we replace $\hat \theta_i^+ (t)$ with an arbitrary $x_j$, $\forall x_j \in \mathcal{I}_j$, the detection result of $\hat h_i^+(t)$ does not change in the detection step (step 2) for any $i$, i.e.,
\begin{eqnarray}\label{equ:x_j_detection_replacement}
x_j^2 - 2 y_i x_j \overset{\hat h_i^+(t)=0}{\underset{\hat h_i^+(t)=1}\gtrless} 2 \sigma^2 \ln \frac{p_1}{p_0}, \forall i, t>t_1.
\end{eqnarray}
 Therefore, we have that $\hat h_i(t)$'s converge. Thus, we conclude that $\hat \theta_{goal}(t)$ converges by the definition. Meanwhile, in the proof, $\hat h_i^+(t)$'s and $\hat \theta_{goal}^+(t)$ are only related to the index of $\mathcal{I}_j$ covering $\mathcal{G}^+(t)$. Thus, we define that $\hat h_i^+(t) = \hat h_i^+[j]$ and $\hat \theta_{goal}^+(t)=\hat \theta_{goal}^+[j]$ by using $j$, the index of $\mathcal{I}_j$.

Next, we show the monotonicity of $\hat \theta_{current}^+(t)$. To this end, we want to prove that
\begin{align}
&\big(\hat \theta_{goal}^+[j] - \hat \theta_{current}^+(t+1)\big) \nonumber\\
&\big(\hat \theta_{current}^+(t+1) - \hat \theta_{current}^+(t)\big) > 0, t>t_1. \label{equ:theta_mono}
\end{align}

By taking average on both sides of (\ref{equ:u_i_t_+}), we have
\begin{eqnarray}
\bar u^+(t) &=& \bar u^+(t-1) + \alpha(t) \big( \{y_i \hat h_i^{+}(t)\}_{avg} - \bar u^{+}(t-1) \big)\nonumber\\
&=& (1-\alpha(t)) \bar u^+(t-1) + \alpha(t) \{y_i \hat h_i^{+}(t)\}_{avg}.
\end{eqnarray}
Similarly, by taking average on both sides of (\ref{equ:v_i_t_+}), we have
\begin{eqnarray}
\bar v^+(t) &=& (1-\alpha(t)) \bar v^+(t-1) + \alpha(t) \{ \hat h_i^{+}(t)\}_{avg},
\end{eqnarray}
which is a positive sequence.

Thus, we have
\begin{align}
 &\big(\hat \theta_{goal}^+[j] - \hat \theta_{current}^+(t+1)\big) \big(\hat \theta_{current}^+(t+1) - \hat \theta_{current}^+(t)\big) \nonumber\\
&= \left( \frac{\{y_i \hat h_i^+[j]\}_{avg}}{\{\hat h_i^+[j]\}_{avg}+\delta} - \frac{\bar u^+(t)}{\bar v^+(t) + \delta} \right)\nonumber\\
&~~~~ \left( \frac{\bar u^+(t)}{\bar v^+(t) + \delta} - \frac{\bar u^+(t-1)}{\bar v^+(t-1) + \delta} \right) \nonumber\\
&= \frac{\alpha(t)(1-\alpha(t))\Upsilon^2}{(\bar v^+(t) + \delta)((1-\alpha(t)) \bar v^+(t-1) + \alpha(t) \{ \hat h_i^{+}[j]\}_{avg} + \delta)^2} \nonumber\\
&~~~~\frac{1}{(\{\hat h_i^+[j]\}_{avg}+\delta)}\label{equ:alpha_t_1-_alpha_t}
\end{align}
where $\Upsilon = \{y_i \hat h_i^+[j]\}_{avg}(\bar v^+(t-1) + \delta) - \bar u^+(t-1) (\{ \hat h_i^{+}[j]\}_{avg} + \delta)$. Note that all of the elements multiplied together in (\ref{equ:alpha_t_1-_alpha_t}) are positive. 

At last, we prove that $\hat \theta_{current}^+(t)$ is a bounded sequence for $t>t_1$. Since both $\bar u^+(t)$ and $\bar v^+(t)$ are bounded by Lemma \ref{lem:y_avg_upper_bounded} and both $\bar v^+(t)$ and $\delta$ are positive, we conclude that $\hat \theta_{current}^+(t) = \frac{\bar u^+(t)}{\bar v^+(t) + \delta}$ is a bounded sequence.
\end{proof}

\begin{lem}\label{lem:motion_case_2}
If the moving status of $\max\{\hat \theta_{current}^+(t),0\}$ is in Case 2 when $t= t_2$, $\max\{\hat \theta_{current}^+(t),0\}$ either converges without leaving Case 2 for all $t>t_2$, or $\exists \tilde{t}_2,~\tilde{t}_2 >t_2$, such that the moving status of $\max\{\hat \theta_{current}^+(t),0\}$ changes to Case 1 from $\tilde{t}_2$.
\end{lem}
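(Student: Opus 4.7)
The plan is to follow the skeleton of the proof of Lemma \ref{lem:motion_case_1}, modified to handle the single detection coordinate that can still toggle while $\mathcal{G}^+(t)$ straddles a boundary. Assume the moving status stays in Case 2 for every $t>t_2$ (otherwise the statement is immediate). By the smooth moving condition, $\mathcal{G}^+(t)$ contains the single boundary $b_k$, and because the boundaries $b_1,\dots,b_M$ are distinct, $b_k$ coincides with $r_{i^*}^-$ or $r_{i^*}^+$ for a unique sensor $i^*$. For every other sensor $j\neq i^*$, $\mathcal{G}^+(t)$ sits inside a single side of $\mathcal{D}_j$, so $\hat h_j^+(t)$ is frozen at some value $\tilde h_j$; only $\hat h_{i^*}^+(t)$ can flip, depending on whether $\hat\theta_{i^*}^+(t)$ lies above or below $b_k$.

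Writing $A=n^{-1}\sum_{j\neq i^*}y_j\tilde h_j$ and $B=n^{-1}\sum_{j\neq i^*}\tilde h_j$, the averaged updates from Lemma \ref{lem:motion_case_1} reduce to
\begin{align*}
\bar u^+(t) &= (1-\alpha(t))\bar u^+(t-1)+\alpha(t)\bigl(A+\tfrac{y_{i^*}}{n}\hat h_{i^*}^+(t)\bigr),\\
\bar v^+(t) &= (1-\alpha(t))\bar v^+(t-1)+\alpha(t)\bigl(B+\tfrac{1}{n}\hat h_{i^*}^+(t)\bigr),
\end{align*}
so $\hat\theta_{current}^+(t)=\bar u^+(t)/(\bar v^+(t)+\delta)$ is pulled toward one of two candidate goals $\hat\theta_{goal}^+[\ell]:=(A+\ell y_{i^*}/n)/(B+\ell/n+\delta)$, $\ell\in\{0,1\}$, with a step of size $O(\alpha(t))$.

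I then split on the long-run behavior of $\{\hat h_{i^*}^+(t)\}$. If $\hat h_{i^*}^+(t)$ eventually freezes at some $\ell$, the problem collapses exactly to the setting of Lemma \ref{lem:motion_case_1} with constant target $\hat\theta_{goal}^+[\ell]$, so the monotonicity-plus-boundedness argument there yields convergence of $\hat\theta_{current}^+(t)$. If the limit lies at strictly positive distance from $b_k$, then past some finite time $|\hat\theta_{current}^+(t)-b_k|>\varepsilon$ and $\mathcal{G}^+(t)$ no longer contains $b_k$, forcing a switch to Case 1 at some $\tilde t_2>t_2$; otherwise the limit equals $b_k$ and convergence holds inside Case 2. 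If instead $\hat h_{i^*}^+(t)$ flips infinitely often, then $\hat\theta_{i^*}^+(t)$ crosses $b_k$ infinitely often; combining Lemma \ref{lem:x_y_avg_v_avg_0} (which forces $\hat\theta_{i^*}^+(t)-\hat\theta_{current}^+(t)\to 0$) with Lemma \ref{lem:y_avg_v_avg_0} (vanishing successive increments), the switch-time values of $\hat\theta_{current}^+(t)$ are pinned to $b_k$, and the monotone excursions between flips are $O(\alpha(t))$, sandwiching the full sequence to $b_k$.

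The main obstacle is this second sub-case: subsequential convergence of the switch times to $b_k$ is fairly direct, but lifting it to full-sequence convergence requires controlling the monotone excursions on each interval between consecutive flips. I would invoke the monotonicity identity derived in Lemma \ref{lem:motion_case_1} on each maximal stretch where $\hat h_{i^*}^+$ is locally constant, bound the excursion size by a multiple of $\alpha(t)$ via the same expression that produced \eqref{equ:alpha_t_1-_alpha_t}, and then use $\alpha(t)\to 0$ together with Lemma \ref{lem:y_avg_v_avg_0} to conclude that any limit point of $\hat\theta_{current}^+(t)$ must equal $b_k$, delivering convergence within Case 2.
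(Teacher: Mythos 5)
Your proposal is correct in outline but takes a genuinely different, and much heavier, route than the paper, which disposes of this lemma in two sentences: if the status ever returns to Case 1 the claim is immediate; otherwise $\max\{\hat\theta_{current}^+(t),0\}$ and $b_k$ both lie in $\mathcal{G}^+(t)$ for all $t>t_2$, hence $|\hat\theta_{current}^+(t)-b_k|\leq 2\varepsilon$ forever, and since $\varepsilon$ can be chosen arbitrarily small the paper declares convergence ``automatically.'' Your analysis --- isolating the unique toggling sensor $i^*$, reducing the averaged recursion to the two candidate goals $\hat\theta_{goal}^+[\ell]$, and splitting on whether $\hat h_{i^*}^+(t)$ eventually freezes or flips infinitely often --- is essentially the content the paper defers to Lemma~\ref{lem:case_2_con}, where the limiting value in Case 2 is actually identified; you are therefore proving more than this lemma demands. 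What your route buys is a genuine convergence statement rather than mere confinement to a fixed $2\varepsilon$-band (the paper's ``$\varepsilon$ arbitrarily small'' step is the weakest link of its own argument). What it costs is the second sub-case, where one step is imprecise as stated: a monotone excursion between consecutive flips is a \emph{sum} of $O(\alpha(t))$ increments over a possibly long stretch, and since $\sum_t\alpha(t)=\infty$ such a sum need not be small. The correct closing move is one you already have the ingredients for: the excursion is monotone (by the identity behind \eqref{equ:alpha_t_1-_alpha_t}) and both of its endpoints are pinned to $b_k$ by Lemmas~\ref{lem:y_avg_v_avg_0} and~\ref{lem:x_y_avg_v_avg_0}, so the entire stretch is trapped between two quantities converging to $b_k$; no per-step bound is needed. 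With that repair your argument is complete and strictly more rigorous than the paper's.
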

\begin{proof}
If there is a time $\tilde{t}_2$, $\tilde{t}_2 > t_2$, such that the moving status of $\max\{\hat \theta_{current}^+(t),0\}$ changes to Case 1, we have the desired result. Otherwise, for all $t>t_2$, we have that the moving status of $\max\{\hat \theta_{current}^+(t),0\}$ stays in Case 2. Since $\max\{\hat \theta_{current}^+(t),0\} \in \mathcal{G}^+(t)$, $b_k \in \mathcal{G}^+(t)$, and $|\mathcal{G}^+(t)| = 2\varepsilon$, we have $|\hat \theta_{current}^+(t) - b_k| \leq 2 \varepsilon$. Together with the fact that $\varepsilon$ can be arbitrarily small, we conclude with convergence automatically.
\end{proof}

\begin{lem}\label{lem:swithcing_case_1_case_2}
For the moving status of $\max\{\hat \theta_{current}^+(t),0\}$, the number of switching times between Case 1 and Case 2 is finite.
\end{lem}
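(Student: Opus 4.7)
My plan is a proof by contradiction that exploits the deterministic character of the Case~1 dynamics. Suppose switching occurs infinitely often; then there are infinitely many Case~1 episodes, say in intervals $\mathcal{I}_{m_1},\mathcal{I}_{m_2},\dots$. By Lemma~\ref{lem:motion_case_1} the iterate moves monotonically within each $\mathcal{I}_{m_k}$ toward its goal $g_{m_k}:=\hat\theta^+_{goal}[m_k]$; since the episode is assumed to terminate (rather than converge inside $\mathcal{I}_{m_k}$), $g_{m_k}$ lies outside $\mathcal{I}_{m_k}$, and the side on which the iterate exits is uniquely determined by whether $g_{m_k}>\sup\mathcal{I}_{m_k}$ or $g_{m_k}<\inf\mathcal{I}_{m_k}$. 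Combined with the smooth-moving condition, this makes $(m_k)$ a deterministic $\pm 1$-step walk on the finite index set $\{1,\dots,M+1\}$.

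Any deterministic $\pm 1$-step orbit on a finite set of integers must eventually be periodic, and because each integer $j$ has a unique outgoing direction, every cycle visits each of its nodes only once. A length-$p$ cycle with $\pm 1$ steps summing to zero forces $p$ to be even, and elementary casework shows $p=2$ is the only possibility (any length-$4$ candidate forces a repeated node and hence contradicts determinism). Therefore $(m_k)$ must, from some point on, oscillate between two adjacent intervals $\mathcal{I}_{j^*},\mathcal{I}_{j^*+1}$ sharing a boundary $b_{j^*}$, and the exit-direction constraints force the \emph{attractive configuration} $g_{j^*}>b_{j^*}>g_{j^*+1}$.

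The final step is to show that this attractive configuration actually traps the iterate in Case~2 at $b_{j^*}$, contradicting the continued switching. Write $z(t):=\hat\theta^+_{current}(t)-b_{j^*}$. The fast consensus ($\beta/\alpha\to\infty$) combined with Lemma~\ref{lem:u_i-y_avg_0} forces $\hat h_i^+(t)$ to depend only on $\operatorname{sgn}(z(t))$, being equal to $\hat h_i^+[j^*]$ when $z(t)<0$ and to $\hat h_i^+[j^*+1]$ when $z(t)>0$. Averaging the recursions for $\bar u^+,\bar v^+$ then yields, up to lower-order terms,
\[
z(t+1)=\bigl(1-\alpha(t+1)\bigr)z(t)+\alpha(t+1)\bigl(g^{(t+1)}-b_{j^*}\bigr),
\]
where $g^{(t+1)}=g_{j^*}$ when $z(t)<0$ and $g^{(t+1)}=g_{j^*+1}$ when $z(t)>0$. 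Since $g_{j^*}-b_{j^*}>0$ and $g_{j^*+1}-b_{j^*}<0$, the driving term always has sign opposite to $z(t)$, so $|z(t)|$ contracts strictly whenever $z(t)$ does not change sign. Choosing $T$ so that $\alpha(t)\max\{|g_{j^*}-b_{j^*}|,|g_{j^*+1}-b_{j^*}|\}<\varepsilon/2$ for every $t\ge T$ and taking any Case~2 entry time $t_0>T$ at $b_{j^*}$, an induction that bounds the possible overshoot at sign flips of $z$ and the $o(\alpha)$ residuals coming from $\delta>0$ and Lemma~\ref{lem:y_avg_v_avg_0} yields $|z(t)|<\varepsilon$ for every $t\ge t_0$; hence the iterate never leaves Case~2 after $t_0$, the desired contradiction. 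The technical crux is precisely this last induction---controlling the overshoot uniformly in $t$---which becomes manageable only because the smallness of $\alpha(t)$ for $t\ge T$ is quantitatively tied to $\varepsilon$ and to the (finite) magnitudes of the two drifts $g_{j^*}-b_{j^*}$ and $b_{j^*}-g_{j^*+1}$.
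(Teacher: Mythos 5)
Your strategy is genuinely different from the paper's. The paper argues directly rather than by contradiction: it first shows that the direction of motion of $\hat\theta^+_{current}(t)$ is preserved across every visit to Case 2, by writing the post-crossing goal as a convex combination of the pre-crossing goal and the observation $y_j$ whose boundary is being crossed, and invoking the smooth-moving condition (which gives $y_j>b_k+3\varepsilon$ while $\hat\theta^+_{current}(\grave t)\le b_k+2\varepsilon$). Global monotonicity of the Case-1 subsequence then makes every Case-2 visit occur at a distinct boundary, and finiteness follows from the finiteness of the $b_k$'s. You instead localize all of the difficulty into the combinatorics of a deterministic $\pm1$ walk plus a dynamical trapping argument at the unique admissible $2$-cycle. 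The idea is viable, but the paper's algebraic identity on consecutive goals lets it avoid analyzing the dynamics near a boundary altogether, which is exactly the part of your argument that is only sketched.

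Two concrete gaps remain in that sketch. First, the recursion $z(t+1)=(1-\alpha(t+1))z(t)+\alpha(t+1)(g^{(t+1)}-b_{j^*})$ with the drift selected by $\operatorname{sgn}(z(t))$ is not what the algorithm produces. The quantity $\hat\theta^+_{current}(t)=\bar u^+(t)/(\bar v^+(t)+\delta)$ is a ratio of two affine recursions, and, more importantly, the only detection that can flip at $b_{j^*}$ is that of the single node $j$ whose decision-region endpoint equals $b_{j^*}$, made from its \emph{local} estimate $\hat\theta_j^+(t)$ in step 2. That local estimate can sit on the opposite side of $b_{j^*}$ from the network average by up to $\varepsilon$, so your induction must tolerate an $O(\varepsilon)$ dead zone in which the drift points the ``wrong'' way, not merely an $O(\alpha)$ overshoot; this is repairable (Case 2 tolerates $|z|\le 2\varepsilon$), but it is not the argument you wrote. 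Second, the reduction to a $\pm1$-step walk silently assumes that an iterate which exits Case 2 at $b_k$ always lands in the interval on the far side. Nothing you state rules out a ``step $0$,'' in which it falls back into the interval it came from, is pulled toward the same goal, and re-enters Case 2 at $b_k$, producing infinitely many switches at a single boundary with no $2$-cycle of distinct intervals. Excluding that scenario requires essentially the direction-preservation fact the paper proves via the convex-combination identity, at which point the detour through cycle combinatorics no longer buys anything.
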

\begin{proof}
 First, we prove that after coming back to Case 1 from Case 2, the monotonicity of $\hat \theta_{current}^+(t)$ stays the same as the one in the previous Case 1 (i.e., Case 1 before going into Case 2). Then, we prove that the sequence of $\{\hat \theta_{current}^+(t_s)\}$, which is a subsequence of $\{\hat \theta_{current}^+(t)\}$, is also monotonic, where we only consider $\{t_s\}$ at which the moving status is in Case 1. Together with the fact that the number of $b_k$'s is finite, lastly we conclude that the number of switching times between Case 1 and Case 2 is finite.

 By following the above logic flow, we first prove that after coming back to Case 1 from Case 2, the monotonicity of $\hat \theta_{current}^+(t)$ stays the same as the one in the previous Case 1 (i.e., Case 1 before going into Case 2). To this end, since we have proven that $\hat \theta_{current}^+(t)$ changes monotonically in Case 1 by Lemma \ref{lem:motion_case_1}, it is sufficient to show that
\begin{eqnarray}
(\hat \theta_{goal}^+(\acute{t}) - \hat \theta_{current}^+(\acute{t}))(\hat \theta_{goal}^+(\grave{t}) - \hat \theta_{current}^+(\grave{t}) \geq 0, \label{equ:keep_moving_2}
\end{eqnarray}
where $\acute{t}$ is the time before going into Case 2 and $\grave{t}$ is the time after coming out from Case 2. Assume that $b_k$ under concern is one of the boundaries of node $j$, i.e., $b_k \in \{r_j^-, r_j^+\}$. Without loss of generality, we assume that $b_k$ is $r_j^-$ and comes into the gathering region from the right side. Therefore, we have $\hat \theta_{goal}^+(\acute{t}) = \hat \theta_{goal}^+[k]$, $\hat \theta_{goal}^+(\grave{t}) = \hat \theta_{goal}^+[k+1]$, and $\hat \theta_{goal}^+(\acute{t}) > \hat \theta_{current}^+(\acute{t})$. In order to prove (\ref{equ:keep_moving_2}), we only need to show that $\hat \theta_{goal}^+[k+1] > \hat \theta_{current}^+(\grave{t})$. For $\hat \theta_{goal}^+ (\grave{t}-1)$, there are two possible values, i.e., $\hat \theta_{goal}^+[k]$ and $\hat \theta_{goal}^+[k+1]$. Specifically, if $\hat \theta_j (\grave{t})$ is on the right of $b_k$, we have $\hat \theta_{goal}^+(\grave{t}-1) = \hat \theta_{goal}^+[k+1]$; otherwise, we have $\hat \theta_{goal}^+(\grave{t}-1) = \hat \theta_{goal}^+[k]$. Next, we prove $\hat \theta_{goal}^+[k+1] > \hat \theta_{current}^+(\grave{t})$ for both cases.

\begin{flushleft}
1) When $\hat \theta_{goal}^+(\grave{t}-1) = \hat \theta_{goal}^+[k+1]$: By a similar derivation of (\ref{equ:theta_mono}), we have
\begin{align}
&\big(\hat \theta_{goal}^+(\grave{t}-1) - \hat \theta_{current}^+(\grave{t})\big) \nonumber\\
&\big(\hat \theta_{current}^+(\grave{t}) - \hat \theta_{current}^+(\grave{t}-1)\big) > 0. \label{equ:theta_grave_t}
\end{align}
Since the second term on the left-hand side of (\ref{equ:theta_grave_t}) is positive by assumption, we have the first term on the left-hand side of (\ref{equ:theta_grave_t}) is also positive. Thus, we have $\hat \theta_{goal}^+[k+1] > \hat \theta_{current}^+(\grave{t})$ as desired.
\end{flushleft}

\begin{flushleft}
2) When $\hat \theta_{goal}^+(\grave{t}-1) = \hat \theta_{goal}^+[k]$: By definition, $\hat \theta_{goal}^+[k+1]$ can be expressed as
\end{flushleft}
\begin{align}
&\hat \theta_{goal}^+[k+1] = \frac{\sum_i \hat h_i^+[k+1] y_i}{\sum_i \hat h_i^+[k+1] + n \delta} \nonumber\\
&= \frac{\sum_i \hat h_i^+[k] y_i + y_j}{\sum_i \hat h_i^+[k] +1 + n \delta}\nonumber\\
&= \frac{\sum_i \hat h_i^+[k] + n \delta}{\sum_i \hat h_i^+[k] +1 + n \delta}\frac{\sum_i \hat h_i^+[k] y_i}{\sum_i \hat h_i^+[k] + n \delta} \nonumber\\
&~~~~+ \frac{y_j}{\sum_i \hat h_i^+[k] +1 + n \delta} \nonumber\\
&= \frac{\sum_i \hat h_i^+[k] + n \delta}{\sum_i \hat h_i^+[k] +1 + n \delta}\hat \theta_{goal}^+[k] + \frac{y_j}{\sum_i \hat h_i^+[k] +1 + n \delta},
\end{align}
where the sum of the weights on $\hat \theta_{goal}^+[k]$ and $y_j$ equals to 1, i.e., $\frac{\sum_i \hat h_i^+[k] + n \delta}{\sum_i \hat h_i^+[k] +1 + n \delta} + \frac{1}{\sum_i \hat h_i^+[k] +1 + n \delta} = 1$. In order to prove $\hat \theta_{goal}^+[k+1] > \hat \theta_{current}^+(\grave{t})$, we only need to show $\hat \theta_{goal}^+[k] > \hat \theta_{current}^+(\grave{t})$ and $y_j> \hat \theta_{current}^+(\grave{t})$. The first part can be proved by the result in (\ref{equ:theta_grave_t}) and $\hat \theta_{goal}^+(\grave{t}-1) = \hat \theta_{goal}^+[k]$. The second part is due to the smooth moving condition defined in Section~\ref{subsec:smooth_moving_condition}, which implies that $y_j > b_k + 3\varepsilon$ and $\hat \theta_{current}^+(\grave{t}) \leq b_k+2\varepsilon$.

Then, we prove that the sequence of $\{\hat \theta_{current}^+(t_s)\}$, which is a subsequence of $\{\hat \theta_{current}^+(t)\}$, is also monotonic, where we only consider $\{t_s\}$ at which the moving status is in Case 1. Specifically, we need to prove
\begin{eqnarray}
(\hat \theta_{goal}^+(\acute{t}) - \hat \theta_{current}^+(\acute{t}))(\hat \theta_{current}^+(\grave{t}) - \hat \theta_{current}^+(\acute{t}) \geq 0. \label{equ:connection}
\end{eqnarray}

Assume that $b_k$ is the boundary under concern in this visit of Case 2. Without loss of generality, we assume that $b_k$ comes into the gathering region from the right side. Therefore, we have $\hat \theta_{goal}^+(\acute{t}) = \hat \theta_{goal}^+[k]$, $\hat \theta_{goal}^+(\grave{t}) = \hat \theta_{goal}^+[k+1]$, and $\hat \theta_{goal}^+(\acute{t}) > \hat \theta_{current}^+(\acute{t})$. In order to prove (\ref{equ:connection}), we only need to show that $\hat \theta_{current}^+(\grave{t}) > \hat \theta_{current}^+(\acute{t})$. Since the moving status of $\hat \theta_{current}^+(\acute{t}+1)$ and $\hat \theta_{current}^+(\grave{t}-1)$ is in Case 2, we have that both $\hat \theta_{current}^+(\acute{t}+1)$ and $\hat \theta_{current}^+(\grave{t}-1)$ are in the region of $[b_k - 2 \varepsilon, b_k + 2 \varepsilon]$. Together with the assumption that $b_k$ comes into the gathering region from the right side, we have that
$\hat \theta_{current}^+(\acute{t}) \leq b_k - 2\varepsilon$ and $\hat \theta_{current}^+(\grave{t}) \geq b_k + 2 \varepsilon$. Hence, we have $\hat \theta_{current}^+(\grave{t}) > \hat \theta_{current}^+(\acute{t})$ as desired.

So far, we have proved that the overall monotonicity of $\hat \theta_{current}^+(t)$ stays the same as when we only consider the iteration in Case 1, which means that the $b_j$'s for each visit of Case 2 are different. Together with the fact that the number of $b_j$'s is finite, we have that the number of switching from Case 1 to Case 2 is finite as desired.
\end{proof}

\begin{lem}\label{lem:case_1_con}
If the moving status of $\max\{\hat \theta_{current} (t),0\}$ is in Case 1 when $t>t_1$ and $\max\{\hat \theta_{current} (t),0\}$ converges without leaving Case 1 for all $t>t_1$, the limiting value is given by
\begin{eqnarray}
\max\left\{\frac{\sum_{i=1}^{n} \hat h_i^+[j] y_i} {\sum_{i=1}^{n} \hat h_i^+[j] + \delta n},0\right\},
\end{eqnarray}
where $j$ is the index of $\mathcal{I}_j$, $\mathcal{G}^+(t_1) \subseteq \mathcal{I}_j$.
\end{lem}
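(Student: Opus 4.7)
The plan is to exploit the key observation already extracted in the proof of Lemma \ref{lem:motion_case_1}: under the hypothesis that the moving status remains in Case 1 for every $t>t_1$, the gathering region $\mathcal{G}^+(t)$ stays inside the single interval $\mathcal{I}_j$, so the local detection outcomes are frozen, i.e.\ $\hat h_i^+(t)=\hat h_i^+[j]$ for every $i$ and every $t>t_1$. Consequently the update equations for $\bar u^+(t)$ and $\bar v^+(t)$ collapse to linear recursions driven by constant inputs, and the only remaining task is to identify their limits and then pass to the ratio.

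Concretely, averaging \eqref{equ:u_i_t_+} and \eqref{equ:v_i_t_+} over $i$ and using the constancy of the detections for $t>t_1$ yields
\begin{align}
\bar u^+(t) &= (1-\alpha(t))\,\bar u^+(t-1) + \alpha(t)\,\{y_i \hat h_i^+[j]\}_{avg},\nonumber\\
\bar v^+(t) &= (1-\alpha(t))\,\bar v^+(t-1) + \alpha(t)\,\{\hat h_i^+[j]\}_{avg}.\nonumber
\end{align}
Writing $e_u(t)=\bar u^+(t)-\{y_i \hat h_i^+[j]\}_{avg}$, a direct substitution gives $e_u(t)=(1-\alpha(t))\,e_u(t-1)$, so that
\[
|e_u(t)| \le \prod_{s=t_1+1}^t (1-\alpha(s))\,|e_u(t_1)| \le \exp\!\Bigl(-\!\!\sum_{s=t_1+1}^t \alpha(s)\Bigr)\,|e_u(t_1)|.
\]
Since $\sum_t \alpha(t)=\infty$ by assumption, the right-hand side tends to zero, and therefore $\bar u^+(t)\to \{y_i \hat h_i^+[j]\}_{avg}$. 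An identical argument gives $\bar v^+(t)\to \{\hat h_i^+[j]\}_{avg}$.

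Because $\bar v^+(t)+\delta$ is bounded below by $\delta>0$, the ratio converges:
\[
\frac{\bar u^+(t)}{\bar v^+(t)+\delta}\ \longrightarrow\ \frac{\{y_i \hat h_i^+[j]\}_{avg}}{\{\hat h_i^+[j]\}_{avg}+\delta}\ =\ \frac{\sum_{i=1}^n \hat h_i^+[j]\,y_i}{\sum_{i=1}^n \hat h_i^+[j] + n\delta}.
\]
Applying the continuous map $x\mapsto\max\{x,0\}$ on both sides delivers the announced limit of $\max\{\hat\theta_{current}^+(t),0\}$, and Lemma \ref{lem:x_y_avg_v_avg_0} promotes this into convergence of every $\hat\theta_i^+(t)$ to the same value. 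I do not foresee any substantive obstacle here: the nontrivial algorithmic content (invariance of the detection outcomes inside $\mathcal{I}_j$, boundedness of $\bar u^+$ and $\bar v^+$, and the asymptotic collapse of the node-local ratios $u_i^+/(v_i^++\delta)$ to the averaged ratio) has already been established in Lemmas \ref{lem:y_avg_upper_bounded}, \ref{lem:u_i-y_avg_0}, \ref{lem:x_y_avg_v_avg_0}, and \ref{lem:motion_case_1}, and what remains is the short calculation above.
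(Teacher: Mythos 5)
Your proposal is correct and follows essentially the same route as the paper's proof: freeze the detections at $\hat h_i^+[j]$ for $t>t_1$, reduce the averaged updates to the scalar recursion $\bar u^+(t)-K=(1-\alpha(t))(\bar u^+(t-1)-K)$, and kill the error via $\prod(1-\alpha(s))\le\exp(-\sum\alpha(s))\to 0$. Your write-up is in fact slightly more explicit than the paper's in justifying the passage to the ratio and the transfer to the node-local estimates via Lemma~\ref{lem:x_y_avg_v_avg_0}, but there is no substantive difference in approach.
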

\begin{proof}
Since $\hat \theta_{current}^+ (t) = \frac{\bar u^+(t)}{\bar v^+(t)+\delta}$, we only need to show that $\lim_{t \rightarrow \infty} \bar u^+(t) = \{\hat h_i^+[j] y_i\}_{avg}$ and $\lim_{t \rightarrow \infty} \bar v^+(t) = \{\hat h_i^+[j]\}_{avg}$. Here we only prove the part of $\bar u^+(t)$, while the proof for the part of $\bar v^+(t)$ is similar. By (\ref{equ:bar_u_t+1}), we have
\begin{align}
&\bar u^{+}(t+1)=\nonumber\\
&\big(1-\alpha(t+1)\big) \bar u^{+}(t) + \alpha(t+1) \{y_i \hat h_i^{+}(t+1)\}_{avg}. \label{equ:bar_u_+_t+1}
\end{align}

\begin{flushleft}
Since the moving status of $\hat \theta_{current}^+ (t)$ converges in Case 1 for all $t>t_1$, we have $\hat h_i^+ (t) = \hat h_i^+ [j]$, $\forall t>t_1$, with a similar derivation as (\ref{equ:x_j_detection_replacement}). Thus, $\{\hat h_i^+(t) y_i\}_{avg}$ is a deterministic value when $t>t_1$ and equals $\{\hat h_i^+[j] y_i\}_{avg}$, $\forall t>t_1$.
\end{flushleft}
Thus, we rewrite (\ref{equ:bar_u_+_t+1}) as
\begin{eqnarray}
\bar u^+(t+1)- K = [1-\alpha(t+1)] [\bar u^+(t) - K], ~t > t_1,
\end{eqnarray}
where $K=\{\hat h_i^+[j] y_i\}_{avg}$. The limiting value of $\bar u^+(t) - K$ can be expressed as
\begin{eqnarray}
\lim_{t \rightarrow \infty} |\bar u^+(t) - K|&= \left|\prod_{t = t_1}^{\infty} (1 - \alpha(t)) [\bar u^+(t_1)-K]\right| \nonumber\\
&\leq \exp^{-\sum_{t = t_1}^{\infty} \alpha(t)} |\bar u^+(t_1)-K|.\label{equ:lim_u_t-K}
\end{eqnarray}
Since $\sum_{t} \alpha(t) = \infty$ and $\alpha(t) \in (0,1)$, we have that the right-hand side of (\ref{equ:lim_u_t-K}) equals 0. Then we conclude that $u(t)$ converges to $K$.
\end{proof}

\begin{lem}\label{lem:case_2_con}
If the moving status of $\max\{\hat \theta_{current} (t),0\}$ is in Case 2 and $\max\{\hat \theta_{current} (t),0\}$ converges without leaving Case 2 for all $t>t_2$ (according to the definition of Case 2, for certain node $c$, $h_c^+$ changes in Case 2), the limiting value is given by either
\begin{eqnarray}\label{equ:case21}
 \max\left\{\frac{\sum_{i=1}^{n} \hat h_i^+[j] y_i} {\sum_{i=1}^{n} \hat h_i^+[j] + \delta n},0\right\},
\end{eqnarray}
when $h_c^+[j]=1$ and $h_c^+[j+1]=0$,
or
\begin{eqnarray}
\max\left\{\frac{\sum_{i=1}^{n} \hat h_i^+[j+1] y_i} {\sum_{i=1}^{n} \hat h_i^+[j+1] + \delta n},0\right\},
\end{eqnarray}
when $h_c^+[j]=0$ and $h_c^+[j+1]=1$,
where $j$ is the index of $\mathcal{I}_j$, $\mathcal{G}^+(t_2) \subseteq \mathcal{I}_j$.
\end{lem}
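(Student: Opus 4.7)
The plan is to reduce the behavior of the iteration inside Case~2 to an essentially one-node problem and then identify the limit by a self-consistency argument on $\bar u^+(t)/(\bar v^+(t)+\delta)$.

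First I would exploit the smooth moving condition to pin down the detection structure during Case~2. The unique boundary $b_k\in\mathcal{G}^+(t)$ belongs to the pair $\{r_c^-,r_c^+\}$ of a single node $c$; for every $i\ne c$ the inclusion relation between $\mathcal{D}_i$ and $\mathcal{G}^+(t)$ is unchanged throughout Case~2, so $\hat h_i^+(t)$ is frozen and equals the common $i$-th coordinate of $\hat h^+[j]$ and $\hat h^+[j+1]$. Only the $c$-th coordinate may flip, and $\hat h_c^+(t)=1$ precisely on the side of $b_k$ that lies inside $\mathcal{D}_c$. Case~A in the lemma therefore corresponds to $b_k=r_c^+$ (so $\mathcal{D}_c$ is on the left and $h_c^+[j]=1$), while Case~B corresponds to $b_k=r_c^-$ (so $\mathcal{D}_c$ is on the right and $h_c^+[j+1]=1$).

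Next I would use the averaged recursions
\[
\bar u^+(t+1)=(1-\alpha(t+1))\bar u^+(t)+\alpha(t+1)\,g_u(t+1),
\]
\[
\bar v^+(t+1)=(1-\alpha(t+1))\bar v^+(t)+\alpha(t+1)\,g_v(t+1),
\]
where $g_u(t)=\{y_i\hat h_i^+(t)\}_{avg}$ and $g_v(t)=\{\hat h_i^+(t)\}_{avg}$ each take only one of two values, dictated entirely by $\hat h_c^+(t)$. By Lemma~\ref{lem:x_y_avg_v_avg_0} each local $\hat\theta_i^+(t)$ is asymptotically close to $\hat\theta_{current}^+(t)$, so $\hat h_c^+(t)$ is eventually driven by the sign of $\hat\theta_{current}^+(t)-b_k$. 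Because the hypothesis forces $\hat\theta_{current}^+(t)$ to converge while remaining in Case~2, this sign must stabilize; a careful check shows that the stabilized side is the one on which $\hat h_c^+=1$, since only that side's goal $\hat\theta_{goal}^+[\cdot]$ can sit arbitrarily close to $b_k$ without driving the trajectory back into Case~1. The opposite side would discard the informative observation $y_c$ and, under the assumption of remaining in Case~2 forever, would pull the sequence away from $b_k$, a contradiction. This selects $\hat h^+[j]$ in Case~A and $\hat h^+[j+1]$ in Case~B.

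With the driving terms eventually constant (equal to some $g_u^{\infty}$ and $g_v^{\infty}$ corresponding to the selected detection vector), the linear time-varying recursion admits the standard product-sum representation already exploited in Lemma~\ref{lem:y_avg_upper_bounded}, and the divergence $\sum_s\alpha(s)=\infty$ forces the homogeneous part to vanish while the affine weights sum to one in the tail, yielding $\bar u^+(t)\to g_u^{\infty}$ and $\bar v^+(t)\to g_v^{\infty}$. Taking the ratio and applying the $\max\{\cdot,0\}$ from Step~4 gives exactly the expression claimed by the lemma. The main obstacle is the middle step: rigorously excluding persistent toggling of $\hat h_c^+(t)$ and showing that the stabilized side is the one compatible with $\hat h_c^+=1$. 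This requires a careful pairing of the slow decay of $\alpha(t)$ with a geometric argument about the location of the two candidate goals relative to $b_k$, in the spirit of the monotonicity analysis used in Lemma~\ref{lem:motion_case_1} but complicated by the fact that the driving term itself switches as $\hat\theta_{current}^+(t)$ oscillates across $b_k$.
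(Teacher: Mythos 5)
Your overall skeleton matches the paper's: you isolate the single node $c$ whose detection can flip inside $\mathcal{G}^{+}(t)$, observe that the driving terms $\{y_i\hat h_i^{+}(t)\}_{avg}$ and $\{\hat h_i^{+}(t)\}_{avg}$ therefore take only two values, and close with the product/weighted-sum representation of the averaged recursions together with $\sum_t\alpha(t)=\infty$ to pass to the limit once the driving terms freeze. That last step is exactly the mechanism the paper uses (it writes $\lim_{t\to\infty}\hat\theta_{current}^{+}(t)$ as a normalized weighted average of $\{y_i\hat h^{+}(t)\}_{avg}$ with weights $\alpha(t)(1-\alpha(t))^{t-t_2'}$).

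The genuine gap is the step you yourself flag as "the main obstacle": ruling out persistent toggling of $\hat h_c^{+}(t)$ and identifying which of the two candidate detection vectors the iteration locks onto. Your justification --- that only the goal with $\hat h_c^{+}=1$ "can sit arbitrarily close to $b_k$" while "the opposite side would discard the informative observation $y_c$ and would pull the sequence away from $b_k$" --- is asserted, not proved. There is no a priori reason the goal computed with $\hat h_c^{+}=0$ cannot also lie within $2\varepsilon$ of $b_k$; nothing in the geometry of $b_k\in\{r_c^{-},r_c^{+}\}$ forces the excluded-$y_c$ average away from the boundary. Since this selection is precisely the content of the lemma's two alternatives, the proposal does not yet prove the statement. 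The paper's route through this bottleneck is different and worth noting: it first observes that because the trajectory is confined to $[b_k-2\varepsilon,b_k+2\varepsilon]$ with $\varepsilon$ arbitrarily small, convergence is automatic and only the limiting value needs identification; it then takes the eventual constancy of $\hat\theta_{goal}^{+}(t)$ at one of the two candidates as given and shows by contradiction, via the weighted-average representation, that the stabilized candidate must itself lie in $[b_k-2\varepsilon,b_k+2\varepsilon]$ (otherwise the weighted average would exceed $b_k+2\varepsilon$, contradicting confinement). If you want to complete your argument along your own lines, you need either that confinement-based contradiction or an actual proof that the toggling of $\hat h_c^{+}(t)$ terminates on the $\hat h_c^{+}=1$ side; neither currently appears in the proposal.
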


\begin{proof}
Since the region of $\hat \theta_{current}^+ (t_2)$ in Case 2 is $[b_k-2\varepsilon,b_k+2\varepsilon]$, $b_k \in \mathcal{G}^+(t_2)$, $\hat \theta_{current}^+ (t)$ automatically converges if the moving status never changes to Case 1 for all $t>t_2$, as $\varepsilon$ could be arbitrarily small. Thus, we only need to derive the limiting value in this proof.

There are only two possible limiting values implied by Lemma~\ref{lem:case_1_con}, i.e., $\hat \theta_{goal}^+[j]$ and $\hat \theta_{goal}^+[j+1]$. Without loss of generality, we assume that the limiting value is $\hat \theta_{goal}^+[j]$ given by \eqref{equ:case21}, i.e., $\hat \theta_{goal}^+(t)=\hat \theta_{goal}^+[j]$, $\forall t>t'_2$, where $t'_2$ is a certain value greater than $t_2$. If $\hat \theta_{goal}^+[j]$ stays in $[b_k-2\varepsilon, b_k+2\varepsilon]$, we come to the desired result.

Next, we prove that $\hat \theta_{goal}^+[j]$ falls in $[b_k-2\varepsilon, b_k+2\varepsilon]$ by contradiction. Without loss of generality, we assume $\hat \theta_{goal}^+[j] > b_k+2\varepsilon$, and $\hat \theta_{current}^+(t)$ moves into $[b_k-2\varepsilon,b_k+2\varepsilon]$ from the left. By incorporating (\ref{equ:bar_u_t+1}) into the definition of $\hat \theta_{current}^+ (t)$, we have
{\small\begin{align}
&\hat \theta_{current}^+ (t'_2+1) =\frac{\bar u^+(t'_2+1)}{\bar v^+(t'_2+1)+\delta}=\nonumber\\
&\frac{(1-\alpha(t'_2+1)) \bar u^+(t'_2) + \alpha(t'_2+1) \{y_i \hat h^+(t'_2+1)\}_{avg}}{(1-\alpha(t'_2+1)[\bar v^+(t'_2) + \delta] + \alpha(t'_2+1)[\{ \hat h^+(t'_2+1)\}_{avg} + \delta]}.\nonumber
\end{align}}
Thus, the limiting value of $\hat \theta_{current}^+ (t)$ can be expressed as
\begin{align}
&\lim_{t \rightarrow \infty} \hat \theta_{current}^+ (t) \nonumber\\
& = \frac{\sum_{t = t'_2}^{\infty} \alpha(t)(1-\alpha(t))^{t-t'_2} \{y_i \hat h^+(t)\}_{avg}}{\sum_{t= t'_2}^{\infty} \alpha(t)(1-\alpha(t))^{t-t'_2} [\{\hat h^+(t)\}_{avg} + \delta]}\nonumber\\
&>  \frac{\sum_{t = t'_2}^{\infty} \alpha(t)(1-\alpha(t))^{t-t'_2} [\{\hat h^+(t)\}_{avg} + \delta](b_k +2\varepsilon)}{\sum_{t= t'_2}^{\infty} \alpha(t)(1-\alpha(t))^{t-t'_2} [\{\hat h^+(t)\}_{avg} + \delta]}\nonumber\\
&=b_k +2\varepsilon,
\end{align}
which is in contradiction to the fact that $\hat \theta_{current}^+(t)$ is in $[b_k - 2\varepsilon, b_k + 2\varepsilon]$ for all $t>t_2$. Thus, we conclude that $\hat \theta_{goal}^+[j]$ stays in $[b_k-2\varepsilon, b_k+2\varepsilon]$.
\end{proof}

\end{document}